\newtheorem{definition}{Definition}
\newtheorem{theorem}{Theorem}
\newtheorem{lemma}{Lemma}
\newtheorem{corollary}{Corollary}
\newtheorem{remark}{Remark}
\newcommand{\parties}{\mathcal{P}} 
\newcommand{\preferences}{\pi}
\newcommand{\sender}{S}
\DeclarePairedDelimiter\abs{\big\lvert}{\big\rvert}
\newcommand{\msgId}{\textsf{id}} 
\newcommand{\msg}{\textsf{m}} 
\newcommand{\timestamp}{\tau}
\newcommand{\ba}{\textsf{BA}} 
\newcommand{\bb}{\textsf{BB}}
\newcommand{\king}{\textsf{King}}
\newcommand{\inputt}{\textsf{in}}
\newcommand{\val}{\textsf{value}}
\newcommand{\propose}{\textsf{propose}}
\newcommand{\byzantineSM}{\textsf{bSM}}
\newcommand{\simplifiedSM}{\textsf{sSM}}
\newcommand{\GaleShapley}{\mathcal{A}_\textsf{G-S}}
\title{Byzantine Stable Matching}
\author[1]{Andrei Constantinescu}
\author[1]{Marc Dufay}
\author[2]{Diana Ghinea}
\author[1]{Roger Wattenhofer}
\affil[1]{ETH Z{\"u}rich\\ \{aconstantine, mdufay, wattenhofer\}@ethz.ch}
\affil[2]{Lucerne University of Applied Sciences and Arts\\ diana.ghinea@hslu.ch}
\date{\vspace{-5ex}}
\begin{document}

\maketitle



\vspace{1cm}
\begin{abstract}
\normalsize
In stable matching, one must find a matching between two sets of agents, commonly men and women, or job applicants and job positions. Each agent has a preference ordering over who they want to be matched with. Moreover a matching is said to be stable if no pair of agents prefer each other over their current matching. 
    
We consider solving stable matching in a distributed synchronous setting, where each agent is its own process. Moreover, we assume up to $t_L$ agents on one side and $t_R$ on the other side can be byzantine. After properly defining the stable matching problem in this setting, we study its solvability.

When there are as many agents on each side with fully-ordered preference lists, we give necessary and sufficient conditions for stable matching to be solvable in the synchronous setting. These conditions depend on the communication model used, i.e., if parties on the same side are allowed to communicate directly, and on the presence of a cryptographic setup, i.e., digital signatures.
    
\end{abstract}

\thispagestyle{empty}

\newpage
\tableofcontents

\thispagestyle{empty} 

\newpage
\pagenumbering{arabic}

\section{Introduction}

The \emph{stable matching} problem (also known as \emph{stable marriage}), first introduced by Gale and Shapley \cite{GayleShapley}, has long been a cornerstone of combinatorial optimization and market design. 

It involves finding a stable pairing between two distinct sets of agents --- such as job seekers and job positions or students and universities --- where each participant ranks the opposite set based on individual preferences. The stability criterion dictates that there is no \emph{blocking pair}, i.e., no two unmatched agents should prefer each other over their assigned partners. This foundational task has multiple practical applications in resource assignment and subsequently led to Shapley and Roth winning the Nobel Memorial Prize in Economics in 2012 for their work on \textit{the theory of stable allocations and the practice of market design} \cite{ShapleyRoth}.


In their seminal work \cite{GayleShapley}, Gale and Shapley proved that when the $n$ agents are divided equally into the two sides, and each individual provides a complete preference ranking of the opposite set, a stable matching always exists. Moreover, they provided an algorithm finding such a matching with complexity $O(n^2)$. Further versions of this problem have been considered in \cite{GusfieldIrving}, including variants where the individuals only provide \emph{partial} preferences, or if ties are allowed within the preference rankings. The work of \cite{GusfieldIrving} has shown that a stable matching always exists even in such scenarios, although some individuals may not be matched. 

 The stable matching problem naturally extends to distributed settings, where each agent operates as an independent process or party. Through communication, agents determine their matches while ensuring stability --- a \emph{local} property. 
Furthermore, the Gale-Shapley algorithm inherently functions as a distributed algorithm, as it consists solely of marriage proposals and divorce declaration, both of which can be processed in parallel.



The distributed variant has been studied in various practical scenarios. For instance, the work of Maggs and Sitaraman \cite{MaSi15} explores stable matching in content delivery networks, where stable matching is used for global load balancing by mapping client groups to server clusters. Moreover, stable matching has been leveraged in wireless networks: \cite{BaLoLi11} employs this problem to pair primary and secondary users in a radio network, \cite{ElAhDa12} 
relies on stable matching to pair users and uplink carriers when performing channel assignment, and numerous other studies have explored similar applications \cite{GuZhPa15, BaLoHa12, PaBeSa13}.

We note that, in such scenarios, it is important for the stable-matching-based mechanisms to be resilient to potential faults. Notably, the work of Maggs and Sitaraman \cite{MaSi15} regarding stable matching in content delivery networks has pointed out the potential of (crash) failures. The mitigation strategy proposed by \cite{MaSi15} relies on \emph{leader election}: although crashes do not necessarily degrade the matching obtained, this is a point of failure if the leader misbehaves. To the best of our knowledge, prior works in distributed stable matching assume that parties follow the protocol, or that there is some central trusted unit which can gather all inputs, perform the stable matching algorithm and return the result. This motivates us to investigate scenarios where no such safety exists, and parties may not only crash, but also become byzantine and hence exhibit malicious behavior. Concretely, we ask the following question:

\vspace{-0.1cm}
{
\begin{center}
\emph{Can we achieve stable matching in a network even if some of the parties are byzantine?}
\end{center}
}
\vspace{-0.6cm}

\paragraph{Our Contribution.}
We firstly define the \emph{byzantine stable matching} problem $\byzantineSM$, taking into account that byzantine parties may choose not to participate in the protocol, and preventing honest parties from matching with the same byzantine party. We then investigate the necessary and sufficient conditions for achieving $\byzantineSM$ under various synchronous network topologies, both with and without cryptographic assumptions (digital signatures), and we provide tight conditions. 
We denote the two sets by $L$ and $R$, with $\abs{L} = \abs{R} = k$, and we assume that at most $t_L$ parties in $L$ and at most $t_R$ parties in $R$ may be byzantine.
We consider fully-connected networks, \emph{bipartite} networks (where the parties can only communicate to parties on the other side), and a topology in-between, which we call a \emph{one-sided} networks: this maintains the communication channels of a bipartite network, but additionally provides the parties in side $R$ with complete communication. 
We summarize our findings below:
\begin{itemize}[nosep,leftmargin=*]
    \item When no cryptographic setup is available, $\byzantineSM$ can be solved if and only if: 
    \begin{itemize}
        \item $t_L < k / 3$ or $t_R < k  / 3$ in a fully-connected network.
        \item the following hold in a bipartite network: (i) $t_L, t_R < k/2$; (ii) $t_L < k/3$ or $t_R < k/3$.
        \item the following hold in a one-sided network: (i) $t_R < k/2$; (ii)  $t_L < k/3$ or $t_R < k/3$.
    \end{itemize}
    \item Assuming digital signatures, $\byzantineSM$ can always be solved if the network is fully connected. Otherwise, $\byzantineSM$ can be solved if and only if:
    \begin{itemize}
        \item any of the following holds in a bipartite network: (i) $t_L, t_R < k$; (ii) $t_L < k/3$ or $t_R < k/3$.
        \item $t_R < k$ or $t_L < k / 3$ in a one-sided network.
    \end{itemize}
\end{itemize}

Our settings enable us to establish our conditions' sufficiency by reducing $\byzantineSM$ to Byzantine Broadcast \cite{LSP82}, with one notable exception: this approach falls short in bipartite networks with digital signatures, where one side may be completely byzantine, leaving the honest side disconnected. For this setting, we provide a protocol that simulates a \emph{synchronous fully-connected network with omissions} for the disconnected side, allowing us to achieve $\byzantineSM$.
We also add that our impossibility arguments prove, in fact, even stronger results: even a \emph{simplified} version of $\byzantineSM$ (where parties hold a single favorite as input as opposed to a complete preference list) cannot be solved unless the stated conditions hold.


\paragraph{Related work.}
While the stable matching problem has not been previously explored in the context of byzantine behaviour, malicious strategies such as \emph{lying} were considered. Concretely, Roth \cite{Roth1982TheEO} showed that stable matching is not \emph{truthful}: there are scenarios where an individual can get a more favorable result by lying about their preferences. However, Gale and Shapley \cite{GayleShapley} proved that their algorithm is truthful for one side: an individual on the side doing the proposals can never gain by lying. The case where multiple individuals on the proposing side can collude in the Gale-Shapley algorithm \cite{HuChi06} or where the preference lists can have some mistakes \cite{MaTu18} have also been studied . We note that such adversarial models essentially consider manipulations of the \emph{preferences lists}. In contrast, byzantine fault tolerance protocols are mainly concerned with providing reasonable guarantees even when byzantine parties attempt to prevent honest parties from obtaining \emph{a solution}.


As the Gale-Shapley algorithm \cite{GayleShapley} naturally adapts to distributed settings, lower bounds regarding the number of queries between the two sides have been the topic of interest. Gonczarowski et al \cite{GONCZAROWSKI2019626} provided a lower bound of  $\Omega(n^2)$ boolean queries. However, this does not take into account communication between parties on the same side. In this case, there is still a gap between the best-known lower bound of $\Omega(n^2)$ and upper bound of $O(n^2 \log n)$. When preference lists are similar, Khanchandani and Wattenhofer \cite{Khanchandani2016DistributedSM} describe an algorithm with better complexity and provide a lower bound depending on the similarity of the lists.
Approximation algorithms have also been a topic of interest as a strategy to circumvent the lower bound of \cite{GONCZAROWSKI2019626}. Various definitions for an \emph{approximation} of a stable matching have been analyzed, considering the number of blocking pairs \cite{Ostrovsky}, the number of matches which would have to be broken \cite{GONCZAROWSKI2019626} or how blocking each pair is \cite{Kipnis}.



While some of our necessary conditions enable $\byzantineSM$ to be reduced to well-established problems such as Byzantine Broadcast and Byzantine Agreement \cite{LSP82}, we also encounter settings where $\byzantineSM$ is strictly weaker than these fundamental problems, requiring novel insights.
We also note that the term \emph{matching} has occurred in previous works regarding byzantine faults or \emph{self-stabilization} (starting from an arbitrary state, the system needs to reach a \emph{legitimate configuration} eventually). Most of these works are concerned with finding a \emph{maximal} matching \cite{HSU199277, ChHiSe02, MaMjPi07}, or a maximum matching \cite{HaKa09} in a bipartite graph as opposed to a \emph{stable} matching. A notable exception \cite{LaMa17} focuses on self-stabilizing stable matching. Self-stabilizing protocols assume that all parties will stop being faulty at some point and that no decision is final: any party can decide to \emph{unmatch} at any time. This contrasts with our work which is resilient to some parties being permanently faulty and guarantees that a final decision is reached within a bounded time.



\section{Preliminaries}

The \emph{stable matching} problem can be informally described as the task of pairing two sets of parties in such a way that 
no two unmatched parties should prefer each other over their assigned partners.
Stable matching problems have been typically framed in contexts such as matching men with women, students with universities, or producers with consumers.

\paragraph{Standard stable matching.}
We consider a set of $n = 2k$ parties $\parties$, which are divided into two disjoint sets $L$ and $R$ with $|L| = |R| = k$: $L$ can represent, for instance, the set of men/students/producers while $R$ can represent the set of women/universities/consumers. 
In the \emph{stable matching problem}, every party $u$ in $L$ (resp.~$R$) has as input a \emph{preference list} (a permutation) $\preferences_u$ over the parties in $R$ (resp.~$L$).  We say that $u$ \emph{prefers} $v$ over $w$ if $v$ appears before $w$ in $\preferences_u$. In addition, $u$ prefers any party in its preference list $\preferences_{u}$ over being alone.


The objective of this problem
is to determine a \emph{stable matching}: a matching $M$ between $L$ and $R$ such that there is no \emph{blocking} pair.
A (non-matched) pair of parties $(u, v) \in L \times R$ is \emph{blocking} if $u$ and $v$ prefer each other compared to who they are currently matched to. 
As parties always prefer being matched to being alone, a pair of two unmatched parties on opposite sides is considered blocking.
This implicitly requires the matching to be \emph{maximal} (all parties are matched). 
It has been proven that
a stable matching always exists, and it can be found using the Gale-Shapley algorithm $\GaleShapley$ \cite{GayleShapley}.
\begin{theorem}[\hspace{-1pt}{\cite{GayleShapley}}] \label{theorem:gale-shapley}
    There is a deterministic algorithm $\GaleShapley$ that takes as input the preference lists $\pi$ of all parties in $L$ and $R$ and returns a stable matching $M$.
\end{theorem}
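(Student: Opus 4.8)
The plan is to exhibit the deferred-acceptance procedure of \cite{GayleShapley} as the algorithm $\GaleShapley$ and verify its three required properties: termination, that the returned $M$ is a perfect matching, and that $M$ has no blocking pair. I would designate $L$ as the proposing side. The procedure runs in rounds: every currently unmatched party $u \in L$ proposes to the most preferred party in $\preferences_u$ that has not yet rejected it; every party $v \in R$ then, among the proposal it currently holds (if any) together with the proposals just received, tentatively keeps the single one it most prefers and rejects all others. This repeats until no unmatched party in $L$ has anyone left to propose to. Since every choice is fully dictated by the fixed preference lists, the procedure is deterministic.

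For termination, I would observe that each $u \in L$ proposes to any fixed party of $R$ at most once, as a rejection is permanent: once $v$ rejects $u$, the party $u$ never proposes to $v$ again. Hence the total number of proposals is at most $k^2$, and the algorithm halts in $O(k^2)$ steps.

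The crux of the argument is a monotonicity property of the receiving side, which I would isolate as a lemma: once a party $v \in R$ holds a proposal, it never again becomes unmatched, and the partner it holds only improves in $v$'s preference order as the rounds proceed. Granting this, perfectness of $M$ follows quickly. Suppose the algorithm halted with some $u \in L$ unmatched; then $u$ has been rejected by all $k$ parties of $R$ (its list is complete). But a rejection from any $v$ means $v$ holds, and henceforth keeps, some proposal, so all $k$ parties of $R$ are matched, necessarily to $k$ distinct partners in $L$. As $|L| = k$, this forces every party of $L$ — including $u$ — to be matched, a contradiction. Therefore $M$ matches everyone.

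Finally, for stability I would argue by contradiction: suppose $(u,v) \in L \times R$ is a blocking pair, so $u$ prefers $v$ to $M(u)$ and $v$ prefers $u$ to $M(v)$. Because $u$ proposes in decreasing order of preference and ended matched to $M(u)$, it must have proposed to the more-preferred $v$ at an earlier point and been turned down, either immediately or by later displacement. By the monotonicity lemma, $v$'s final partner $M(v)$ is at least as preferred by $v$ as $u$ was at that time, contradicting that $v$ prefers $u$ to $M(v)$. Hence no blocking pair exists and $M$ is stable. I expect the main obstacle to be stating and proving the monotonicity lemma cleanly, since both perfectness and stability hinge on it, whereas the termination and counting arguments are routine.
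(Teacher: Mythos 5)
Your proposal is correct: it is the classical deferred-acceptance argument of Gale and Shapley, which is exactly what the paper relies on — the paper states this theorem as a cited result from \cite{GayleShapley} and gives no proof of its own. The termination count, the monotonicity of the receiving side, and the derivation of perfectness and stability from it are all sound and standard.
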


\paragraph{Stable matching in a network.}
We now define the stable matching problem in a distributed setting. Here, the parties in $L \cup R$ are processors running a protocol over a network, exchanging messages via \emph{bidirectional authenticated communication channels}.
We assume that the network is \emph{synchronous}: the parties have synchronized clocks, all parties start at time $0$, and every message is delivered within a publicly known amount of time $\Delta$. This 
allows protocols to operate in rounds. As depicted in \cref{fig:communication-networks}, we will explore different network topologies, described below.

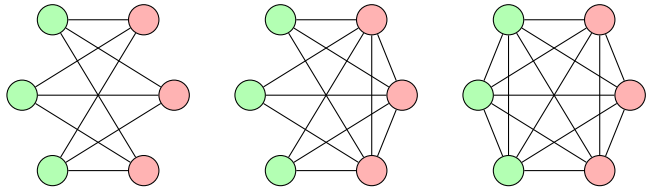
\begin{figure}[h]
    \centering
\begin{tikzpicture}[node distance=1.5cm, every node/.style={circle, draw,fill=red!30, minimum size=4mm}]
    \begin{scope}[local bounding box=graph1]
        \foreach \y in {0,2} {
            \node[fill=green!30] (L\y1) at (0.4,\y) {};
        }
        \node[fill=green!30] (L11) at (0,1) {};
        \foreach \y in {0,2} {
            \node (R\y1) at (1.6,\y) {};
        }
        \node (R11) at (2,1) {};
        \foreach \a in {0,1,2} {
            \foreach \b in {0,1,2} {
                \draw (L\a1) -- (R\b1);
            }
        }
    \end{scope}

    \begin{scope}[shift={(3,0)}, local bounding box=graph2]
        \foreach \y in {0,2} {
            \node[fill=green!30] (L\y2) at (0.4,\y) {};
        }
        \node[fill=green!30] (L12) at (0,1) {};
        \foreach \y in {0,2} {
            \node (R\y2) at (1.6,\y) {};
        }
        \node (R12) at (2,1) {};
        \foreach \a in {0,1,2} {
            \foreach \b in {0,1,2} {
                \draw (L\a2) -- (R\b2);
            }
        }
        \draw (R02) -- (R12) -- (R22) -- (R02);
    \end{scope}

    \begin{scope}[shift={(6,0)}, local bounding box=graph3]
        \foreach \y in {0,2} {
            \node[fill=green!30] (L\y3) at (0.4,\y) {};
        }
        \node[fill=green!30] (L13) at (0,1) {};
        \foreach \y in {0,2} {
            \node (R\y3) at (1.6,\y) {};
        }
        \node (R13) at (2,1) {};
        \foreach \a in {0,1,2} {
            \foreach \b in {0,1,2} {
                \draw (L\a3) -- (R\b3);
            }
        }
        \draw (L03) -- (L13) -- (L23) -- (L03);
        \draw (R03) -- (R13) -- (R23) -- (R03);
    \end{scope}
\end{tikzpicture}
\setlength{\belowcaptionskip}{-10pt}
\caption{The different kinds of communication networks we consider. From left to right: bipartite, one-sided and fully-connected networks. Note that even when communication is possible within $L$ or $R$, the matching is still between parties on opposite sides (not within $L$ or $R$).} 
\label{fig:communication-networks}
\end{figure}

\noindent \emph{(Fully-connected network)} Parties are pairwise connected. This model is relevant in scenarios such as forming partnerships within a close-knit social group.

\noindent \emph{(One-sided network)} Parties are pairwise connected, except parties within $L$, which cannot communicate directly. This structure is applicable in contexts such as kidney donations, where privacy constraints prevent recipients from directly interacting with each other.

\noindent \emph{(Bipartite network)}
Only pairs of parties in $L \times R$ are connected.
This setup is relevant in cases such as matching international job applicants, where communication is restricted solely to potential matches across the two sets.

\vspace{0.1cm}

We remark that each model is strictly stronger than the previous one. 




The parties will be then running a protocol $\Pi$ where each party holds a preference list as input, and each party obtains as output its match (from the opposite side). In this setting, $\Pi$ achieves \emph{distributed stable matching} if the following properties hold:
\emph{(Termination)} Each party outputs a party on the opposite side to match with;
\emph{(Symmetry)} If party $u$ decides to match party $v$, then $v$ decides to match party $u$;
\emph{(Stability)} There are no blocking pairs.


\paragraph{Faults.}
So far, we have defined the stable matching problem in a \emph{fault-free} setting. From now on, we assume an adversary that may (permanently) corrupt up to $t_L$ parties in $L$ and up to $t_R$ parties in $R$. The corrupted parties become \emph{byzantine}: they may deviate arbitrarily (even maliciously) from the protocol. A party is \emph{honest} if it never became byzantine.
Our protocols will assume that the adversary is \emph{adaptive}: it may choose
to corrupt parties at any point of the protocol's execution. Our impossibility results, however, hold even against a \emph{static} adversary, which needs to choose which parties to corrupt at the beginning of the protocol's execution.


\paragraph{Refining the Problem.} 
Byzantine parties require us to refine the definition of the stable matching problem. First, we need to take into account that our properties should only be concerned with the outputs of \emph{honest} parties. Second, the byzantine parties may choose not to participate in the protocol, preventing us from obtaining a maximal matching. Consequently, we adjust the previous properties as follows:

\vspace{0.1cm}
\noindent \emph{(Termination)} Every \emph{honest} party outputs: either a party on the opposite side \emph{or nobody}.

\noindent \emph{(Symmetry)} For 
two \emph{honest} parties $u$ and $v$, if $u$ decides to match $v$, then $v$ decides to match $u$.

\noindent \emph{(Stability)} There are no blocking pairs made of \emph{honest} parties.

\vspace{0.1cm}



\begin{wrapfigure}{r}{3cm}
\centering
\includegraphics[width=3cm]{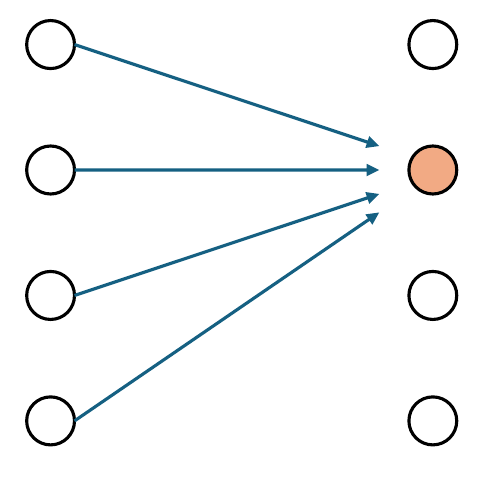}
\vspace{-1.5cm}
\end{wrapfigure} 

Note that these properties are not strong enough to lead to a \emph{relevant} matching: multiple honest parties may be matched to the same byzantine party (in the figure on the right, if the orange party is byzantine, the depicted matching satisfies 
symmetry and stability).
%
Therefore, we introduce an additional intuitive condition that prevents such scenarios:

\vspace{0.1cm}
\noindent \emph{(Non-competition)} If two honest parties output $u, v \in L \cup R$, then $u \neq v$.
\vspace{0.15cm}


We may now present the formal definition of byzantine stable matching. 
\begin{definition}[Byzantine Stable Matching ($\byzantineSM$)]
Consider a protocol $\Pi$ where every party in $L \cup R$ holds as input a preference list over the parties on the other side. Then, $\Pi$ achieves byzantine stable matching ($\byzantineSM$) with respect to $t_L$ and $t_R$ if it satisfies the following even when up to $t_L$ parties in $L$ and $t_R$ parties in $R$ are byzantine: termination, symmetry, stability, non-competition.
\end{definition}


\paragraph{Cryptographic Assumptions.}
As we will see, the solvability of $\byzantineSM$ in a given setting depends on whether we assume a trusted setup and cryptographic primitives. We will use the term \emph{unauthenticated setting} to refer to settings where no cryptographic assumptions are made. In contrast, we use the term \emph{authenticated setting} to refer to a setting where a public key infrastructure and a secure digital signature scheme are available. For simplicity of presentation, we assume that signatures are unforgeable. When replaced with real-world instantiations, our feasibility results in the authenticated setting still hold except for negligible probability (in the scheme's security parameter) against computationally-bounded adversaries.



\paragraph{Warm-up solution.} 
Synchronous networks come with communication primitives that often us to reduce $\byzantineSM$ to an offline problem. One such primitive is \emph{Byzantine Broadcast} ($\bb$) \cite{LSP82}.
\begin{definition}[Byzantine Broadcast ($\bb$)]\label{def:bc}
	Let $\Pi$ be a protocol where a designated party $\sender$ (the sender) holds a value $v_{\sender}$. 
    We say that $\Pi$ achieves Byzantine Broadcast ($\bb$) if the following hold even when up to some number $t$ of the parties are corrupted (alternatively for our setting, up to $t_L$ in $L$ and up to $t_R$ in $R$):
    \begin{itemize}[nosep] 
    \item \emph{(Termination)} All honest parties output;
    \item \emph{(Validity)} If $\sender$ is honest, every honest party outputs $v_{\sender}$;
    \item \emph{(Consistency)} Honest parties output the same value.
    \end{itemize}
    
    
    
    
\end{definition}


A $\bb$ protocol allows the sender to disseminate its preferences so that all parties obtain identical views of them. If each party runs an invocation of $\bb$ to distribute its preferences, then by the end, all parties will have identical views of everyone's preferences. This enables them to run $\GaleShapley$ offline and obtain the same stable matching, thereby solving $\byzantineSM$. 
This provides us with the lemma below. We include the formal proof in Appendix~\ref{appendix:preliminaries}.


\begin{restatable}{lemma}{BroadcastEasy} \label{lemma:broadcast-easy}
Whenever $\bb$ is available, $\byzantineSM$ is solvable.
\end{restatable}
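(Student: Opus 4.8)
The plan is to reduce $\byzantineSM$ to $\bb$ by having each party broadcast its own preference list, so that all honest parties agree on a common view of everyone's inputs, and then run $\GaleShapley$ locally on that agreed view. Concretely, I would let every party $u \in L \cup R$ act as the sender of a dedicated instance $\bb_u$ in which it disseminates its preference list $\preferences_u$. Since the network is synchronous, all $n$ instances can be run in parallel (or sequentially, if one prefers a simpler scheduling argument). By the \emph{Termination} property of $\bb$, every honest party obtains an output from each of the $n$ instances, and by \emph{Consistency}, any two honest parties obtain identical outputs from each instance. Hence at the end of this phase there is a single, globally-agreed vector $(\tilde{\preferences}_u)_{u \in L \cup R}$ of preference lists held identically by all honest parties.

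Next, each honest party locally runs the deterministic algorithm $\GaleShapley$ of \cref{theorem:gale-shapley} on this common input vector, obtaining a stable matching $M$. Because $\GaleShapley$ is deterministic and every honest party feeds it the \emph{same} input, all honest parties compute exactly the same $M$. Each party then outputs its own partner in $M$ (or ``nobody'' if $M$ leaves it unmatched). Verifying the four $\byzantineSM$ properties is then mostly bookkeeping: \emph{Termination} is immediate since $\GaleShapley$ always returns a matching; \emph{Symmetry} and \emph{Non-competition} follow because all honest parties evaluate the identical matching $M$, so if honest $u$ reads off partner $v$ then honest $v$ reads off partner $u$, and no two honest parties can name the same party (as $M$ is a matching). \emph{Stability} for honest pairs follows from the stability of $M$ together with the fact that, for honest parties, the agreed-upon list $\tilde{\preferences}_u$ equals the true input $\preferences_u$ --- this is exactly what the \emph{Validity} property of $\bb$ guarantees.

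The one subtlety worth flagging, and the step I expect to require the most care, is the treatment of byzantine parties' entries in the agreed vector. A byzantine sender may broadcast an arbitrary or malformed ``preference list,'' and may even equivocate by attempting to send different values to different parties; here $\bb$'s \emph{Consistency} saves us, since all honest parties still agree on \emph{whatever} single value the byzantine sender is pinned to. One must ensure $\GaleShapley$ is fed a syntactically valid permutation for every slot, so I would fix a canonical default (e.g., an arbitrary fixed ordering) to substitute whenever a broadcast output is not a well-formed preference list; this keeps $\GaleShapley$'s input well-defined without affecting the honest-honest guarantees. The key point is that the $\byzantineSM$ properties only constrain outputs of honest parties, and a blocking pair certifying instability would have to consist of two honest parties --- but for those, $\tilde{\preferences} = \preferences$, so any honest-honest blocking pair in the real inputs would also block $M$, contradicting the stability of $M$. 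This confines all the adversarial behavior to slots that are irrelevant to the properties we must establish, which is what makes the reduction clean.
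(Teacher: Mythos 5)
Your proposal is correct and follows essentially the same route as the paper's proof: broadcast every preference list via $\bb$, substitute a default list for malformed byzantine broadcasts, run the deterministic $\GaleShapley$ locally on the agreed view, and derive the four properties from $\bb$'s termination, consistency, and validity together with the fact that $M$ is a stable matching. The subtlety you flag about canonical defaults for invalid lists is also handled identically in the paper.
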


\section{Simplified Stable Matching}
For most of our impossibility results, we do not require the inputs to be complete preference lists: we rely only on parties' favorites. Therefore, we introduce the \emph{simplified stable matching} problem ($\simplifiedSM$), which mostly follows the same rules as $\byzantineSM$. The main difference is that a party's input is a party on the other side, not a preference list. If party $u$ has as input party $v$, we say that $u$'s \emph{favorite} is $v$. The stability property is then replaced by simplified stability:

\vspace{.1cm}
\noindent \emph{(Simplified stability)} If two honest parties are each other's favorites, they output each other.
\vspace{.1cm}

Our impossibility proofs will describe settings where a protocol cannot simultaneously achieve termination, symmetry, non-competition, and this simplified property. In Appendix \ref{appendix:simplified-stable-matching}, we show that $\simplifiedSM$ can be reduced to $\byzantineSM$, enabling us to state the following result.



\begin{restatable}{lemma}{SimplifiedReduction} \label{coro:to-simplified}
Whenever $\simplifiedSM$ is not solvable, $\byzantineSM$ is not solvable.
\end{restatable}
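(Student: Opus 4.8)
The plan is to prove the contrapositive: assuming a protocol $\Pi$ solves $\byzantineSM$ with respect to $t_L$ and $t_R$, I will construct a protocol $\Pi'$ that solves $\simplifiedSM$ with the same parameters. The key observation is that $\simplifiedSM$ is an \emph{easier} problem than $\byzantineSM$ in the sense that its only substantive requirement on outputs (simplified stability) is implied by full stability, while termination, symmetry, and non-competition are shared verbatim. The natural approach is therefore a \emph{local input transformation}: each honest party, given only its favorite $v$ as $\simplifiedSM$-input, fabricates a complete preference list $\preferences_u$ that places $v$ in first position, and then simply runs $\Pi$ on these fabricated lists, outputting whatever $\Pi$ outputs.

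\paragraph{Key steps.} First I would specify the reduction precisely: in $\Pi'$, an honest party $u$ with favorite $v$ builds $\preferences_u$ by listing $v$ first and then the remaining $k-1$ parties on the opposite side in an arbitrary but fixed order (e.g.\ by identifier), and invokes $\Pi$ with $\preferences_u$. Byzantine parties are unconstrained in both problems, so the adversary against $\Pi'$ is simulated by an adversary against $\Pi$ that corrupts exactly the same parties; thus every execution of $\Pi'$ corresponds to a valid execution of $\Pi$ on genuine preference-list inputs. Second, I would transfer the three shared properties: since the outputs of $\Pi'$ are literally the outputs of $\Pi$, termination, symmetry, and non-competition hold for $\Pi'$ immediately because they hold for $\Pi$. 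Third, and this is the only property requiring an argument, I would derive simplified stability from stability. Suppose two honest parties $u$ and $v$ (on opposite sides) are each other's favorites in $\Pi'$; then by construction $u$ places $v$ first in $\preferences_u$ and $v$ places $u$ first in $\preferences_v$. If $\Pi$ did not match them to each other, then neither is matched to a more-preferred partner (there is none, as each sits atop the other's list, and each party prefers any listed partner to being alone), so $(u,v)$ would be a blocking pair consisting of two honest parties, contradicting the stability guarantee of $\Pi$. Hence $u$ and $v$ output each other, establishing simplified stability.

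\paragraph{Main obstacle.} The reduction itself is almost mechanical, so the only point demanding care is the stability-to-simplified-stability implication, specifically verifying that no honest mutual-favorite pair can honestly be left apart. The subtlety is handling the case where one of $u,v$ outputs \emph{nobody} or is matched to some other party: I must argue that in either situation $(u,v)$ genuinely forms a blocking pair under the fabricated lists, which relies on the convention that every party prefers any party on its list over being alone together with the fact that $v$ (resp.\ $u$) occupies the top slot. Once this case analysis is pinned down, the contrapositive gives exactly the claimed statement: if $\simplifiedSM$ is unsolvable then no such $\Pi$ can exist, i.e.\ $\byzantineSM$ is unsolvable.
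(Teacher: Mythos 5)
Your proposal is correct and follows essentially the same route as the paper: the contrapositive via a local input transformation that places the favorite first in an otherwise arbitrary list, with termination, symmetry, and non-competition inherited verbatim and simplified stability derived from the blocking-pair argument for mutual top choices. The case analysis you flag (partner matched elsewhere vs.\ matched to nobody) is handled in the paper by the same observation you make, namely that unmatched mutual top choices always form a blocking pair under the fabricated lists.
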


We finish with a helpful technical lemma
allowing our impossibility arguments to only focus on proving that $\simplifiedSM$ cannot be solved in settings with few parties. The lemma then generalizes our arguments to settings with more parties. The proof is enclosed in Appendix \ref{appendix:simplified-stable-matching}
\begin{restatable}{lemma}{ReduceNumberLemma}\label{lemma:reduce-number}
Let $\Pi$ be a protocol solving $\simplifiedSM$, supporting up $t_L$ byzantine parties in $L$ and $t_R$ byzantine parties in $R$. Then, for any $0 < d \leq k = n/2$, there exists a protocol $\Pi'$ solving $\simplifiedSM$ on $2d$ parties ($d$ on each side) that supports up to $\lfloor \frac{t_L}{\lceil k / d \rceil} \rfloor$ byzantine parties on the left side and $\lfloor \frac{t_R}{\lceil k / d \rceil} \rfloor$ byzantine parties on the right side.
\end{restatable}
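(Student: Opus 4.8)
The plan is to prove the lemma by a \emph{grouping/simulation} argument: given $\Pi$ running on $k$ parties per side, I build $\Pi'$ on $d$ parties per side in which each party of $\Pi'$ locally simulates an entire \emph{group} of parties of $\Pi$. Set $m = \lceil k/d \rceil$ and partition the $k$ left parties of $\Pi$ into $d$ groups $G_1^L, \dots, G_d^L$, each of size at most $m$ (possible since spreading $k$ items as evenly as possible over $d$ bins yields maximum load $\lceil k/d \rceil$), and likewise on the right. Real party $i$ on the left is in charge of all virtual parties in $G_i^L$, and I fix in each group one distinguished \emph{representative} virtual party $r_i^L$ (analogously $r_i^R$).

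Next I would specify the simulation. Each real party runs, internally, the code of $\Pi$ for every virtual party of its group; messages between two virtual parties of the same group are delivered at no communication cost, while a message from a virtual party in one group to a virtual party in another is transmitted by the owning real parties over the corresponding real channel. The point to check is that this uses only channels already present in the topology on which $\Pi$ runs: two virtual left parties in distinct groups communicate in $\Pi$ only when the topology offers $L$–$L$ channels, and in exactly those topologies $\Pi'$ supplies the matching real $L$–$L$ channels (and symmetrically on the right). Hence the simulation is faithful in each of the bipartite, one-sided, and fully-connected models. For inputs, if real party $i$'s favorite is real party $j$, I set the favorite of $r_i$ to be $r_j$ and give the non-representative virtual parties arbitrary favorites; since $\Pi$ solves $\simplifiedSM$ for \emph{every} input, this is harmless. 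The output rule is: after $\Pi$ terminates, real party $i$ inspects the output of $r_i$; if $r_i$ is matched to a representative $r_j$, then $i$ outputs $j$, and otherwise $i$ outputs nobody.

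The core of the argument is the corruption accounting together with pulling the four properties back through $\Pi$. When the $\Pi'$-adversary corrupts a real party, its whole group becomes byzantine in $\Pi$; since each group has size at most $m$, corrupting $\lfloor t_L/m \rfloor$ real left parties corrupts at most $\lfloor t_L/m \rfloor \cdot m \le t_L$ virtual left parties, and symmetrically on the right, so the induced $\Pi$-execution stays within tolerance $(t_L, t_R)$. Conversely, an honest real party simulates its group honestly, so its representative is an honest virtual party, and distinct honest real parties have distinct honest representatives (groups are disjoint). Termination is then immediate, as every virtual party outputs. For simplified stability, two honest mutual-favorite real parties $i,j$ induce honest mutual-favorite representatives $r_i, r_j$, which $\Pi$ matches to each other, so $i$ and $j$ output each other. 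Symmetry transfers because $r_i, r_j$ are honest and $\Pi$ is symmetric on them. Non-competition holds because two distinct honest real parties outputting the same target would force their distinct honest representatives both to be matched to the same representative, contradicting the non-competition of $\Pi$.

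I expect the main obstacle to be the bookkeeping around \emph{topology preservation and honesty transfer} rather than any deep idea: one must verify that the inter-group messages required by $\Pi$ never demand a channel absent from the target topology, and that the two facts ``honest real party $\Rightarrow$ honest representative'' and ``at most $m$ virtual corruptions per real corruption'' are applied consistently, so that $\Pi$'s guarantees are invoked only for honest virtual parties in executions respecting $(t_L,t_R)$. Once these are in place, the four property checks above are short.
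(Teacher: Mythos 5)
Your proposal is correct and follows essentially the same route as the paper's proof: partition each side into $d$ groups of size at most $\lceil k/d\rceil$, have each party of $\Pi'$ simulate its group, route inputs and outputs through one representative per group, and bound corruptions by noting that each real corruption yields at most $\lceil k/d\rceil$ virtual ones. Your additional care about topology preservation and honesty transfer is a (welcome) elaboration of details the paper's proof leaves implicit, not a different argument.
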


\section{Solvability in Unauthenticated Settings} \label{section:no-pki}
In this section, we describe tight conditions for solving $\byzantineSM$ in unauthenticated settings (no cryptographic assumptions). In the following, we first present our findings in the fully-connected network case. Afterwards, we focus on the one-sided and bipartite network cases.

\subsection{Fully-Connected Network}

The conditions for the fully-connected network case are presented in \cref{theo:pki-complete}, stated below.
\begin{theorem} \label{theo:pki-complete}
$\byzantineSM$ is solvable in a fully-connected unauthenticated network if and only if $t_L < k/3$ or $t_R < k/3$.
\end{theorem}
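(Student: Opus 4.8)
The statement is an equivalence, so I would prove the two directions separately: sufficiency (the condition implies solvability) and necessity (solvability implies the condition).

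For sufficiency, by \cref{lemma:broadcast-easy} it is enough to implement Byzantine Broadcast ($\bb$) in the fully-connected unauthenticated network whenever $t_L < k/3$ or $t_R < k/3$. Assume without loss of generality that $t_R < k/3$ (the case $t_L < k/3$ is symmetric, since in a fully-connected network the parties of $L$ are also pairwise connected). The idea is to use $R$ as a reliable committee: since fewer than $k/3 = |R|/3$ of its members are byzantine, synchronous Byzantine Agreement among the parties of $R$ is solvable. To broadcast a value, the sender first sends it to every party in $R$; the parties of $R$ then run Byzantine Agreement on their received values to settle on a common value $w$; finally every party of $R$ forwards $w$ to all parties, and each party outputs the majority value among the $k$ messages it receives from $R$. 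Since more than $2k/3 > k/2$ of the parties of $R$ are honest and forward the same $w$, every honest party outputs $w$ (consistency), and if the sender is honest the validity of the agreement forces $w$ to equal the sender's value (validity); termination is immediate. This yields $\bb$, and hence $\byzantineSM$ by \cref{lemma:broadcast-easy}.

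For necessity, I would prove the contrapositive: if $t_L \ge k/3$ and $t_R \ge k/3$, then $\byzantineSM$ is unsolvable. By \cref{coro:to-simplified} it suffices to show that $\simplifiedSM$ is unsolvable in this regime. Suppose, for contradiction, that a protocol $\Pi$ solves $\simplifiedSM$ for such $t_L, t_R$ (with $k \ge 3$; the cases $k \le 2$ are checked directly). Since $t_L$ and $t_R$ are integers with $t_L, t_R \ge k/3$, we have $t_L, t_R \ge \lceil k/3 \rceil$, so applying \cref{lemma:reduce-number} with $d = 3$ produces a protocol solving $\simplifiedSM$ on three parties per side that tolerates $\lfloor t_L / \lceil k/3 \rceil \rfloor \ge 1$ byzantine parties in $L$ and $\lfloor t_R / \lceil k/3 \rceil \rfloor \ge 1$ in $R$. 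Thus everything reduces to showing that $\simplifiedSM$ cannot be solved on the instance $(k, t_L, t_R) = (3,1,1)$.

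The remaining core is this $(3,1,1)$ impossibility, which I would establish by a scenario/indistinguishability argument in the spirit of the classical $n \le 3t$ lower bound for Byzantine agreement. Concretely, I would build a cyclic family of executions over the six real parties in which the single corrupted party on each side is rotated, and in which byzantine parties equivocate (possible precisely because no signatures are available) so that each execution is locally indistinguishable, at some designated honest party, from the next. I would choose the favorites so that simplified stability forces mutual-favorite honest pairs to match in the ``anchor'' executions, and then show that traversing the cycle forces two honest parties to commit to outputs that jointly violate one of symmetry, non-competition, or simplified stability. I expect the main obstacle to be exactly this construction: because the network is fully connected, honest parties communicate directly, so a difference between two executions is hidden from a given honest party only if the party whose behavior would otherwise reveal it is itself corrupted in that pair of executions. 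This is what forces the rotating-corruption (hexagon-style) structure and dictates the precise choice of favorites and equivocation strategies.
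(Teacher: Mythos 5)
Your sufficiency argument is correct and complete, and it takes a mildly different route from the paper: where the paper obtains $\bb$ for this corruption structure by invoking the general-adversary broadcast result of Fitzi--Maurer (checking that no three admissible corruption sets cover $L \cup R$ when $t_L < k/3$ or $t_R < k/3$), you build $\bb$ directly by letting the side with fewer than a third corruptions act as a committee that runs unauthenticated $\ba$ on the sender's value and then majority-forwards the result. Both are sound; yours is more self-contained (modulo an off-the-shelf $t<n/3$ synchronous $\ba$ protocol, which the paper itself uses elsewhere), while the paper's citation-based route avoids re-deriving a broadcast protocol. Your necessity scaffolding --- contrapositive, reduction to $\simplifiedSM$ via \cref{coro:to-simplified}, and reduction to the $(k,t_L,t_R)=(3,1,1)$ instance via \cref{lemma:reduce-number} with $d=3$ --- is exactly the paper's decomposition.

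The genuine gap is that you never actually prove the $(3,1,1)$ impossibility, which is the mathematical core of the theorem; you describe the shape of an FLM-style argument and explicitly defer ``the main obstacle.'' The paper's \cref{lemma:pki-6} carries this out concretely: it duplicates the six parties into a 12-node system, fixes the favorites so that $c_1,v_1$ and $a_2,v_2$ are mutual-favorite pairs, and exhibits three admissible 6-party executions --- corrupting the pair $(c,w)$, the pair $(a,u)$, and the pair $(b,v)$, each simulating the rest of the large system internally --- such that indistinguishability forces $a_2$ to match $v$, forces $c_1$ to match $v$, and then places $a_2$ and $c_1$ honest in the same execution, violating non-competition. Note that the contradiction is obtained against non-competition with a single \emph{byzantine} target $v$, not by chasing an inconsistency around a cycle of honest pairs; and the corruption pattern is three fixed choices of one corrupted party per side rather than a rotation of a single fault. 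Until you exhibit the duplicated system, the input assignment, and the three indistinguishable scenarios (and verify that in each one the honest parties' neighborhoods in the large system are isomorphic to a legal 6-party view), the necessity direction is a plan rather than a proof. Separately, your parenthetical that ``the cases $k \le 2$ are checked directly'' deserves care: for $k=1$ the trivial protocol that always outputs the unique opposite party satisfies all four properties regardless of $t_L,t_R$, so the degenerate cases are not merely routine (the paper's reduction likewise tacitly assumes $k \ge 3$).
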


We first focus on the feasibility part, for which we recall \cref{lemma:broadcast-easy}: if $\bb$ can be achieved in a setting, then $\byzantineSM$ also can. As stated in the lemma below, $\bb$ is, in fact, solvable for our conditions.
Hence, the two lemmas together enable us to conclude that $\byzantineSM$ is solvable in a fully-connected unauthenticated network whenever  $t_L < k / 3$ or $t_R < k / 3$, as desired. 
\cref{lemma:general-adversaries} is a corollary of \cite[{Theorem 2}]{DISC:FitMau98}. We note that \cite{DISC:FitMau98} focuses on \emph{general adversaries}.
Roughly, this is an adversarial model where the adversary has to choose which parties to corrupt from a predefined subset-closed list of options. We provide a detailed discussion, along with the proof of \cref{lemma:general-adversaries}, in Appendix \ref{appendix:general-adversaries}.

\begin{restatable}{lemma}{GeneralAdversaries} \label{lemma:general-adversaries}
    $\bb$ is solvable in a fully-connected network if $t_L < k/3$ or $t_R < k/3$.
\end{restatable}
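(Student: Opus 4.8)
The plan is to prove \cref{lemma:general-adversaries} by invoking the classical threshold Byzantine Broadcast result and lifting it through the asymmetric structure of our two-sided corruption model. Recall the standard fact \cite{LSP82,DISC:FitMau98}: in a fully-connected network of $N$ parties with no cryptographic setup, $\bb$ is solvable whenever the number of corrupted parties is strictly less than $N/3$. Here we have $N = 2k$ parties, and the adversary corrupts at most $t_L$ in $L$ and at most $t_R$ in $R$, so the total number of corruptions is at most $t_L + t_R$. If we had the symmetric guarantee $t_L + t_R < 2k/3$, the threshold result would apply directly; unfortunately our hypothesis is only the weaker $t_L < k/3$ \emph{or} $t_R < k/3$, which does not imply $t_L + t_R < 2k/3$ in general (e.g. $t_L = 0, t_R = k$). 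This mismatch is precisely why we must appeal to the \emph{general adversary} formulation of \cite{DISC:FitMau98} rather than the plain threshold bound.

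The key step is to translate our asymmetric threshold constraint into the adversary-structure language of \cite{DISC:FitMau98}. There, the admissible corruption sets are described by a subset-closed collection $\mathcal{Z}$, and \cite[Theorem 2]{DISC:FitMau98} states that $\bb$ (equivalently, their consensus/broadcast primitive) is solvable exactly when $\mathcal{Z}$ satisfies the so-called $Q^{(3)}$ condition: no three sets in $\mathcal{Z}$ cover all of $L \cup R$. In our setting $\mathcal{Z}$ consists of all sets containing at most $t_L$ parties from $L$ and at most $t_R$ from $R$. I would therefore verify $Q^{(3)}$ for this $\mathcal{Z}$ under the assumption $t_L < k/3$ or $t_R < k/3$: take any three sets $Z_1, Z_2, Z_3 \in \mathcal{Z}$ and argue their union cannot be all of $L \cup R$. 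Restricted to side $L$, the three sets contribute at most $3 t_L$ parties, so if $t_L < k/3$ then $3 t_L < k = \abs{L}$ and some party in $L$ is left uncovered; symmetrically, if $t_R < k/3$ then some party in $R$ is uncovered. Either way the union misses at least one party, establishing $Q^{(3)}$.

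Having checked $Q^{(3)}$, the conclusion follows immediately from \cite[Theorem 2]{DISC:FitMau98}: the general-adversary broadcast protocol solves $\bb$ against exactly this $\mathcal{Z}$, hence $\bb$ is solvable in our fully-connected network whenever $t_L < k/3$ or $t_R < k/3$. Combined with \cref{lemma:broadcast-easy}, this gives the feasibility direction of \cref{theo:pki-complete}.

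I expect the main obstacle to be purely expository rather than mathematical: the bulk of the work is carefully setting up the correspondence between the threshold pair $(t_L, t_R)$ and the adversary structure $\mathcal{Z}$, and confirming that the protocol and security guarantees of \cite{DISC:FitMau98} (which are phrased for a single undifferentiated party set with a general $\mathcal{Z}$) transfer verbatim to our bipartitioned party set $L \cup R$. The $Q^{(3)}$ verification itself is the short counting argument above; the care lies in arguing that our disjunctive hypothesis is \emph{exactly} the $Q^{(3)}$ condition for this particular $\mathcal{Z}$, so that nothing is lost in the reduction. This is the content I would defer to Appendix \ref{appendix:general-adversaries}.
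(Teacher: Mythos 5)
Your proposal matches the paper's own proof essentially verbatim: both reduce to the general-adversary result of \cite[Theorem 2]{DISC:FitMau98} by forming the adversary structure $\mathcal{Z}^\star = \{S_L \cup S_R \mid S_L \subseteq L,\, S_R \subseteq R,\, \abs{S_L} \leq t_L,\, \abs{S_R} \leq t_R\}$ and verifying the $Q^{(3)}$ condition via the same counting argument (three sets cover at most $3t_L < k$ parties of $L$ when $t_L < k/3$, symmetrically for $R$). The argument is correct and complete; the only cosmetic difference is that the paper states the Fitzi--Maurer theorem as a sufficient condition rather than an exact characterization, which is all that is needed.
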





We now show that at least one of the conditions $t_L < k / 3$ and $t_R < k / 3$ holding is necessary. To do so, we prove this property for the special case $n=6$ for $\simplifiedSM$. One of our proof's ingredients is the shifting scenarios proof technique from \cite{PODC:FisLynMer85}, i.e., defining a \emph{larger} system to reach a contradiction. 
Then, \cref{lemma:reduce-number} enables us to conclude that for arbitrary $n$, $\simplifiedSM$ is not solvable if  $t_L \geq k/3$ and $t_R \geq k/3$. Finally, Lemma~\ref{coro:to-simplified} lifts this impossibility result to $\byzantineSM$, completing the proof of \cref{theo:pki-complete}.
\begin{lemma}\label{lemma:pki-6}
Assume a fully-connected unauthenticated network and $n = 6$. Then, no protocol achieves $\simplifiedSM$ for $t_L = t_R = 1$.
\end{lemma}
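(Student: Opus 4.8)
The plan is to assume, toward a contradiction, that a deterministic protocol $\Pi$ achieves $\simplifiedSM$ in this setting, and to derive a contradiction via the shifting-scenarios technique of \cite{PODC:FisLynMer85}. The key bookkeeping observation is that, since $k=3$, pairing each $L_i$ with $R_i$ into a \emph{super-party} $P_i=\{L_i,R_i\}$ turns the budget $t_L=t_R=1$ into ``at most one fully corrupted super-party'': corrupting both members of a single $P_i$ spends exactly one left and one right fault. I would then build a \emph{larger system} obtained by taking several copies of the six parties and wiring them so that every copy has exactly one neighbor of each of the other five types; locally this looks indistinguishable from the genuine fully-connected network on six parties, so each honest copy's view in the larger system coincides with its view in some real execution of $\Pi$ in which at most one super-party (hence at most one party per side) is byzantine.

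Next I would assign the favorites (the single-party inputs) to the copies so that one contested party on one side --- say a right party $R_1$ --- is realized as the mutual favorite of two honest parties of \emph{different} left types, each in its own ``clean'' slice of the larger system; by simplified stability each such honest left party is forced to match its $R_1$-neighbor. The heart of the argument is to exhibit a single real execution $\sigma$ (a slice of the larger system) that contains both of these honest left victims and in which their two distinct $R_1$-copies are \emph{merged} into one byzantine party $R_1$ (together with at most one byzantine left party, keeping the count at $(1,1)$), programmed to present to each victim exactly the interface of that victim's clean-slice partner. Since $\sigma$ is indistinguishable to both victims from their respective clean slices, both output $R_1$; but they are honest and output the same party, contradicting non-competition. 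This contradiction proves the lemma, and \cref{lemma:reduce-number} together with \cref{coro:to-simplified} then lift it to arbitrary $n$ and to $\byzantineSM$.

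I expect the main obstacle to be constructing the larger system and the favorite assignment so that the merge slice $\sigma$ is simultaneously within the $(1,1)$ budget \emph{and} indistinguishable to both victims. The tension is exactly the one that dooms a naive three-execution argument: in $\sigma$ the single right-hand fault is already spent on the merged $R_1$, so the remaining right-hand parties $R_2,R_3$ cannot be corrupted and must nonetheless send messages consistent with \emph{both} victims' clean slices at once --- something a single honest party cannot do if those slices disagree on its behavior. Resolving this is precisely what forces the passage to a larger system with several copies (rather than three executions), and pinning down the wiring and favorites that make the honest views of $R_2,R_3$ serve both victims is the crux. The remaining work is routine: proving indistinguishability by induction on the round number, with the byzantine parties replaying the fixed transcripts they would have received in the larger system, so that honest copies send identical messages round by round in $\sigma$ and in the clean slices.
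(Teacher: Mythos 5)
Your outline matches the paper's proof: the paper duplicates each of the six parties exactly once, wires the twelve copies so that the two victims $c_1$ and $a_2$ share their honest right-hand neighbors $u_2$ and $w_1$, makes $c_1,v_1$ and $a_2,v_2$ mutual favorites, and then runs precisely your three scenarios --- two clean slices with byzantine pairs $(a,u)$ and $(c,w)$ forcing the matches via simplified stability, and a merge slice with byzantine pair $(b,v)$ in which both honest victims output the single party $v$, violating non-competition. The tension you flag about the uncorruptible right-hand parties is resolved exactly as you predict: in the two-copy system the shared honest neighbors' behavior is fixed by the twelve-node execution (they are either genuinely honest or faithfully simulated in each clean slice), so they serve both victims simultaneously.
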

\begin{proof}
Assume for a contradiction that $\Pi$ achieves $\simplifiedSM$ in this setting.
We start with a high-level outline of the proof. We will construct a larger system (i.e., for 12 parties) by `duplicating' the communication graph and consider running $\Pi$ in this new system, with each party following the intended behavior of the corresponding party in the original system. By choosing specific pairs of byzantine parties and their strategies in the original system, the adversary will be able to simulate being in the new system towards the honest parties. Going even further, we will present three different setups of the original system where the adversary can achieve this simulation. Each setup will provide different insights into how the parties should behave in the larger system, leveraging that $\Pi$ is correct for any setup of the original system. However, these findings will ultimately be contradictory, proving that $\Pi$ cannot exist.


\begin{figure}[h]
    \centering
    \begin{tikzpicture}[every node/.style={scale=0.65}]

\def\innerRadius{1.5}
\def\outerRadius{3}

\begin{scope}[local bounding box=graph1]
\foreach \i/\name/\pos in {0/a_1/right, 60/b_1/above right, 120/c_1/above left, 
                           180/a_2/left, 240/b_2/below left, 300/c_2/below right} {
    \node[draw, circle, fill=green!30, minimum size=4mm] (\name) at (\i:\innerRadius) {$\name$};
}

\foreach \i/\name/\pos in {0/u_1/right, 60/v_1/above right, 120/w_1/above left, 
                           180/u_2/left, 240/v_2/below left, 300/w_2/below right} {
    \node[draw, circle, fill=red!30, minimum size=4mm] (\name) at (\i:\outerRadius) {$\name$};
}

\foreach \a/\b in {a_1/b_1, b_1/c_1, c_1/a_2, a_2/b_2, b_2/c_2, c_2/a_1} {
    \draw (\a) -- (\b);
}

\foreach \a/\b in {u_1/v_1, v_1/w_1, w_1/u_2, u_2/v_2, v_2/w_2, w_2/u_1} {
    \draw (\a) -- (\b);
}

\foreach \a\b in {u_1/a_1, u_1/b_1, u_1/c_2, 
    u_2/a_2, u_2/b_2, u_2/c_1,
    v_1/b_1, v_1/c_1, v_1/a_1,
    v_2/b_2, v_2/c_2, v_2/a_2,
    w_1/c_1, w_1/a_2, w_1/b_1,
    w_2/c_2, w_2/a_1, w_2/b_2} {
    \draw[dotted] (\a) -- (\b);
}

\foreach \a\b in {c_1/v_1, v_1/c_1, a_2/v_2, v_2/a_2} {
    \draw[->, OliveGreen, very thick] (\a) -- ($(\a)!0.3!(\b)$);
}
\end{scope}

\begin{scope}[shift={(7,-7)},local bounding box=graph2]

\foreach \i/\name/\tag/\pos in {330/t/a/below right,  90/c_1/b/above, 
                           210/a_2/c/below left} {
    \node[draw, circle, fill=green!30, minimum size=4mm] (\tag) at (\i:\innerRadius) {$\name$};
}

\foreach \i/\name/\tag/\pos in {330/t/u/below right,  90/w_1/v/above, 
                           210/u_2/w/below left} {
    \node[draw, circle, fill=red!30, minimum size=4mm] (\tag) at (\i:\outerRadius) {$\name$};
}

\foreach \a/\b in {a/b, b/c, c/a} {
    \draw (\a) -- (\b);
}

\foreach \a/\b in {u/v, v/w, w/u} {
    \draw (\a) -- (\b);
}

\foreach \a\b in {u/a, u/b, u/c,
    v/a, v/b, v/c,
    w/a, w/b, w/c} {
    \draw[dotted] (\a) -- (\b);
}

\foreach \a\b in {b/u, c/u} {
    \draw[->, OliveGreen, very thick] (\a) -- ($(\a)!0.3!(\b)$);
}

\node[ellipse, draw, orange, fill=orange!60, thick, fit=(a) (u), rotate=330, inner xsep=10pt, inner ysep=0.5pt, dashed] {};

\end{scope}

\begin{scope}[shift={(0,-7)},local bounding box=graph3]

\foreach \i/\name/\tag/\pos in {330/b_1/a/below right,  90/c_1/b/above, 
                           210/a_2/c/below left} {
    \node[draw, circle, fill=green!30, minimum size=4mm] (\tag) at (\i:\innerRadius) {$\name$};
}

\foreach \i/\name/\tag/\pos in {330/v_1/u/below right,  90/w_1/v/above, 
                           210/u_2/w/below left} {
    \node[draw, circle, fill=red!30, minimum size=4mm] (\tag) at (\i:\outerRadius) {$\name$};
}

\foreach \a/\b in {a/b, b/c, c/a} {
    \draw (\a) -- (\b);
}

\foreach \a/\b in {u/v, v/w, w/u} {
    \draw (\a) -- (\b);
}

\foreach \a\b in {u/a, u/b, u/c,
    v/a, v/b, v/c,
    w/a, w/b, w/c} {
    \draw[dotted] (\a) -- (\b);
}

\foreach \a\b in {b/u, u/b} {
    \draw[->, OliveGreen, very thick] (\a) -- ($(\a)!0.3!(\b)$);
}

\node[ellipse, draw, orange, fill=orange!60, thick, fit=(c) (w), rotate=210, inner xsep=10pt, inner ysep=0.5pt, dashed] {};

\end{scope}

\begin{scope}[shift={(7,-1)},local bounding box=graph4]

\foreach \i/\name/\tag/\pos in {330/b_2/a/below right,  90/c_1/b/above, 
                           210/a_2/c/below left} {
    \node[draw, circle, fill=green!30, minimum size=4mm] (\tag) at (\i:\innerRadius) {$\name$};
}

\foreach \i/\name/\tag/\pos in {330/v_2/u/below right,  90/w_1/v/above, 
                           210/u_2/w/below left} {
    \node[draw, circle, fill=red!30, minimum size=4mm] (\tag) at (\i:\outerRadius) {$\name$};
}

\foreach \a/\b in {a/b, b/c, c/a} {
    \draw (\a) -- (\b);
}

\foreach \a/\b in {u/v, v/w, w/u} {
    \draw (\a) -- (\b);
}

\foreach \a\b in {u/a, u/b, u/c,
    v/a, v/b, v/c,
    w/a, w/b, w/c} {
    \draw[dotted] (\a) -- (\b);
}

\foreach \a\b in {c/u, u/c} {
    \draw[->, OliveGreen, very thick] (\a) -- ($(\a)!0.3!(\b)$);
}

\node[ellipse, draw, orange, fill=orange!60, thick, fit=(b) (v), rotate=0, inner xsep=10pt, inner ysep=8pt, dashed] {};

\end{scope}
    
\end{tikzpicture}
    \vspace{-1cm}
    \setlength{\belowcaptionskip}{-10pt}
    \caption{(i) Top left: system constructed in the proof of \cref{lemma:pki-6}; (ii) Top right: indistinguishable execution where $a_2, b_2, u_2$ and $v_2$ are correct while the remaining nodes are simulated by the byzantine parties; (iii) Bottom left: indistinguishable execution where $b_1, c_1, v_1$ and $w_1$ are correct; (iv) Bottom right: indistinguishable execution where $c_1, a_2, u_2$ and $w_1$ are correct.}
    \label{fig:no-pki-fully-model}
\end{figure}
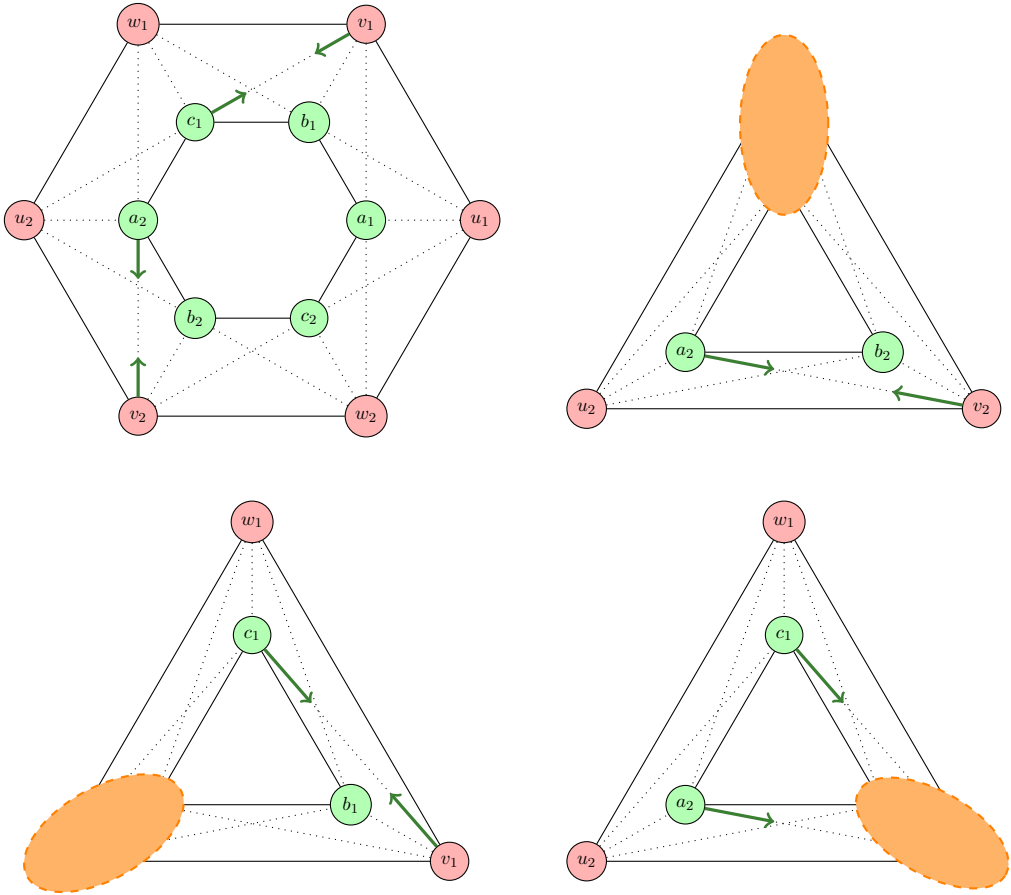

We denote the three parties in $L$ by $a, b, c$, and the three parties in $R$ by $u, v, w$.
By duplicating each party, we obtain a graph with $12$ nodes such that each node is connected to $3$ parties on the opposite side, as depicted in \cref{fig:no-pki-fully-model} (i). 
Consider an execution of $\Pi$ in the new system with the following inputs: $c_1$ and $v_1$ have each other as their favorite, $a_2$ and $v_2$ have each other as their favorite, and all other inputs are arbitrary.
So far, because $\Pi$ is running in a non-standard network, we cannot say anything about the output of the parties (or even whether they terminate or not). 
However, using indistinguishability arguments,
we will prove that there exists a normal-operation scenario for $\Pi$ where two honest parties match the same party, giving a contradiction:

\begin{itemize}[nosep,leftmargin=*]
    
    \item First, we consider the setting where $a_2, b_2, u_2$ and $v_2$ are honest while all the remaining nodes are being simulated (internally) by $c$ and $w$, which are byzantine. This matches \cref{fig:no-pki-fully-model} (ii). This setting is valid for $\Pi$: there is at most one byzantine party in $L$ and one byzantine party in $R$. As such, $a_2, b_2, u_2$ and $v_2$ must terminate. Moreover using the simplified stability property, because $a_2$ and $v_2$ prefer each other, we get that $a_2$ decides to match party $v_2$. 

    \item Then, we consider the setting where $b_1, c_1, v_1$ and $w_1$ are honest, while the remaining nodes are being simulated by two byzantine parties $a$ and $u$. This matches \cref{fig:no-pki-fully-model} (iii). Similarly to before, we get that $c_1$ must terminate and decide to match party $v_2$.

    \item The last setting is where $c_1, a_2, u_2$ and $w_1$ are honest, while the remaining nodes are being simulated by two byzantine parties $b$ and $v$. This matches \cref{fig:no-pki-fully-model} (iv). We remark that $a_2$ (resp.~$c_1$) cannot distinguish between this setting and \cref{fig:no-pki-fully-model} (ii) (resp.~(iii)). As such, as we proved above, $a_2$ and $c_1$ will both decide to match $v$ (previously we had written $v_1$ and $v_2$ to distinguish between the two copies, but here there is only one). Moreover, this setting is valid for $\Pi$, which means that its output should satisfy the properties of stable matching. However, $a_2$ and $c_1$ are both honest and decide to match the same party $v$, breaking the non-competition rule, hence we obtain a contradiction. \qedhere
\end{itemize}

\end{proof}

\subsection{Bipartite and One-Sided Networks}

The theorems below give the necessary and sufficient conditions for the bipartite and one-sided communication cases. Note that, in contrast to the fully-connected case, each theorem requires one additional condition (i) to hold on top of the previous condition (ii) that was already required for the fully-connected case.

\begin{theorem}\label{theorem:main:no-pki-bipartite}
$\byzantineSM$ is solvable in a bipartite unauthenticated network if and only if both of these conditions are satisfied: (i) $t_L, t_R < k/2$; (ii) $t_L < k/3$ or $t_R < k/3$.
\end{theorem}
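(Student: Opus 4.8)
The plan is to prove the two directions separately, and to split necessity into conditions (i) and (ii), isolating the genuinely new ingredient (the necessity of (i)). For sufficiency, assuming (i) and (ii), I would reduce the bipartite network to a fully-connected one and then reuse the existing machinery. Condition (i), $t_L, t_R < k/2$, gives a strict honest majority on each side, which is exactly what is needed to emulate the missing same-side channels: to simulate a link between $a, a' \in L$, party $a$ sends its message to all of $R$ over the existing $L$--$R$ edges, every party of $R$ relays it to $a'$, and $a'$ outputs the majority value. Since fewer than $k/2$ parties of $R$ are byzantine, whenever $a$ is honest a majority of $R$ relays the correct value, so it is delivered; the mirror construction (using $t_L < k/2$) emulates the $R$--$R$ links. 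This is a sound simulation of a fully-connected network in which each emulated round costs a constant number of real rounds, so synchrony is preserved. On top of it I would invoke \cref{lemma:general-adversaries} to get $\bb$ under condition (ii), and then \cref{lemma:broadcast-easy} to get $\byzantineSM$.

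\textbf{Necessity of (ii).} Here I would use that the bipartite network is strictly weaker than the fully-connected one: any protocol solving $\byzantineSM$ over the bipartite network also solves it over the fully-connected network, since honest parties may simply ignore the extra $L$--$L$ and $R$--$R$ edges. Hence the fully-connected impossibility of \cref{theo:pki-complete} (built from \cref{lemma:pki-6} via \cref{lemma:reduce-number} and \cref{coro:to-simplified}) transfers directly: if $t_L \geq k/3$ and $t_R \geq k/3$, then $\byzantineSM$ is already unsolvable in the stronger fully-connected network, and therefore also in the weaker bipartite one.

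\textbf{Necessity of (i).} This is the new and hardest part. By symmetry of the bipartite network under swapping $L$ and $R$ it suffices to rule out $t_R \geq k/2$. Following the paper's methodology I would first prove a base case and then scale: on $n=4$ parties ($k=2$), no protocol achieves $\simplifiedSM$ against $t_L=0, t_R=1$; then \cref{lemma:reduce-number} with $d=2$ (noting that $t_R \geq k/2$ forces $\lfloor t_R / \lceil k/2 \rceil \rfloor \geq 1$) together with \cref{coro:to-simplified} lifts this to arbitrary $n$ and to $\byzantineSM$. For the base case I would use the shifting-scenarios technique of \cref{lemma:pki-6}: double the $4$-cycle $K_{2,2}$ into the $8$-cycle $A_1 - U_1 - B_1 - V_1 - A_2 - U_2 - B_2 - V_2 - A_1$, fix one virtual execution $V^{*}$ on it with $A_1, V_2$ mutual favorites and $B_1, V_1$ mutual favorites (all other favorites arbitrary), and exhibit three executions of the real $4$-party system, each indistinguishable from $V^{*}$ to its honest parties. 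In two of them a single byzantine $R$-party impersonates the two copies $U_1, U_2$, so simplified stability forces $A_1$ to match $V_2$ and $B_1$ to match $V_1$ in $V^{*}$; in the third, a byzantine $R$-party impersonates $V_1, V_2$, so both honest $L$-parties output it, breaking non-competition.

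\textbf{Main obstacle.} The delicate point is making the third (collision) execution consistent: the honest $R$-relay that both honest $L$-parties talk to cannot be asked to impersonate two distinct virtual nodes at once. The trick I would use is to map \emph{both} honest $L$-parties to the same half of the doubled graph, namely $A_1$ and $B_1$, whose common $R$-neighbor $U_1$ is a single virtual node with exactly those two neighbors; then the honest relay runs $\Pi$ consistently as $U_1$, and only the byzantine $R$-party (impersonating $V_1$ towards $B_1$ and $V_2$ towards $A_1$) needs to present two personas. The work that remains is the careful bookkeeping: checking that all three executions reference one and the same $V^{*}$, that each is a legal $(t_L, t_R) = (0,1)$ corruption pattern, and that the actual \emph{message streams} seen by $A_1$ and by $B_1$ (not merely which neighbor is honest) coincide between the collision execution and the corresponding forcing execution, so that $A_1$ and $B_1$ are forced to output the same byzantine $R$-party.
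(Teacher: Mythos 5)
Your proposal is correct and follows essentially the same route as the paper: sufficiency via the majority-relay simulation of same-side channels (the paper's \cref{lemma:pki-bipartite}/\cref{lemma:pki-complete}) combined with \cref{theo:pki-complete}, necessity of (ii) by transfer from the fully-connected impossibility, and necessity of (i) by the same $n=4$ base case on the doubled $8$-cycle with three shifted scenarios, lifted via \cref{lemma:reduce-number} and \cref{coro:to-simplified}. The only (immaterial) organizational difference is that you prove the bipartite base case directly and invoke the $L$/$R$ symmetry, whereas the paper states the base case (\cref{lemma:pki-4}) for the one-sided network and first argues the within-$R$ edge is useless before running the identical cycle argument.
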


\begin{theorem}\label{theorem:main:no-pki-one-sided} $\byzantineSM$ is solvable in a one-sided unauthenticated network if and only if both of these conditions are satisfied: (i) $t_R < k/2$; (ii)  $t_L < k/3$ or $t_R < k/3$.
\end{theorem}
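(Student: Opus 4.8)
The plan is to prove both directions of the equivalence, mirroring the structure used for the fully-connected case in \cref{theo:pki-complete}: a reduction to Byzantine Broadcast for sufficiency, and a small-case impossibility lifted by \cref{lemma:reduce-number} and \cref{coro:to-simplified} for necessity.

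For \emph{sufficiency}, assume (i) $t_R < k/2$ and (ii) $t_L < k/3$ or $t_R < k/3$. I would reduce the one-sided network to a fully-connected one and then invoke the warm-up solution. The only channels missing in a one-sided network are those within $L$, and the idea is to \emph{simulate} each such channel by majority-relaying through $R$. Concretely, to send a message $m$ from $\ell_i \in L$ to $\ell_j \in L$, party $\ell_i$ sends $m$ to all of $R$, every party in $R$ forwards what it received to $\ell_j$, and $\ell_j$ takes the majority value among the $k$ received copies. Since at most $t_R < k/2$ parties in $R$ are byzantine, whenever $\ell_i$ is honest a strict majority of the relays forward $m$, so an honest $\ell_j$ decodes $m$ correctly; when $\ell_i$ is byzantine the simulated channel may deliver an arbitrary value, exactly as a direct byzantine channel would. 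Thus condition (i) lets the parties faithfully simulate a fully-connected network on all $2k$ parties with the same adversary structure (each simulated round costing a constant number of real rounds). On this simulated network, condition (ii) lets us run the Byzantine Broadcast protocol of \cref{lemma:general-adversaries}, and \cref{lemma:broadcast-easy} then yields a protocol for $\byzantineSM$.

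For \emph{necessity}, I would argue the contrapositive in two cases. If (ii) fails, i.e.\ $t_L \geq k/3$ and $t_R \geq k/3$, then by \cref{theo:pki-complete} $\byzantineSM$ is unsolvable even in the strictly stronger fully-connected network; since any protocol for the one-sided network can be run unchanged in the fully-connected network (the additional $L$-$L$ channels simply go unused), solvability in the one-sided network would imply solvability in the fully-connected one, so $\byzantineSM$ is unsolvable in the one-sided network too. It remains to handle the genuinely new regime where (i) fails, i.e.\ $t_R \geq k/2$, while (ii) may still hold. As in the fully-connected analysis, I would first establish impossibility for a small instance and then lift it. The smallest hard instance is $k = 2$ (so $n = 4$) with $t_L = 0$ and $t_R = 1$: here (i) fails while (ii) holds, so this case is \emph{not} implied by the fully-connected impossibility, which is precisely why a new argument is needed. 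I would prove a lemma analogous to \cref{lemma:pki-6} stating that no protocol achieves $\simplifiedSM$ in a one-sided unauthenticated network on $4$ parties for $t_L = 0$, $t_R = 1$. Granting this, \cref{lemma:reduce-number} with $d = 2$ turns any protocol solving $\simplifiedSM$ for thresholds with $t_R \geq k/2$ into a protocol on $4$ parties supporting $\lfloor t_R / \lceil k/2 \rceil \rfloor \geq 1$ byzantine parties in $R$ (using that an integer $t_R \geq k/2$ forces $t_R \geq \lceil k/2 \rceil$), contradicting the small-case lemma; finally \cref{coro:to-simplified} lifts the impossibility from $\simplifiedSM$ to $\byzantineSM$.

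The main obstacle is the small-case impossibility. The difficulty is the \emph{single honest relay} bottleneck: the two parties $\ell_1, \ell_2 \in L$ can communicate only through $r_1, r_2 \in R$, and with $t_R = 1 = k/2$ one relay may be byzantine while the other is honest, so there is no majority to arbitrate between them. A naive two-world argument fails, because a single honest $R$-party cannot simultaneously be the mutual favorite of two distinct $L$-parties, and so simplified stability alone cannot force two honest parties to collide. I would therefore use the shifting-scenarios/duplication technique of \cref{lemma:pki-6}: build a larger reference system (here $8$ nodes) as a suitable two-cover of the $4$-node one-sided network, run $\Pi$ on it, and choose inputs containing two mutual-favorite pairs lying in the two different copies. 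Then I would exhibit sub-scenarios in the original $4$-node network, each with a single byzantine $R$-party that internally simulates one copy (blaming the inevitable inconsistencies on the absence of a majority), so that the honest $L$-parties' views coincide with prescribed nodes of the $8$-node execution. Simplified stability pins down their outputs in each sub-scenario, and combining the scenarios forces two honest $L$-parties to output the same $R$-party, violating non-competition. The delicate points, where I expect the real work to lie, are designing the two-cover so that each honest party's local neighborhood is indistinguishable from its counterpart's, and verifying that a \emph{single} byzantine relay suffices to sustain the simulation toward both $\ell_1$ and $\ell_2$ at once.
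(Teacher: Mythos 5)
Your proposal matches the paper's proof essentially step for step: sufficiency by majority-relaying the missing $L$--$L$ channels through $R$ (this is exactly \cref{lemma:pki-bipartite} and \cref{lemma:pki-complete}) followed by the fully-connected result, necessity of (ii) by restriction from the fully-connected impossibility, and necessity of (i) via a $4$-party impossibility for $\simplifiedSM$ with $t_L=0$, $t_R=1$ lifted by \cref{lemma:reduce-number} and \cref{coro:to-simplified}. The only point worth adding, which resolves the ``delicate points'' you flag at the end: the paper's \cref{lemma:pki-4} first argues that the intra-$R$ edge is useless once one assumes \emph{exactly} one $R$-party is byzantine (its honest neighbour can replace its messages by defaults), so the cover is taken of the bare $4$-cycle $a$-$c$-$b$-$d$-$a$, namely the $8$-cycle $a_1$-$c_1$-$b_1$-$d_1$-$a_2$-$c_2$-$b_2$-$d_2$, and a single byzantine $R$-party simulates a contiguous arc of that cycle while honestly facing both of its $L$-neighbours --- exactly the scenario structure you describe.
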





The main idea we will use to get the feasibility parts of the previous theorems 
is that for the stated conditions, we may actually return to assuming
a fully-connected network.
In particular, two parties $u, v$ on the same side that do not share a communication channel may simulate such a channel between them by using the parties on the opposite side as a proxy: $u$ sends the desired message to all parties on the other side, who in turn forward it to $v$, which takes a majority vote to decide on the sent message. This works as long as there is an honest majority on the other side, giving us the following lemma (formal proof in 
Appendix~\ref{appendix:complete-bipartite-graph-without-pki}).
%
\begin{restatable}{lemma}{BipartiteCommunicationNoPKI}\label{lemma:pki-bipartite}
Let $S$ and $S'$ denote the two sides.
If the parties in $S$ are disconnected and $t_{S'} < k /2$, we may assume the parties in $S$ are fully-connected.
\end{restatable}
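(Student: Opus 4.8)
The plan is to show that any protocol $\Pi$ designed for a fully-connected synchronous network can be run, essentially unchanged, on the given topology by replacing each missing channel inside $S$ with a two-hop relay through $S'$. Concretely, to simulate a direct channel from $u \in S$ to $v \in S$, party $u$ sends its intended message to all $k$ parties of $S'$ over the bipartite channels (which exist by assumption), and each party $p \in S'$ then forwards to $v$ the value it received from $u$, tagged with $u$'s identity. Party $v$ collects the $k$ forwarded values and takes, for each claimed sender, the value appearing a strict majority of the time as the message delivered over the simulated channel (outputting $\bot$ if no majority exists).

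First I would establish the correctness of this simulated channel when both $u$ and $v$ are honest. Since $t_{S'} < k/2$, strictly more than $k/2$ parties in $S'$ are honest at all times; this bound is on the total number of corruptions, so it holds even against an adaptive adversary. An honest $u$ sends the same message $m$ to every party of $S'$, every honest relay forwards exactly $m$ to $v$, and hence $v$ receives $m$ from a strict majority of $S'$, so its majority vote returns $m$ regardless of what the fewer-than-$k/2$ byzantine relays inject. Thus the simulated channel faithfully delivers the messages of honest senders. When $u$ is byzantine I would simply observe that whatever $v$ ends up computing is no more than the adversary could already achieve over a genuine direct channel, where a byzantine $u$ likewise fully controls what $v$ receives, so no guarantee is lost.

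Next I would address synchrony. The simulated channel incurs delay $2\Delta$ (one $\Delta$ for $u \to S'$, one for $S' \to v$) instead of $\Delta$. To present a uniform synchronous fully-connected network to $\Pi$, I would rescale time so that one round of $\Pi$ corresponds to two communication steps of the underlying network, and pad the genuine channels (the $S$--$S'$ edges and any edges inside $S'$) so that all messages are treated as arriving after $2\Delta$. This makes the composite network behave exactly like a synchronous fully-connected network with message delay $2\Delta$, on top of which $\Pi$ runs with its delay parameter set to $2\Delta$.

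Finally I would note the authentication detail that makes the per-sender bucketing sound: each relay $p \in S'$ receives $u$'s message over an authenticated bipartite channel and so knows the true sender, and $v$ receives each forwarded value over an authenticated channel and so knows which relay forwarded it. The honest-majority argument then guarantees correct attribution for honest senders. The main obstacle is precisely this bookkeeping together with the synchrony rescaling: one must verify that the two-hop, majority-based emulation preserves both the authenticity and the round structure that $\Pi$ relies upon. Once this is checked, the lemma follows, since $\Pi$ cannot distinguish the emulated fully-connected network from a real one, so we may indeed assume the parties in $S$ are fully-connected.
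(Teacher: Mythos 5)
Your proposal is correct and follows essentially the same route as the paper: simulate each missing channel inside $S$ by having the sender broadcast to all of $S'$, having relays forward with the sender's identity attached, and having the receiver accept the value forwarded by a strict majority, which works because $t_{S'} < k/2$ guarantees an honest majority of relays. The paper's proof is briefer (it notes only that delivery takes at most $2\Delta$ and that any accepted message was forwarded by at least one honest relay, hence intended by the sender), but your additional remarks on round rescaling and the byzantine-sender case are consistent elaborations rather than a different argument.
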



This provides us with the corollaries below, enabling us to prove the stated conditions to be sufficient when combined with \cref{theo:pki-complete}, which has assumed a fully-connected network.
\begin{corollary}\label{lemma:pki-one-sided}
In a one-sided network, we may assume a fully-connected network if $t_L < k / 2$.
\end{corollary}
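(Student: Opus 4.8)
The plan is to read the corollary off from \cref{lemma:pki-bipartite} by matching the one-sided topology to that lemma's template. Recall that the lemma says a disconnected side $S$ can emulate all of its missing internal channels by having the sender relay each message through every party of the opposite side $S'$ and letting the receiver majority-decode, and that this emulation is faithful precisely when $S'$ has an honest majority, i.e.\ $t_{S'} < k/2$.

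First I would fix the topology. In a one-sided network the parties of $R$ are pairwise connected and every pair in $L \times R$ is connected, whereas the parties of $L$ share no direct channels; hence $L$ is the unique disconnected side, and the intra-$L$ edges are the only ones missing from a fully-connected network. I would then instantiate \cref{lemma:pki-bipartite} with $S = L$ and $S' = R$: any $u, v \in L$ simulate a direct link by having $u$ send to all of $R$, each party of $R$ relay the received value to $v$, and $v$ decode by majority. Since the $R$-$R$ and $L$-$R$ channels are already physically present, adjoining these simulated intra-$L$ channels upgrades the one-sided network into a fully-connected one, on top of which the protocol of \cref{theo:pki-complete} may be run.

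The step I would handle with care is the honest-majority bookkeeping, since this is exactly where the threshold is decided. The parties being majority-voted over are the relays, namely the parties of $R$; hence the hypothesis that guarantees correct decoding is an honest majority on $R$, i.e.\ $t_{S'} = t_R < k/2$, which is precisely condition~(i) of \cref{theorem:main:no-pki-one-sided}. The main pitfall to avoid is attaching the honest-majority requirement to the disconnected side $L$: only the relays get majority-voted over, so the threshold must constrain the fully-connected side $R$ that carries the relayed traffic, not $L$. Combined with \cref{theo:pki-complete} (which needs $t_L < k/3$ or $t_R < k/3$), this recovers exactly the claimed sufficient conditions for the one-sided case.
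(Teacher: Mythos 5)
Your proof is correct and follows exactly the paper's (implicit) route: the corollary is read off from \cref{lemma:pki-bipartite} by instantiating $S = L$ and $S' = R$, observing that in a one-sided network the intra-$L$ channels are the only ones missing, and then running the fully-connected machinery of \cref{theo:pki-complete} on the augmented network. One point deserves emphasis, though: what your argument establishes --- and what the relay-and-majority-vote construction can possibly establish --- is the condition $t_R < k/2$, whereas the corollary as printed reads $t_L < k/2$. Your ``pitfall'' remark is precisely on target: the majority vote is taken over the relays, which live in $R$, so the honest-majority hypothesis must constrain $R$, not the disconnected side $L$. The printed subscript is evidently a typo in the paper: condition~(i) of \cref{theorem:main:no-pki-one-sided} is $t_R < k/2$, and \cref{lemma:pki-4} (lifted via \cref{lemma:reduce-number} and \cref{coro:to-simplified}) shows that $t_R \geq k/2$ is fatal even when $t_L = 0$, so the statement with only $t_L < k/2$ as hypothesis would in fact be false. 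With the subscript corrected to $t_R$, your proof is complete and coincides with the paper's intended one.
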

\begin{corollary}\label{lemma:pki-complete}
In a bipartite network, we may assume a fully-connected network if $t_L, t_R < k/2$.
\end{corollary}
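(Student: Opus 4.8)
The plan is to apply \cref{lemma:pki-bipartite} twice, once for each side. In a bipartite network, the parties within $L$ share no direct communication channels, and the same holds for the parties within $R$; only the cross-side channels in $L \times R$ are present. The goal is to upgrade both missing sets of intra-side channels to simulated ones, thereby recovering a fully-connected network on which \cref{theo:pki-complete} can be invoked.

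First I would apply \cref{lemma:pki-bipartite} with $S = L$ and $S' = R$. Since the hypothesis gives $t_R < k/2$, the lemma lets us simulate all channels within $L$, using the honest majority of $R$ as proxies over the existing $L \times R$ channels. Symmetrically, I would apply the lemma again with $S = R$ and $S' = L$: as $t_L < k/2$, we obtain simulated channels within $R$, routed through the honest majority of $L$. Combining the newly simulated intra-$L$ and intra-$R$ channels with the already-present $L \times R$ channels yields a network in which every pair of parties can communicate, i.e., a fully-connected network.

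The main subtlety, and the step I expect to require the most care, is arguing that the two invocations compose without a circular dependency. The key observation is that each simulation is built solely on top of the \emph{real} bipartite channels: the intra-$L$ simulation forwards messages through $R$ using only genuine $L \times R$ links, and the intra-$R$ simulation does likewise through $L$. Neither simulation consumes the channels produced by the other, so the two can be run in parallel without interference, and the honest-majority guarantees ($t_R < k/2$ for the intra-$L$ simulation and $t_L < k/2$ for the intra-$R$ simulation) hold simultaneously. Since the model is synchronous, a simulated message incurs the two-hop delay $2\Delta$ rather than $\Delta$; this only rescales the round length while preserving synchrony, so the resulting network can be treated exactly as the fully-connected network assumed in \cref{theo:pki-complete}.
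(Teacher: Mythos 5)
Your proposal is correct and matches the paper's intent: the corollary is obtained exactly by applying \cref{lemma:pki-bipartite} once with $S = L$, $S' = R$ (using $t_R < k/2$) and once with $S = R$, $S' = L$ (using $t_L < k/2$), with both simulations running independently over the real cross-side channels. Your added remarks on non-circularity and the $2\Delta$ delay are sound and only make explicit what the paper leaves implicit.
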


To prove that the conditions in \cref{theorem:main:no-pki-bipartite,theorem:main:no-pki-one-sided} are also necessary, we make use of the lemma below. Similarly to the proof of \cref{lemma:pki-6}, our argument includes a technique from \cite{PODC:FisLynMer85} that enables us to reach a contradiction by working with a non-standard system.
\begin{lemma} \label{lemma:pki-4}
Assume a one-sided unauthenticated network and $n = 4$. Then, no protocol achieves $\simplifiedSM$ for $t_L = 0$ and $t_R = 1$.
\end{lemma}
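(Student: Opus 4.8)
The plan is to reuse the shifting-scenarios machinery of the proof of \cref{lemma:pki-6}: I construct a non-standard ``doubled'' one-sided network on $8$ nodes whose communication graph locally looks, to every node, exactly like the real $4$-party network, and then exhibit three indistinguishable executions. All three are valid instances of the real problem (respecting $t_L=0$, $t_R=1$), but their combination forces two honest parties to output the same party, contradicting non-competition.

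Denote the original parties by $L=\{a,b\}$ and $R=\{u,v\}$ and split each into two copies, obtaining $\{a_1,a_2,b_1,b_2\}\cup\{u_1,u_2,v_1,v_2\}$. I wire the cross edges as the single cycle $a_1-u_1-b_1-v_1-a_2-u_2-b_2-v_2-a_1$ and add the two same-side edges $u_1-v_1$ and $u_2-v_2$. One checks that every $L$-copy is then adjacent to exactly one $u$-copy and one $v$-copy, and every $R$-copy is adjacent to one $a$-copy, one $b$-copy, and one same-side copy; thus each node's neighborhood is an exact relabeling of its original neighborhood, so the doubled graph is a legitimate blow-up of the one-sided $n=4$ network. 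I fix a single execution $\mathcal{E}$ of $\Pi$ on this graph in which $a_2,v_1$ are mutual favorites and $b_2,v_2$ are mutual favorites (all other inputs arbitrary), and write $o(\cdot)$ for the outputs produced in $\mathcal{E}$.

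The core of the argument is three real $4$-party executions, each with one byzantine party in $R$ that internally runs $\mathcal{E}$ and relays, on each of its channels, exactly the messages the corresponding simulated copy would send in $\mathcal{E}$. In the first, the honest parties play $\{a_2,b_1,v_1\}$ while the byzantine $R$-party simulates the other five copies; since this is a valid instance and $a_2,v_1$ are honest mutual favorites, simplified stability forces $a_2$ to output $v_1$, hence $o(a_2)=v_1$. Symmetrically, taking honest $\{a_1,b_2,v_2\}$ yields $o(b_2)=v_2$. In the final execution the honest parties play $\{a_2,b_2,u_2\}$: here $a_2$ and $b_2$ share the honest $R$-party playing $u_2$ (their unique common neighbor in $\mathcal{E}$), while their other $R$-neighbors, the copies $v_1$ (for $a_2$) and $v_2$ (for $b_2$), both collapse onto the \emph{single} byzantine $R$-party, which presents itself as $v_1$ to $a_2$ and as $v_2$ to $b_2$. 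Since each honest party's local view is identical to $\mathcal{E}$, all three reproduce their $\mathcal{E}$-outputs, so $a_2$ and $b_2$ both output the same (byzantine) party, breaking non-competition and contradicting that $\Pi$ solves $\simplifiedSM$.

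The main obstacle is engineering the doubled graph so that all three scenarios are simultaneously genuine one-sided $4$-party networks with $t_L=0,\,t_R=1$ and perfectly indistinguishable from $\mathcal{E}$ for every honest party. The delicate case is the final scenario, where one byzantine party must consistently impersonate two \emph{different} copies ($v_1$ and $v_2$) toward the two honest $L$-parties while the honest relay playing $u_2$ --- which I do not control --- must itself still reproduce its $\mathcal{E}$-behavior. This works precisely because $a_2$ and $b_2$ have exactly one common $R$-neighbor in $\mathcal{E}$ (namely $u_2$) and distinct $v$-neighbors, so the honest relay plus the single byzantine relay suffice to recreate both views. Verifying indistinguishability is then a routine round-by-round induction, which I would state but not belabor.
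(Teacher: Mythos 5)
Your proof is correct and follows essentially the same shifting-scenarios construction as the paper: an $8$-node covering graph, two mutual-favorite pairs, and three indistinguishable executions whose combination forces two honest $L$-parties to output the same byzantine $R$-party, violating non-competition. The only difference is cosmetic: the paper first argues that the $R$-$R$ edge is useless (since with exactly one corruption in $R$ the honest $R$-party knows its neighbor is byzantine) and then works with a pure $8$-cycle, whereas you fold the edges $u_1$-$v_1$ and $u_2$-$v_2$ directly into the covering graph, which works equally well.
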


\begin{proof}
Write $a, b$ for the nodes in $L$ and $c, d$ for the nodes in $R$. All nodes are connected in the communication network except $a$ and $b$. We will prove that no such protocol exists even in the no-harder setting where \emph{exactly} one party in $R$ is corrupted, which we henceforth assume.

It will suffice to prove the impossibility for the bipartite network case, i.e., without the edge $c$-$d$.
In particular, we claim that messages sent across this edge cannot be helpful. To see this intuitively, recall our assumption that \emph{exactly} one party in $R$ is byzantine, say $d$. Party $c$ knows that $d$ is byzantine, meaning any messages received from $d$ could be completely arbitrary. Therefore, $c$ may as well simulate receiving them by replacing them with a default value. Henceforth, we assume that the communication network is bipartite.

Assume for a contradiction that $\Pi$ is a protocol achieving $\simplifiedSM$ for $n = 4$ parties $L = \{a, b\}$ and $R = \{c, d\}$ in a bipartite unauthenticated network given that no party in $L$ is corrupted and \emph{exactly} one party in $R$ is corrupted. 

The key insight in our proof is that the bipartite communication network actually forms the undirected cycle $a$-$c$-$b$-$d$-$a$. We construct a larger system by duplicating each party and linking them into a cycle twice as long: $a_1$-$c_1$-$b_1$-$d_1$-$a_2$-$c_2$-$b_2$-$d_2$-$a_1$. This is depicted in the first row of \cref{fig:cycle}.
We consider running $\Pi$ in this setting by running the protocol used for $a$ on $a_1$ and $a_2$, the protocol used for $b$ on $b_1$ and $b_2$, and so on. We will now assign favorites (inputs) to the vertices: we make $a_1$ and $c_1$ each other's favorites and $b_2$ and $c_2$ each other's favorites. Other vertices are assigned favorites arbitrarily. We will show that by running protocol $\Pi$ in this setting, we get a contradiction.

\begin{figure}[t]
\centering
\includegraphics[scale=0.49]{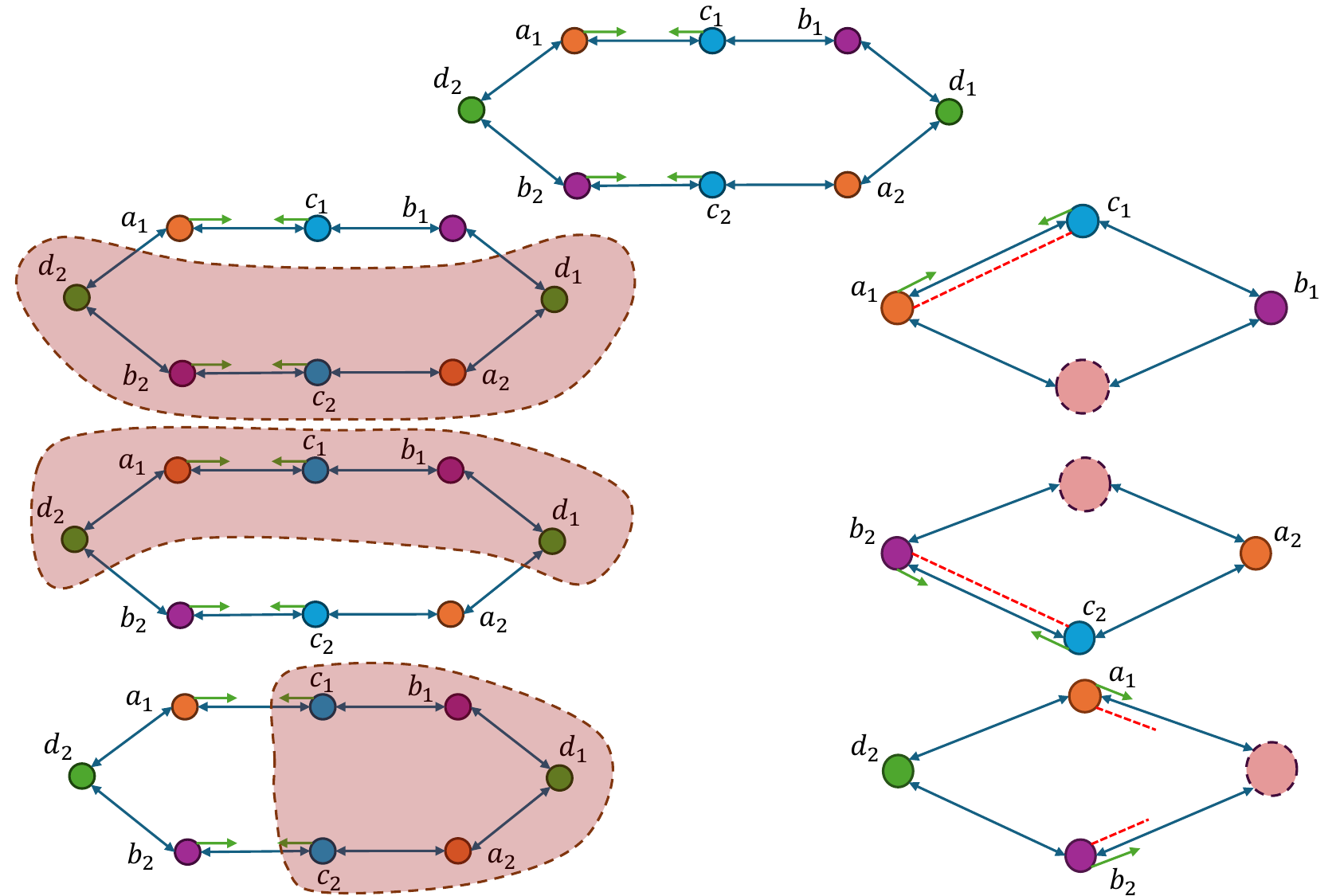}
\setlength{\belowcaptionskip}{-10pt}
\caption{The eight parties in a cycle.
In the first case, $a_1$ and $c_1$ are both honest and each other's favorites. Therefore, they must match each other. In the second case, we similarly get that $b_2$ and $c_2$ must match. Finally, in the last case, we get that both $a_1$ and $b_2$ try to match party $c$ which is byzantine, but the non-competition property does not allow this.}
\label{fig:cycle}
\end{figure}

First (second row in \cref{fig:cycle}), we consider the case where $a_1$, $c_1$ and $b_1$ are honest while $d$ is byzantine and simulating $d_1$-$a_2$-$\dots$-$d_2$. In this case, because $a_1$ and $c_1$ are honest and both each other's favorites, by simplified stability, they must match each other.

By symmetry (third row in \cref{fig:cycle}), $b_2$ and $c_2$ must match each other.

Last (forth row in \cref{fig:cycle}), we consider the case where $b_2$, $d_2$ and $a_1$ are honest while $c$ is byzantine and simulating $c_1$-$b_1$-$\dots$-$c_2$. In this case, $a_1$ and $b_2$ are both honest and both decide to match the same party $c$ (by the previous two cases), which is prohibited by non-competition, giving us a contradiction.
\end{proof}

We conclude the section by presenting the proofs of \cref{theorem:main:no-pki-bipartite}, giving the conditions for bipartite networks, and \cref{theorem:main:no-pki-one-sided}, providing the conditions for one-sided networks.
\begin{proof}[Proof of \cref{theorem:main:no-pki-bipartite}]
\cref{lemma:pki-bipartite} implies that the bipartite communication model is weaker than the one-sided communication model, and therefore \cref{theorem:main:no-pki-one-sided} enables us to conclude that the conditions in our Theorem's statement are necessary. Note that, for the condition $t_L < k / 2$ we need to apply \cref{lemma:pki-bipartite} for $S = L$ and $S' = R$, while for $t_R < k / 3$ we need $S = R$ and $S' = L$. For sufficiency, the condition $t_L, t_R < k/2$ enables us to apply \cref{lemma:pki-complete} and assume a fully-connected network. Afterwards, \cref{theo:pki-complete} ensures that $\byzantineSM$ is solvable.
\end{proof}

\begin{proof}[Proof of \cref{theorem:main:no-pki-one-sided}]
\cref{lemma:pki-4} proves that for $n = 4$ parties, $\simplifiedSM$ cannot be achieved in this setting if $t_R \geq 1$.
Then, using \cref{lemma:reduce-number}, this proves that for an arbitrary $n$, $\simplifiedSM$ cannot be achieved if $t_R \geq k/2$ in a one-sided network. Therefore, using \cref{coro:to-simplified}, this proves that $t_R < k/2$ is necessary for $\byzantineSM$ as well.
When $t_R < k / 2$ holds, \cref{lemma:pki-complete} enables us to assume a fully-connected network. Then, we may conclude using \cref{theo:pki-complete} that it is both necessary and sufficient that one of the conditions $t_L < k/3$, $t_R < k/2$ holds.
\end{proof}

\section{Solvability in Authenticated Settings}

We now consider authenticated settings. Assuming digital signatures will enable us to solve $\byzantineSM$ up to much higher corruption thresholds in contrast to the unauthenticated case. This section is organized similarly to Section \ref{section:no-pki}: we first analyze the fully-connected network case, and afterwards focus on one-sided and bipartite networks.

\subsection{Fully-Connected Network} 
In the fully-connected network case, $\byzantineSM$ is always solvable: we utilize the Dolev-Strong protocol \cite{DolStr83}, which achieves $\bb$ resilient against $t < n$ corruptions assuming PKI. Then, \cref{lemma:broadcast-easy} directly implies the theorem below.
\begin{theorem}\label{theo:with-pki-complete}
$\byzantineSM$ is solvable in a fully-connected authenticated network.
\end{theorem}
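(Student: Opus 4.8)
The plan is to invoke the result that has been carefully set up throughout the preliminaries, namely \cref{lemma:broadcast-easy}, which states that whenever $\bb$ is available, $\byzantineSM$ is solvable. Hence the entire task reduces to exhibiting a $\bb$ protocol that works in a fully-connected authenticated network with no restriction on the corruption threshold, i.e.\ tolerating up to $t < n$ byzantine parties (equivalently any $t_L, t_R$ with $t_L \leq k$ and $t_R \leq k$). The classical tool for this is the Dolev-Strong protocol \cite{DolStr83}, which is explicitly designed for the authenticated (PKI) setting and achieves $\bb$ against an arbitrary number of corruptions $t < n$. So the first step is simply to cite Dolev-Strong and note that our fully-connected authenticated network satisfies exactly its assumptions: pairwise authenticated channels, synchrony (allowing the protocol to run in rounds up to $t+1$), and a PKI with unforgeable signatures.

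First I would recall why Dolev-Strong suffices here. In that protocol, the sender signs its value and floods it; over $t+1$ rounds each honest party accumulates the set of values accompanied by a chain of $i$ distinct valid signatures by round $i$, re-signing and forwarding anything newly accepted. The key invariants are: (Validity) if the sender is honest, no adversary can forge the $t+1$-deep signature chain needed to make honest parties accept a second value, so all honest parties output $v_\sender$; and (Consistency) any value accepted by one honest party by the final round is relayed with an additional signature and thus accepted by every other honest party, since with at most $t$ corruptions a chain of $t+1$ distinct signatures must contain one from an honest party who would have forwarded it. Crucially, these guarantees hold for \emph{any} $t < n$, which in our notation covers every admissible pair $(t_L, t_R)$ since at least the sender-relevant honesty is never assumed.

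Having established $\bb$, the final step is a one-line application: run $\bb$ so that each party broadcasts its preference list, after which \cref{lemma:broadcast-easy} immediately yields a protocol solving $\byzantineSM$. Concretely, every party reliably disseminates its preferences via its own Dolev-Strong instance; consistency and validity guarantee that all honest parties obtain identical views of every broadcast value, and hence of the full (possibly adversarially-chosen but now common) preference profile. They then run $\GaleShapley$ locally on this common input to obtain the same stable matching, and termination, symmetry, stability and non-competition follow from the correctness of $\GaleShapley$ together with the agreement provided by $\bb$, exactly as argued in the proof of \cref{lemma:broadcast-easy}.

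I do not anticipate a genuine obstacle here, as the theorem is essentially a corollary: the substantive work is entirely outsourced to the Dolev-Strong protocol and to \cref{lemma:broadcast-easy}. The only point meriting a moment's care is confirming that the fully-connected authenticated model matches Dolev-Strong's hypotheses and that its $t < n$ resilience translates to the absence of any constraint on $t_L$ and $t_R$ in our two-sided corruption model; this is immediate since Dolev-Strong makes no structural assumption on \emph{which} parties are corrupted, only on their total number being strictly less than $n = 2k$, which always holds as the sender itself might be the sole honest party with all others corrupted.
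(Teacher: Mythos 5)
Your proposal is correct and follows exactly the paper's route: the paper also proves this theorem by citing the Dolev-Strong protocol for $\bb$ with resilience $t < n$ under PKI and then applying \cref{lemma:broadcast-easy}. The additional detail you supply on the signature-chain invariants and on the reduction is consistent with the paper's (much terser) argument.
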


\subsection{Bipartite and One-Sided Networks}\label{subsection:bipartite-pki}
The one-side and bipartite communication models offer more interesting restrictions, as described by the theorems below. 
Note that these results imply that, given PKI, $\byzantineSM$ is solvable even when one side is fully byzantine, i.e. $t_R = k$. This may seem counter-intuitive, as the honest parties' communication graph may be completely disconnected. However, we need to highlight that, if one side is completely byzantine, the $\byzantineSM$ definition allows the honest parties to simply match with nobody: any partial matching that satisfies non-competition suffices.
\begin{theorem} \label{thm:bipartite-pki-main}
$\byzantineSM$ is solvable in a bipartite authenticated network if at least one of these conditions holds: (i) $t_L, t_R < k$; (ii) $t_L < k/3$ or $t_R < k/3$.
\end{theorem}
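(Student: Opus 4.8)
The statement is a sufficiency claim with two alternative hypotheses, so the plan is to exhibit a protocol for each, selecting the appropriate one according to the (known) thresholds $t_L$ and $t_R$. The cornerstone observation is that, \emph{unlike} the unauthenticated setting where simulating a same-side channel requires an honest majority of relays (\cref{lemma:pki-bipartite}), digital signatures let two honest parties on the same side communicate reliably as soon as \emph{a single} honest party survives on the opposite side to act as a relay: the sender signs its message and forwards it across the cut, each relay re-forwards it, and the receiver discards anything lacking a valid signature. An honest relay guarantees delivery, while unforgeability guarantees authenticity.

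First I would dispatch condition (i), $t_L, t_R < k$. Here each side contains at least one honest party, hence at least one honest relay for the opposite side, so the observation above equips every honest party with reliable authenticated channels to all other honest parties (across the cut the channels are already direct). This reconstitutes, among the honest parties, a synchronous fully-connected authenticated network, over which the Dolev--Strong protocol \cite{DolStr83} achieves $\bb$ against the $t_L + t_R < n$ corruptions. Invoking \cref{lemma:broadcast-easy} then solves $\byzantineSM$, with timing handled by lengthening each round to absorb the extra relay hop.

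The substance of the theorem lies in condition (ii); using the symmetry of the bipartite network I would assume $t_R < k/3$. If in addition $t_L < k$, condition (i) applies and we are done, so the genuinely new regime is $t_L = k$: the adversary \emph{may} corrupt all of $L$, yet could instead leave honest parties in $L$ that must still be matched correctly, and the protocol cannot tell at runtime which occurred, while the honest majority of $R$ (of size $> 2k/3$) may be entirely disconnected. On top of the relaying primitive I would build a \emph{simulated fully-connected network with omissions} for side $R$: an honest $R$-to-$R$ message reaches all honest recipients whenever at least one honest relay survives in $L$, and may be dropped only in the single case that \emph{every} party of $L$ is byzantine. The plan is then to run, over this simulated network, a matching protocol in which an honest party of $R$ commits to a match only upon assembling a super-majority certificate (size $> 2k/3$) of signatures from $R$, and otherwise outputs ``nobody''.

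Correctness splits along the same dichotomy. When at least one honest relay remains in $L$, the simulated network is in fact reliable, so $R$ enjoys consistent communication together with an honest super-majority; the parties therefore agree on all disseminated preferences and on a single $\GaleShapley$ matching, yielding termination, symmetry and stability. When $L$ is entirely byzantine, omissions may occur, but then no honest party of $L$ exists and ``nobody'' is a safe output. The main obstacle --- the step I would spend the most care on --- is non-competition in this last regime: I must guarantee that two honest parties of $R$ never commit to the same $L$ party despite adversarial, non-uniform omissions. This I expect to obtain from quorum intersection: two certificates of size $> 2k/3$ overlap in more than $k/3 > t_R$ parties of $R$, hence in an honest one, which would have had to endorse two conflicting matches --- impossible. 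Pinning down this threshold against $t_R < k/3$, and verifying that the fewer-than-$k/3$ byzantine parties of $R$ can never forge a certificate on their own, is where the argument must be made rigorous.
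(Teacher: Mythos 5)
Your proposal is correct and follows essentially the same route as the paper: condition (i) via signature-based relaying through a single honest opposite-side party plus Dolev--Strong and \cref{lemma:broadcast-easy}, and condition (ii) via a simulated fully-connected network with omissions on the lightly-corrupted side, agreement on the preference profile, local $\GaleShapley$, and a safe fallback to ``nobody'' when the other side is entirely byzantine. Your quorum-certificate argument for non-competition is the same quorum-intersection mechanism the paper packages inside the final round of its weak-agreement $\Pi_{\ba}$ (output $z$ only if $z$ is echoed by $k-t$ parties, else $\bot$), merely applied to the output match rather than to the agreed inputs.
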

\begin{theorem} \label{thm:one-sided-pki-main}
$\byzantineSM$ is solvable in a one-sided authenticated  network if $t_R < k$ or $t_L < k / 3$.
\end{theorem}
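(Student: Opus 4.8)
The plan is to establish only the sufficiency direction (the statement is an ``if''), by splitting the hypothesis into the two overlapping regimes $t_R < k$ and $t_L < k/3$, whose union is exactly the stated condition. In each regime I would reduce the one-sided authenticated setting to a situation already handled earlier in the paper, so that no new stand-alone protocol is needed for the main line of argument.

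First, suppose $t_R < k$. Then at least $k - t_R \geq 1$ parties in $R$ are never corrupted (an adaptive adversary has total budget $t_R$ on that side), so a permanently honest relay always exists; moreover, in a one-sided network $R$ is already fully connected to itself and to $L$, and the only absent channels are those inside $L$. I would simulate each such channel using signatures with $R$ as a relay: to send $m$ to $v \in L$, a party $u \in L$ signs $m$ and sends $\langle m \rangle_u$ to every party in $R$, each of which forwards it to $v$, who accepts only correctly signed messages. Unforgeability ensures that byzantine relays can neither fabricate a message on behalf of an honest $u$ nor make an honest $u$ equivocate (an honest $u$ signs only the single message it actually sends), while the permanently honest relay guarantees delivery within two rounds. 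This yields a faithful synchronous fully-connected authenticated network up to a constant round blow-up, so \cref{theo:with-pki-complete} applies and $\byzantineSM$ is solvable.

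Second, suppose $t_L < k/3$. Here I would simply invoke \cref{thm:bipartite-pki-main}: a one-sided network contains every channel of a bipartite network (and more), so any protocol solving $\byzantineSM$ in the bipartite authenticated setting also solves it in the one-sided setting under the same thresholds --- honest parties ignore the surplus $R$--$R$ channels, and byzantine traffic on them is harmless. Since $t_L < k/3$ is condition (ii) of \cref{thm:bipartite-pki-main}, solvability follows. Taking the two regimes together covers ``$t_R < k$ or $t_L < k/3$'', completing the argument.

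The genuinely hard part lives inside the second regime (and hence inside the proof of \cref{thm:bipartite-pki-main}): the subcase $t_R = k$, where all of $R$ may be byzantine and the honest parties, now all in $L$, are pairwise disconnected with every message routed through an adversarial $R$. The relay simulation above no longer guarantees delivery, so it degrades to a fully-connected network \emph{with omissions}: honest messages are either delivered intact or dropped, never forged or made to equivocate. Two observations rescue this case. First, when all of $R$ is byzantine, stability and symmetry are vacuous for the honest parties (they all lie in $L$ and matches are cross-side), so only termination and non-competition remain to be secured. Second, non-competition can be enforced by quorum intersection: an honest party outputs a partner only after collecting signatures from more than $2k/3$ parties of $L$ on a common matching, and since fewer than $k/3$ of $L$ are byzantine, any two such quorums overlap in more than $k/3 > t_L$ parties and thus share an honest signer, forcing the two matchings to coincide; parties failing to collect a quorum safely output nobody. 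This omission-network-plus-quorum argument is precisely what underlies \cref{thm:bipartite-pki-main}, which I would rely on rather than reprove here, and it is the step I expect to be the main obstacle.
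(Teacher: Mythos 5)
Your proposal is correct, and the first regime ($t_R < k$: simulate the missing $L$--$L$ channels by signing and relaying through the at least one permanently honest party in $R$, then invoke \cref{theo:with-pki-complete}) is exactly the paper's argument via \cref{lemma:with-pki-one-sided} and \cref{coro:one-sided-complete}. For the second regime you diverge from the paper's written proof in a way that is actually to your credit. The paper handles the residual case $t_R = k$, $t_L < k/3$ by citing \cref{theorem:main:no-pki-one-sided}, but that theorem requires $t_R < k/2$ and so does not literally cover $t_R = k$; your route --- observing that a one-sided network contains a bipartite one and invoking condition (ii) of \cref{thm:bipartite-pki-main}, i.e., the omission-tolerant protocol $\Pi_{\byzantineSM}$ of \cref{lemma:protocol-omissions} --- is the reduction that actually closes this case, and your sketch of why it works (when all of $R$ is byzantine only termination and non-competition are at stake, and non-competition follows from a quorum-intersection/weak-agreement argument among the $L$-parties in a network with omissions) matches the machinery the paper develops for the bipartite setting. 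One small caution: in the relay simulation you should tag messages with timestamps and identifiers as in \cref{lemma:magic-omissions-new}, since a signature alone does not prevent a byzantine relay from replaying an old signed message in a later round; this is a detail the paper's \cref{lemma:with-pki-one-sided} also elides, so it does not distinguish your argument from theirs.
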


\paragraph{Sufficient conditions.}
In the following, we first focus on showing that the conditions described by the theorems above are sufficient.
In the unauthenticated setting, our conditions enabled us to provide the parties in the disconnected side with full communication. In turn, this allowed us to prove sufficiency by reducing the one-sided/bipartite network cases to the fully-connected network case. Assuming signatures enables us to proceed similarly: we may provide parties in side $S$ with complete communication \emph{whenever side $S'$ (which may or may not be connected) contains at least one honest party}. Roughly, the parties in $S$ will send signed messages to parties in $S'$, and the (honest) parties in $S'$ will forward these messages. Parties in $S$ will then accept messages signed correctly. We present the formal proof in Appendix \ref{appendix:bipartite-pki}.

\begin{restatable}{lemma}{PKIOneSided}\label{lemma:with-pki-one-sided}
Denote the two sides $L$ and $R$ by $S$ and $S'$.
If the parties in side $S$ are disconnected, $t_{S'} < k$, and the network is authenticated, we may assume the parties in $S$ are fully connected. 
\end{restatable}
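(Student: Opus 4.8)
The plan is to emulate each missing communication channel \emph{within} side $S$ by relaying signed messages through side $S'$, mirroring the proxy construction of \cref{lemma:pki-bipartite} but replacing the majority vote with signature verification, which is what lets us weaken the requirement from an honest majority on $S'$ to a single honest party. Concretely, whenever a party $u \in S$ wishes to send a message $\msg$ to another party $v \in S$ in some simulated round $r$, I would have $u$ compute a signature $\signature$ on the tuple $(u, v, r, \msg)$ and send $(u, v, r, \msg, \signature)$ to \emph{every} party in $S'$. Each honest party in $S'$, upon receiving such a tuple, simply forwards it to $v$. Party $v$ accepts $\msg$ as having been sent by $u$ in round $r$ precisely when it receives a tuple whose signature verifies against $u$'s public key and whose tags $(u,v,r)$ match; if several validly-signed tuples arrive, $v$ resolves the ambiguity by a fixed canonical rule (e.g.\ taking the lexicographically smallest $\msg$), and if none arrives it defaults to ``no message''. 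Since relaying adds one hop, I would let each simulated round of the fully-connected protocol span two rounds of the actual network, which preserves synchrony because message delays are bounded by $\Delta$.

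I would then verify that this emulation endows the honest parties in $S$ with exactly the guarantees of an authenticated channel. For \emph{reliability}, the hypothesis $t_{S'} < k$ guarantees at least one honest party in $S'$; hence when $u$ and $v$ are both honest, that honest relay forwards the correctly-signed tuple on time, so $v$ receives and accepts $\msg$ within the two-round window regardless of how the byzantine relays behave. For \emph{authenticity}, unforgeability of the signature scheme ensures that no coalition of byzantine parties --- relays in $S'$ or parties in $S$ --- can produce a tuple $(u, v, r, \msg', \signature')$ that verifies under an honest $u$'s key unless $u$ actually signed it; thus $v$ never accepts a message that an honest $u$ did not send, and the $(u,v,r)$ tagging rules out cross-round replay or misattribution.

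It then remains to argue that the byzantine behavior permitted by the emulation does not exceed that of the real fully-connected network, so that every guarantee of a fully-connected protocol carries over. The only genuinely new capability is that a \emph{byzantine} sender $u$ might get several distinct validly-signed messages delivered to a single honest $v$; but the canonical tie-breaking rule collapses these to one message, and delivering any one message signed by $u$ (or none) to $v$ is exactly what a byzantine $u$ could do over a direct channel in the fully-connected model. Likewise, a byzantine $u$ feeding different messages to different honest receivers in $S$ is already allowed there. Consequently, every execution of the emulated network maps to an execution of a genuine fully-connected authenticated network with the same honest inputs and outputs, so we may run any protocol designed for the fully-connected case over this emulation, concluding that we may assume the parties in $S$ are fully connected.

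The main obstacle I anticipate is precisely this last faithfulness argument: one must check carefully that a byzantine relay in $S'$ cannot induce views among honest parties in $S$ that are \emph{inconsistent} in a way no byzantine sender could already cause, and that the metadata tagging is rich enough (sender, receiver, round) to prevent any confusion across the emulated channels. Getting the timing bookkeeping right --- so that the two-round-per-simulated-round schedule genuinely preserves the lock-step structure assumed by the fully-connected protocol --- is the other point that needs care, though it is standard.
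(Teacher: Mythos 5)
Your proposal is correct and follows essentially the same route as the paper's proof: relay signed messages through side $S'$, note that $t_{S'} < k$ guarantees one honest forwarder so delivery happens within $2\Delta$, and rely on unforgeability so that any accepted message was genuinely signed by the claimed sender. The paper's version is terser (it does not spell out the round-tagging or the tie-breaking for a byzantine sender who signs several messages), but the underlying construction and argument are the same.
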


The next corollaries follow from \cref{lemma:with-pki-one-sided}. Note that Corollary~\ref{coro:one-sided-complete} enables us to conclude that $t_R < k$ is a sufficient condition for $\byzantineSM$ in a one-sided network with PKI. Similarly, Corollary~\ref{coro:with-pki-complete} implies that $t_L, t_R < k$ is a sufficient condition for $\byzantineSM$ in a bipartite authenticated network.
\begin{corollary}\label{coro:one-sided-complete}
In a one-sided authenticated network, we may assume a fully-connected authenticated network if $t_R < k$.
\end{corollary}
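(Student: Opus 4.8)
The plan is to obtain this as an immediate instantiation of \cref{lemma:with-pki-one-sided}, whose abstract two-sided formulation was designed precisely to cover this case. First I would recall the defining feature of a one-sided network: every party in $R$ shares a communication channel with every other party (both within $R$ and across to $L$), whereas the parties in $L$ have no channels among themselves. Hence $L$ is exactly the disconnected side, while $R$ is already fully connected to everyone.

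Next I would apply \cref{lemma:with-pki-one-sided} with the substitution $S = L$ and $S' = R$, checking its three hypotheses in turn: (i) the parties in $S = L$ are disconnected, which holds by the definition of a one-sided network; (ii) $t_{S'} = t_R < k$, which is exactly the hypothesis of the corollary (equivalently, at least one party in $R$ is honest); and (iii) the network is authenticated, since we are in a one-sided authenticated network. The lemma then supplies simulated authenticated channels within $L$, so we may treat the parties in $L$ as being fully connected to one another.

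Finally I would observe that once $L$ is internally fully connected, the entire network is fully connected: the parties in $R$ were already connected among themselves and to all of $L$, the $L$--$R$ channels are present in a one-sided network by assumption, and the newly simulated $L$--$L$ channels complete the communication graph. Because these simulated channels inherit the authenticated guarantees from the construction underlying \cref{lemma:with-pki-one-sided}, the resulting model is a fully-connected \emph{authenticated} network, as claimed.

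I do not expect any genuine obstacle here, as all of the technical content is deferred to \cref{lemma:with-pki-one-sided}. The only point meriting a moment's care is confirming that the single honest party in $R$ guaranteed by $t_R < k$ is sufficient to faithfully relay the signed messages used to emulate each missing $L$--$L$ channel, so that the emulated channels respect the synchronous round structure (up to a known delay); but this is precisely what the cited lemma establishes, so nothing further is required.
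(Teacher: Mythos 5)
Your proposal is correct and matches the paper's argument: the corollary is obtained as a direct instantiation of \cref{lemma:with-pki-one-sided} with $S = L$ and $S' = R$, noting that $L$ is the disconnected side in a one-sided network and that $t_R < k$ guarantees the honest relay needed for the simulated $L$--$L$ channels. Nothing further is required.
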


\begin{corollary}\label{coro:with-pki-complete}
In a bipartite authenticated network, we may assume a fully-connected authenticated network if $t_L, t_R < k$.
\end{corollary}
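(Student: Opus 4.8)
The plan is to obtain the corollary as a direct, twofold application of \cref{lemma:with-pki-one-sided}. In a bipartite network, \emph{both} sides are internally disconnected: the only communication channels present are those in $L \times R$. Hence, to upgrade to a fully-connected network, it suffices to simulate the missing internal edges within $L$ and, separately, the missing internal edges within $R$ — the $L \times R$ edges being already available. Each of these two tasks is exactly what \cref{lemma:with-pki-one-sided} supplies, provided the opposite side retains at least one honest party, which is precisely what the hypotheses $t_L, t_R < k$ guarantee.

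Concretely, I would first invoke \cref{lemma:with-pki-one-sided} with $S = L$ and $S' = R$: since $L$ is disconnected in the bipartite network and $t_R < k$ ensures that $R$ keeps at least one honest party, the lemma lets me assume that the parties in $L$ are fully connected. Symmetrically, I would invoke the lemma a second time with $S = R$ and $S' = L$; here $R$ is disconnected and $t_L < k$ guarantees an honest party in $L$, so I may assume the parties in $R$ are fully connected as well. Combining the two simulated internal cliques with the $L \times R$ edges already present yields a fully-connected authenticated network, as desired.

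The one point that warrants care — and the place I expect to be the main (mild) obstacle — is that the two simulations must coexist without interfering, since both route their auxiliary traffic over the same bipartite $L$-$R$ channels: the first uses the honest parties in $R$ as forwarders for messages between parties in $L$, while the second uses the honest parties in $L$ as forwarders for messages between parties in $R$. This is unproblematic precisely because each simulated internal channel is authenticated: a receiver accepts a forwarded message solely on the basis of a valid signature from the claimed sender, so a single honest forwarder suffices and no majority vote is needed, in contrast to the unauthenticated \cref{lemma:pki-bipartite}. Crucially, the forwarding in each direction relies only on the $L$-$R$ channels and never on internal edges of the forwarding side, so the fact that the proxy side is itself disconnected poses no difficulty; the two message streams can simply be multiplexed over the bidirectional channels. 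I therefore expect the argument to be essentially immediate once \cref{lemma:with-pki-one-sided} is in hand.
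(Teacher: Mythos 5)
Your proposal is correct and matches the paper's treatment: the paper derives Corollary~\ref{coro:with-pki-complete} directly from a double application of \cref{lemma:with-pki-one-sided} (once with $S=L$, $S'=R$ using $t_R<k$, and once with $S=R$, $S'=L$ using $t_L<k$), exactly as you do. Your additional observation that the two forwarding schemes coexist because each relies only on the physical $L$--$R$ channels is a sound (and slightly more careful) articulation of why the combination is unproblematic.
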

Note that, for the one-sided network case we may already conclude that the conditions presented in \cref{thm:one-sided-pki-main} are sufficient. The condition $t_R < k$ follows from Corollary \ref{coro:one-sided-complete}, which enables us to reduce this case to the fully-connected authenticated setting described by Theorem~\ref{theo:with-pki-complete}. Moreover, the condition $t_L < k / 3$ is sufficient due to \cref{theorem:main:no-pki-one-sided}, which states that  $\byzantineSM$ can be solved even in an unauthenticated one-sided network when $t_L < k / 3$.

For the bipartite network case, the condition $t_L < k$ and $t_R < k$ being sufficient follows directly from Corollary \ref{coro:with-pki-complete}. Showing that having $t_L < k/3$ or $t_R < k/3$ is also sufficient introduces, however, different challenges. From this point on, we may assume without loss of generality that $t_L < k / 3$, which implies that the side $R$ may be fully byzantine. Note that this may cause the honest parties in $L$ to be completely disconnected.

While we cannot assume that the parties in $L$ are in a fully-connected network, we will be able to assume that they are \emph{in a fully-connected network with omissions}: a message may either be received within $2 \cdot \Delta$ units of time, or it is never delivered. Moreover, omissions occur \emph{only} if all parties in side $R$ are byzantine. We achieve this using the following strategy: whenever a party in $P \in L$ needs to send a message $\msg$ to a party in $P' \in R$, it sends the \emph{signed} message $\msg' := (P \rightarrow P', \timestamp, \msgId, \msg)$ to all parties in $R$, where $\msgId$ is a message identifier and $\timestamp$ is the time when $\msg'$ is sent. The (honest) parties in $R$ forward the signed message $\msg'$ to $P'$. $P'$ accepts this message only if the signature is valid and at most $2 \cdot \Delta$ time has passed since time $\tau$. We add that, as byzantine parties cannot forge signatures on the honest parties' behalf, this ensures reliable communication (up to omissions).

We may then design a protocol in the bipartite network case as follows: we (attempt) to provide the parties in $L$ with all parties' preferences lists. The parties in $L$ will run $\GaleShapley$ locally, which enables them to obtain their own matches and inform the parties in $R$ about their matches. Due to the forwarding mechanism, we may assume that the parties in $L$ are in a fully-connected network where omissions only occur if all parties in $R$ are byzantine.

To provide the honest parties in $L$ with identical views over the preferences' list, we rely on two building blocks: a synchronous $\bb$ protocol $\Pi_{\bb}$, and a synchronous \emph{Byzantine Agreement} ($\ba$) protocol $\Pi_{\ba}$. We recall the definition of $\ba$ below. 
\begin{definition}[Byzantine Agreement]\label{def:ba}
	Let $\Pi$ be a protocol where every party holds a value as input. 
    We say that $\Pi$ achieves $\ba$ if the following  hold even when up to $t$ parties are corrupted:
    \begin{itemize}[nosep]
    \item \emph{(Termination)} All honest parties output and terminate;
    \item \emph{(Validity)} If all honest parties hold the same input value $v$, they output $v$.
    \item \emph{(Agreement)} All honest parties output the same value.
    \end{itemize}
\end{definition}

The potential for omissions will require us to enhance these protocols by adding a few properties when omissions occur: termination, and \emph{weak agreement}, described below. Note that we do not require any validity condition.

\vspace{0.1cm}
\noindent \emph{(Weak agreement)}: If $P$ and $P'$ are honest and output $v \neq \bot$ and $v' \neq \bot$ respectively, $v = v'$.
\vspace{0.1cm}


The theorems below describe our building blocks. $\Pi_{\ba}$ is obtained by making adjustments to the protocol of \cite{King}, and $\Pi_{\bb}$ is a simple reduction to $\Pi_{\ba}$. For constructions, see Appendix~\ref{appendix:sync-and-omission}. 
\begin{restatable}{theorem}{BAWithOmissions}\label{thm:ba-omissions}
    Assume the $k$ parties in $L$ are in a fully-connected synchronous network with delay $\Delta$. If $t_L < k/3$, there is a $k$-party protocol $\Pi_{\ba}$ achieving $\ba$ within $\Delta_{\ba}( \Delta)$ time. Moreover, if omissions occur, $\Pi_{\ba}$ still achieves weak agreement and termination within $\Delta_{\ba}(\Delta)$ time.
\end{restatable}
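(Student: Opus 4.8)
The plan is to start from the King protocol of \cite{King}, which already achieves $\ba$ for $t_L < k/3$ in a synchronous fully-connected network, and to augment it with a single \emph{confirmation round} that outputs $\bot$ whenever a party cannot safely commit. Concretely, I would first run the King protocol verbatim, implementing every communication round as a fixed-length timeout of $\Delta$: a party sends its messages, waits exactly $\Delta$, and then proceeds using whatever it has received (missing messages are treated as absent). Because the protocol is driven purely by these timeouts rather than by waiting for specific messages, it halts after a fixed number of rounds, so termination is automatic and holds verbatim even under omissions. Setting $\Delta_{\ba}(\Delta)$ to the total number of rounds (the $O(t_L)$ King rounds plus the extra confirmation round) times $\Delta$ then bounds the running time in both regimes.

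Next I would establish correctness in the omission-free regime. Here every honest message is delivered within $\Delta$, so the King protocol behaves exactly as analyzed in \cite{King} and leaves all honest parties holding a common value $x$ (agreement), equal to the common input in the all-equal case (validity). The confirmation round then works as follows: each party broadcasts its value $x$, and outputs $v$ if it receives at least $k - t_L$ confirmations equal to $v$, and $\bot$ otherwise. Since all honest parties hold the same $x$ and reliably confirm it, each party collects at least $k - t_L$ confirmations of $x$ and outputs $x$; conversely, no party can collect $k - t_L > t_L$ confirmations of a different value, as only byzantine parties could send one. This preserves validity and agreement, completing the $\ba$ guarantee.

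The crux is weak agreement under omissions, where the adversary (controlling all of the opposite side) may drop messages arbitrarily, so the King protocol's internal guarantees collapse and honest parties may finish the King phases holding wildly different values. I would argue that the threshold $k - t_L$ alone forces consistency, via a quorum-intersection argument. Suppose honest $P$ outputs $v \neq \bot$ and honest $P'$ outputs $v' \neq \bot$; then $P$ received $v$-confirmations from a set $Q$ with $|Q| \geq k - t_L$, and $P'$ received $v'$-confirmations from a set $Q'$ with $|Q'| \geq k - t_L$. Since $|Q \cap Q'| \geq 2(k - t_L) - k = k - 2t_L > t_L$ (using $t_L < k/3$), the intersection contains an honest party $x$. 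As $x$ broadcasts a single value to everyone and omissions can only drop, never alter, messages, the value $P$ counted from $x$ equals the value $P'$ counted from $x$, giving $v = v'$. The main obstacle to flag is precisely obtaining this for the full range $t_L < k/3$: the naive counting argument (disjoint sets of honest confirmers of $v$ and of $v'$) only yields $k > 4t_L$, and it is the tighter intersection argument with quorums of size $k - t_L$ that recovers optimal resilience. I would also double-check that each King round is genuinely timeout-driven so that omissions never cause a party to block, securing termination within $\Delta_{\ba}(\Delta)$ in both regimes.
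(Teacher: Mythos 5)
Your proposal is correct and follows essentially the same route as the paper: run the King protocol with hard timeouts (so termination survives omissions), then append one confirmation round in which a party outputs $z$ only upon receiving $k - t_L$ matching confirmations and outputs $\bot$ otherwise, with weak agreement following from the fact that any two such quorums share an honest party (the paper phrases this as the complementary count that at most $2t_L < k - t_L$ parties could confirm a conflicting value, which is the same calculation). No gaps.
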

\begin{restatable}{theorem}{BBWithOmissions}\label{thm:bb-omissions}
Assume the $k$ parties in $L$ are in a fully-connected synchronous network with delay $\Delta$. If $t_L < k/3$, there is a $k$-party protocol $\Pi_{\bb}$ achieving $\bb$ within $\Delta_{\bb}(\Delta)$ time. Moreover, if omissions occur, $\Pi_{\bb}$ still achieves weak agreement and termination within $\Delta_{\bb}(\Delta)$ time.
\end{restatable}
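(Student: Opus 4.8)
The plan is to reduce $\bb$ to $\ba$ via the standard one-round transformation and then argue that all required properties---both in the omission-free case and under omissions---transfer cleanly from $\Pi_{\ba}$ (supplied by \cref{thm:ba-omissions}) to the resulting protocol. Concretely, $\Pi_{\bb}$ proceeds in two phases. In the first phase, the sender $S$ sends its value $v_S$ to every party in $L$ (including itself); each party $P$ records the value $v_P$ it receives, setting $v_P := \bot$ if nothing arrives within the network delay. In the second phase, all parties invoke $\Pi_{\ba}$ on their recorded values $v_P$ and output whatever $\Pi_{\ba}$ returns. Since the first phase costs one message delay and the second costs $\Delta_{\ba}(\Delta)$, we may set $\Delta_{\bb}(\Delta) := \Delta + \Delta_{\ba}(\Delta)$.

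Next I would verify the three $\bb$ properties in the omission-free regime. Termination is immediate from the termination guarantee of $\Pi_{\ba}$. For consistency, note that honest parties output exactly the value returned by $\Pi_{\ba}$, so the agreement property of $\Pi_{\ba}$ yields that all honest outputs coincide. For validity, if $S$ is honest then, in the absence of omissions, every honest party receives $v_S$ in the first phase and thus enters $\Pi_{\ba}$ with input $v_S$; the validity property of $\Pi_{\ba}$ then forces every honest party to output $v_S$. A byzantine sender may hand honest parties differing (or $\bot$) inputs, but this is harmless: validity is only required when $S$ is honest, and consistency is covered by agreement regardless of the sender's behavior.

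Finally I would treat the omission regime, where the only point deserving attention is checking that the weaker guarantees survive the reduction intact. Since honest parties still output precisely the value returned by $\Pi_{\ba}$, termination under omissions follows directly from the termination-under-omissions clause of \cref{thm:ba-omissions}. For weak agreement, suppose two honest parties $P, P'$ output $v \neq \bot$ and $v' \neq \bot$; these are exactly their $\Pi_{\ba}$-outputs, so the weak-agreement property of $\Pi_{\ba}$ gives $v = v'$, as required. The reduction introduces no new source of disagreement---each party's sole output is an unmodified $\Pi_{\ba}$ output---and a dropped first-phase message merely yields the admissible input $v_P = \bot$, which triggers none of the stronger omission-free guarantees. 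I do not expect a genuinely hard step here: the entire argument is a direct property-by-property transfer, and the main (mild) obstacle is simply being careful that the omission-case properties of $\Pi_{\ba}$ propagate unchanged rather than requiring any validity-type guarantee that $\Pi_{\ba}$ deliberately does not provide under omissions.
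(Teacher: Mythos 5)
Your proposal is correct and follows essentially the same route as the paper: the sender disseminates its value in one round, the parties then run $\Pi_{\ba}$ on what they received, and every $\bb$ property (and, under omissions, weak agreement and termination) transfers directly from \cref{thm:ba-omissions}, with $\Delta_{\bb}(\Delta) = \Delta + \Delta_{\ba}(\Delta)$. The only cosmetic difference is that the paper substitutes a default preference list rather than $\bot$ when the sender's message does not arrive, which avoids overloading $\bot$ (reserved as the special ``omissions detected'' output of $\Pi_{\ba}$) as an ordinary input value; this does not affect the correctness of your argument for the stated theorem.
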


We present the code of our protocol below.

\begin{protocolbox}{$\Pi_{\byzantineSM}$}
\algoHead{Code for party $P \in R$ with input $\preferences$}
\begin{algorithmic}[1]
\State Whenever you receive a properly signed message $\msg' = (P'' \rightarrow P', \timestamp, \msgId, \msg)$ from $P'' \in L$, forward the signed message to $P' \in L$.
\State Send your preference list $\sigma$ to every party in $L$.
\State At time $\max(\Delta_{\ba}(2\Delta) + \Delta, \Delta_{\bb}(2\Delta)) + \Delta$:
\State \hspace{0.5cm} $M_p :=$ matching suggestions received from parties in $L$.
\State \hspace{0.5cm} Decide to match according to the most common suggestion \Statex \hspace{0.5cm} in $M_p$ (breaking ties arbitrarily). 
\end{algorithmic}
\algoHead{Code for party $P \in L$ with input $\preferences$}
\begin{algorithmic}[1]
\State $\msgId := 0$. Whenever you need to send a message $\msg$ to $P' \in L$, let $\timestamp :=$ the current time. Send the signed message $(P \rightarrow P', \timestamp, \msgId, \msg)$ to all parties in $R$ and increment $\msgId$.
\State In parallel:
\State \hspace{0.5cm} Send $\pi$ to all parties via $\Pi_{\bb}$. Let $\sigma_\ell$ denote the list received \Statex \hspace{0.5cm} via $\Pi_{\bb}$ from party $P_{\ell} \in L$.
\State \hspace{0.5cm} Wait $\Delta$ time to receive preference lists from parties in $R$. 
\Statex \hspace{0.5cm} Join an invocation of $\Pi_{\ba}$ for every party $P_r$ in $R$: with 
\Statex \hspace{0.5cm} input $\preferences_r$ if you have received $\preferences_r$ from $P_r$, and with a default
\Statex \hspace{0.5cm} preference list otherwise. Obtain outputs $\sigma_r$.
\State At time $\max(\Delta_{\ba}(2 \Delta) + \Delta, \Delta_{\bb}(2 \Delta))$:
\State \hspace{0.5cm} If any value in $(\sigma_v)_{v \in L \cup R}$ is $\bot$:
\State \hspace{1cm} Decide to match with nobody and terminate.
\State \hspace{0.5cm} Run $\GaleShapley$ locally with input $\left( (\sigma_l)_{l \in L}, (\sigma_r)_{r \in R} \right)$, and
\Statex \hspace{0.5cm} obtain output $M$.
\State \hspace{0.5cm} Send to each party $P_r \in R$ whom they should match to
\Statex \hspace{0.5cm} according to $M$.
\State \hspace{0.5cm} Decide who to match to according to $M$.
\end{algorithmic}
\end{protocolbox}

The next lemma states the guarantees of $\Pi_{\byzantineSM}$.
\begin{lemma}\label{lemma:protocol-omissions}
    $\Pi_{\byzantineSM}$ achieves $\byzantineSM$ in a  bipartite authenticated network if $t_L < k/3$.
\end{lemma}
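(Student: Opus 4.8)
The plan is to verify the four $\byzantineSM$ properties (termination, symmetry, stability, non-competition) by splitting the analysis according to whether side $R$ contains at least one honest party, since this is precisely what determines whether the signed-message forwarding among the parties in $L$ suffers omissions. I would first record the effect of the forwarding mechanism: it gives the parties in $L$ a \emph{simulated} fully-connected synchronous network with delay $2\Delta$, because whenever some party in $R$ is honest, every message between two honest parties in $L$ is relayed within $2\Delta$, and unforgeability of signatures prevents byzantine parties in $R$ from forging or tampering with such messages; hence no omissions occur in this case. This places us exactly in the no-omission regime of \cref{thm:ba-omissions} and \cref{thm:bb-omissions}, invoked with delay $2\Delta$, which is also why the parties in $L$ finalize at time $\max(\Delta_{\ba}(2\Delta)+\Delta,\,\Delta_{\bb}(2\Delta))$ and the parties in $R$ one $\Delta$ later; a short separate check confirms the remaining $+\Delta$ slack suffices for $L$'s preference lists to reach the $\Pi_{\ba}$ invocations and for $L$'s matching suggestions to reach $R$ before its decision time.

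Case 1 (some party in $R$ is honest). Here $\Pi_{\bb}$ and $\Pi_{\ba}$ enjoy their full guarantees, so no invocation outputs $\bot$ and all honest parties in $L$ obtain an identical vector of preference lists: validity of $\Pi_{\bb}$ (for $L$'s lists) and $\Pi_{\ba}$ (for $R$'s lists) ensures the entry of an honest party equals its true list, while consistency/agreement ensures the entries of byzantine parties coincide across honest parties in $L$. Consequently every honest party in $L$ runs $\GaleShapley$ on the same input and computes the \emph{same} stable matching $M$. I would then use $t_L<k/3$, which leaves strictly more than $2k/3$ honest parties in $L$, to argue that the plurality vote performed by each honest $P_r\in R$ returns $M(P_r)$: all honest parties in $L$ send the suggestion $M(P_r)$, and they strictly outnumber the byzantine parties. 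Termination is then immediate from the fixed decision times; symmetry and non-competition follow because the honest outputs coincide with $M$, which is a genuine matching; and stability follows because $M$ is stable with respect to the consistent lists, which agree with the true lists on all honest parties.

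Case 2 (all of $R$ is byzantine). Now the only honest parties lie in $L$, so stability and symmetry hold vacuously (a blocking pair, or a mutual honest match, would require an honest party in $R$), and termination is again immediate. The crux is non-competition, for which I would rely on \emph{weak agreement} together with the $\bot$-bailout. For any two honest $P,P'\in L$: if either sees a $\bot$ in some invocation, it decides to match nobody, so no conflict can arise; otherwise both have all-non-$\bot$ outputs, and weak agreement forces these outputs to agree invocation-by-invocation, so $P$ and $P'$ feed identical inputs to $\GaleShapley$, compute the same $M$, and output the distinct parties $M(P)\neq M(P')$.

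The main obstacle I anticipate is Case 2: one must establish non-competition \emph{without} being able to guarantee any globally consistent matching, leaning exactly on the $\bot$-bailout and the weak-agreement property (rather than full agreement), while simultaneously observing that the loss of stability and symmetry there is harmless precisely because no honest party survives on side $R$. A secondary point requiring care is the timing bookkeeping introduced by the $2\Delta$ forwarding delay and the extra $\Delta$ wait for $R$'s lists, which must be reconciled with the decision deadlines appearing in both $L$'s and $R$'s code.
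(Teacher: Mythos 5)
Your proposal is correct and follows essentially the same route as the paper: the paper likewise first establishes the forwarding mechanism as a simulated fully-connected network with delay $2\Delta$ and omissions only when all of $R$ is byzantine (\cref{lemma:magic-omissions-new}), then splits into the all-of-$R$-byzantine case handled via weak agreement and the $\bot$-bailout (\cref{lemma:bdsm-protocol:omissions}) and the some-honest-party-in-$R$ case handled via full $\ba$/$\bb$ guarantees plus the majority vote using $k-t_L>t_L$ (\cref{lemma:bdsm-protocol:no-omissions}). No substantive differences.
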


We split the proof of \cref{lemma:protocol-omissions} into three lemmas. First, \cref{lemma:magic-omissions-new} describes the communication among the parties in $L$ in $\Pi_{\byzantineSM}$, allowing us to assume that the parties in $L$ are in a fully-connected network where omissions may only occur if all parties in $R$ are byzantine. Under this assumption, \cref{lemma:bdsm-protocol:omissions} shows that $\byzantineSM$ is achieved when no party in $R$ is honest, and \cref{lemma:bdsm-protocol:no-omissions} shows that $\byzantineSM$ is achieved when $R$ contains at least one honest party.
The proof of the next lemma is enclosed in Appendix  \ref{appendix:bipartite-pki}.
\begin{restatable}{lemma}{MagicOmissionsNew}\label{lemma:magic-omissions-new}
We may assume the parties in $L$ are in a fully-connected network with maximum delay $2 \cdot \Delta$  where omissions occur only if all parties in $R$ are byzantine.
\end{restatable}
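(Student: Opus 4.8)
The plan is to argue that the forwarding mechanism built into $\Pi_{\byzantineSM}$ faithfully simulates, from the viewpoint of the honest parties in $L$, a fully-connected network over $L$ with the claimed delay and omission behaviour. I would prove two complementary directions: first, that honest-to-honest messages are delivered reliably and on time unless $R$ is entirely byzantine; and second, that the adversary gains no power beyond what it already has in a genuine fully-connected network with omissions, which is exactly what unforgeability of signatures buys us.

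For the delivery direction, fix two honest parties $P, P' \in L$ and suppose $P$ needs to send $\msg$ to $P'$ at time $\timestamp$. By the protocol, $P$ sends the signed message $\msg' = (P \rightarrow P', \timestamp, \msgId, \msg)$ to every party in $R$. Assume first that $R$ contains at least one honest party $Q$. By synchrony, $Q$ receives $\msg'$ by time $\timestamp + \Delta$ and immediately forwards it to $P'$, who receives it by time $\timestamp + 2\Delta$. Since $P$ is honest, the signature on $\msg'$ is valid, and the timing check passes because at most $2\Delta$ has elapsed since $\timestamp$; hence $P'$ accepts $\msg$ within $2\Delta$ of it being sent, with no omission. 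If instead every party in $R$ is byzantine, no honest relay is guaranteed, so $P'$ may never receive $\msg'$ --- an omission --- but this is precisely the case the lemma permits.

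For the no-extra-power direction, I would invoke the unforgeability of the signature scheme: a byzantine party (in $R$ or $L$) cannot produce a message of the form $(P \rightarrow P', \cdot, \cdot, \cdot)$ carrying a valid signature of an honest $P$ that $P$ did not actually send. Therefore $P'$ never accepts a spurious message attributed to an honest sender, so the only messages crossing a simulated honest-to-honest channel are genuine. The timestamp field bounds the delay of every accepted message to $2\Delta$ (a byzantine relay cannot make $P'$ accept a stale message, since the $2\Delta$ window would have expired), and the identifier $\msgId$ lets $P'$ deduplicate and order the messages from each sender, so replays by byzantine relays are harmless. Messages originating from byzantine parties in $L$ remain unconstrained, but this matches exactly the behaviour of a byzantine party in a real fully-connected network with omissions, where timing guarantees on corrupted senders are vacuous anyway.

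Putting the two directions together, every honest party in $L$ sees a communication pattern indistinguishable from that of a genuine fully-connected network with maximum delay $2\Delta$ in which omissions can arise only when all of $R$ is byzantine; this justifies henceforth treating $L$ as fully connected. The main obstacle I anticipate is the second direction: one must carefully verify that the simulation hands the adversary no capability it would lack in the idealized network --- in particular, that unforgeability precisely rules out the single dangerous event (a forged message attributed to an honest party) while the timestamp-and-identifier bookkeeping neutralizes stale and duplicated deliveries, so that selective omissions and arbitrary byzantine-originated messages in the simulated network are all already realizable in a fully-connected network with omissions.
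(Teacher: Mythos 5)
Your proposal is correct and follows essentially the same route as the paper's proof: the same signed-message-with-timestamp-and-identifier relay through $R$, the same case split on whether $R$ contains an honest forwarder (guaranteeing delivery within $2\Delta$) versus all of $R$ being byzantine (permitting omissions), and the same appeal to unforgeability to rule out spurious or stale accepted messages. Your write-up is merely more explicit than the paper's about the ``no extra adversarial power'' direction, which the paper dispatches in one sentence.
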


\begin{lemma}\label{lemma:bdsm-protocol:omissions}
    If every party in $R$ is byzantine and $t_L < k / 3$, $\Pi_{\byzantineSM}$ achieves $\byzantineSM$.
\end{lemma}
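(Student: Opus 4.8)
The plan is to verify the four $\byzantineSM$ properties (termination, symmetry, stability, non-competition) under the hypotheses that every party in $R$ is byzantine and $t_L < k/3$. The first observation is that, since all of $R$ is corrupted, every honest party lies in $L$. Because a matching only ever pairs parties across the two sides, no honest party can output another honest party and no blocking pair can consist of two honest parties; hence symmetry and stability are vacuously satisfied, and only termination and non-competition require real work. To reason about the parties in $L$, I would first invoke \cref{lemma:magic-omissions-new} to treat them as sitting in a fully-connected network with maximum delay $2 \cdot \Delta$, keeping in mind that, since all of $R$ is byzantine, omissions may genuinely occur here (so the plain $\ba$/$\bb$ guarantees are unavailable and we must rely on the omission-tolerant ones).

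For termination I would appeal to the omission-tolerant guarantees of the building blocks: by \cref{thm:ba-omissions,thm:bb-omissions} (whose hypothesis $t_L < k/3$ is exactly our assumption), each invocation of $\Pi_{\ba}$ and $\Pi_{\bb}$ terminates within $\Delta_{\ba}(2\Delta)$ and $\Delta_{\bb}(2\Delta)$ time respectively even under omissions. Consequently, by the hard deadline $\max(\Delta_{\ba}(2\Delta)+\Delta, \Delta_{\bb}(2\Delta))$, every honest party in $L$ holds an output (possibly $\bot$) for each of the $2k$ invocations and then deterministically decides --- matching nobody if any output is $\bot$, and otherwise matching according to the locally computed $\GaleShapley$ output. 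Either branch produces an output, so termination holds.

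The crux is non-competition, and here the interplay between the explicit $\bot$-check and weak agreement is the key step. I would partition the honest parties in $L$ into those that observe some $\bot$ among their $2k$ outputs and those that do not. The former decide to match nobody and therefore never contribute to a competition violation. For the latter, every one of their outputs is non-$\bot$; applying weak agreement (from \cref{thm:ba-omissions,thm:bb-omissions}) invocation by invocation shows that any two such parties agree on every $\sigma_v$, so they hold identical views $((\sigma_l)_{l\in L},(\sigma_r)_{r\in R})$. Running the deterministic $\GaleShapley$ (\cref{theorem:gale-shapley}) on identical inputs yields the same matching $M$ for all of them; since $M$ is a matching, distinct parties in $L$ are assigned distinct partners in $R$, so no two honest parties ever decide on the same party. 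Combining the two cases yields non-competition, completing the verification of all four properties.

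I expect the main obstacle to be arguing non-competition cleanly, precisely because weak agreement is strictly weaker than agreement: it constrains two honest outputs only when \emph{both} are non-$\bot$, so a naive argument would break if one honest party output a real list while another output $\bot$ on the same invocation. The design point that rescues this --- and which the proof must make explicit --- is that the protocol's $\bot$-check forces any party with even a single $\bot$ to abort to ``match nobody'', so that only parties with fully non-$\bot$ (hence, by weak agreement, mutually identical) views ever run $\GaleShapley$. Verifying that this abort rule exactly bridges the gap between weak agreement and the full view-consistency needed for a common matching is the heart of the argument.
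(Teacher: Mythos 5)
Your proposal is correct and follows essentially the same route as the paper's proof: reduce to a fully-connected network with omissions via \cref{lemma:magic-omissions-new}, get termination from the omission-tolerant guarantees of $\Pi_{\ba}$ and $\Pi_{\bb}$, observe that symmetry and stability are vacuous since all honest parties lie in $L$, and derive non-competition from weak agreement plus the determinism of $\GaleShapley$. Your explicit case split between parties that see some $\bot$ (and match nobody) and those with fully non-$\bot$ views is a slightly more careful rendering of the same argument the paper gives.
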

\begin{proof}
According to \cref{lemma:magic-omissions-new}, the parties in $L$ run $\Pi_{\ba}$ and $\Pi_{\bb}$ run in a fully-connected network with omissions. As a consequence, the weak agreement and termination properties hold according to \cref{thm:ba-omissions} and \cref{thm:bb-omissions}: if the parties receive non-$\bot$ outputs, then these outputs are consistent.

Since $\Pi_{\ba}$ and $\Pi_{\bb}$ achieve termination, $\Pi_{\byzantineSM}$ achieves termination as well.


Because all parties in $R$ are byzantine, symmetry and stability are immediate: these properties concern two honest parties on opposite sides, which never happens here because one side is fully byzantine. We note that some honest parties in $L$ may have received $\bot$ and decided to match with nobody but this still results in a stable matching for this specific setting.

 As for non-competition, weak agreement guarantees that the honest parties who have obtained preference lists in each of the $\Pi_{\ba}$ and $\Pi_{\bb}$ invocations run $\GaleShapley$ with the same input. \cref{theorem:gale-shapley} ensures that these parties obtain the same matching $M$. Therefore, we conclude that $\Pi_{\byzantineSM}$ achieves $\byzantineSM$ whenever all parties in $R$ are byzantine. 


\end{proof}

\begin{lemma}\label{lemma:bdsm-protocol:no-omissions}
    If $R$ contains an honest party and $t_L < k/3$, $\Pi_\byzantineSM$ achieves $\byzantineSM$.
\end{lemma}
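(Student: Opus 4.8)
The plan is to verify that each of the four $\byzantineSM$ properties holds, using the communication abstraction from \cref{lemma:magic-omissions-new} and the fact that $R$ now contains at least one honest party. Since $R$ has an honest party, this honest party correctly forwards signed messages among the parties in $L$, so by \cref{lemma:magic-omissions-new} no omissions occur: the parties in $L$ effectively run $\Pi_{\bb}$ and $\Pi_{\ba}$ in a genuine fully-connected synchronous network with delay $2\Delta$. By \cref{thm:bb-omissions} and \cref{thm:ba-omissions}, the full (non-weak) guarantees hold. In particular all outputs $\sigma_v$ are non-$\bot$, so no honest party in $L$ aborts at line~7; and by the agreement/consistency properties, all honest parties in $L$ obtain \emph{identical} vectors $\left((\sigma_l)_{l \in L}, (\sigma_r)_{r \in R}\right)$ of preference lists. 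Running the deterministic $\GaleShapley$ (\cref{theorem:gale-shapley}) on this common input, every honest party in $L$ computes the \emph{same} stable matching $M$.

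\textbf{Termination.} Since $\Pi_{\bb}$ and $\Pi_{\ba}$ terminate and every step executes at a fixed time bound, each honest party in $L$ decides by line~10, and each honest party in $R$ decides at its fixed deadline. Thus all honest parties output.

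\textbf{Non-competition.} This is where the common matching $M$ does the work. Honest parties in $L$ each output their own partner under the single shared matching $M$, so no two of them can name the same party. An honest party in $R$ decides by majority vote over the suggestions it receives from parties in $L$; since all honest parties in $L$ send suggestions consistent with $M$, and (as I argue below) the honest parties in $L$ form a majority of the senders that matter, the majority vote returns exactly $M$'s assignment for that party. Because $M$ is a matching, the resulting partners named by distinct honest parties are distinct, giving non-competition.

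\textbf{Symmetry and stability.} These concern pairs of honest parties on opposite sides, i.e.\ an honest $P \in L$ and an honest $P_r \in R$. The key step --- and the one I expect to be the main obstacle --- is arguing that the majority vote at an honest $P_r$ recovers precisely $M(P_r)$, so that $P_r$'s decision is consistent with what the honest $L$-parties computed. The subtlety is that byzantine parties in $L$ may send $P_r$ spurious suggestions, so I must ensure the honest suggestions for $M$ strictly outnumber any competing value. Since $t_L < k/3$, there are more than $2k/3$ honest parties in $L$, all sending the same $M$-consistent suggestion, which dominates any value a byzantine coalition of size at most $t_L < k/3$ can push. Given this, symmetry follows: if honest $P$ decides to match honest $P_r$ then $(P,P_r)\in M$, and $P_r$'s majority vote yields $M(P_r)=P$. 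Stability then follows because $M$ is a stable matching over the true preference lists of all honest parties: by validity of $\Pi_{\bb}$ the honest $L$-parties' lists enter $M$ correctly, and an honest $P_r$'s true list reaches every honest $P\in L$ within $\Delta$ and is agreed upon via $\Pi_{\ba}$, so no blocking pair among honest parties can exist in $M$. Assembling these four checks completes the proof that $\Pi_{\byzantineSM}$ achieves $\byzantineSM$ when $R$ contains an honest party and $t_L < k/3$.
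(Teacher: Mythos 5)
Your proof is correct and follows essentially the same route as the paper's: no omissions occur because $R$ has an honest party, so $\Pi_{\ba}$ and $\Pi_{\bb}$ deliver full agreement and validity, all honest parties in $L$ compute the same stable matching $M$ via $\GaleShapley$, and the majority vote at each honest party in $R$ recovers its $M$-partner since the honest parties in $L$ outnumber the byzantine ones. The only cosmetic difference is that you invoke the bound $t_L < k/3$ (more than $2k/3$ honest) where the paper only needs the weaker $k - t_L > t_L$; both suffice.
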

\begin{proof}
Parties in $L$ are in a fully-connected network with no omissions according to \cref{lemma:magic-omissions-new}, hence $\Pi_{\ba}$ and $\Pi_{\bb}$ achieve respectively $\ba$ according to \cref{thm:ba-omissions} and $\bb$ according to Theorem~\ref{thm:bb-omissions}.
Consequently, all honest parties run $\GaleShapley$ locally on the same input due to agreement and termination. Moreover, the validity properties of $\Pi_{\ba}$ and $\Pi_{\bb}$ ensure that honest parties' preference lists are received correctly and used as input in the local run of $\GaleShapley$.
We therefore obtain that the honest parties in $L$ run the same instance of $\GaleShapley$ locally, and every honest party's preference list in $\GaleShapley$ is the same as its original input. Therefore, the stable matching $M$ computed by $\GaleShapley$ also satisfies our $\byzantineSM$ definition.

The last step is to prove that every honest party decides according to $M$. This is immediate for honest parties in $L$. As for honest parties in $R$, they decide according to the most common option sent by parties in $L$. Since $k - t_L > t_L$ of the parties in $L$ are honest, each party in $R$ receives its match in $M$ as the majority option, and decides on this match.
Therefore, all honest parties decide according to $M$. Consequently $\Pi_{\byzantineSM}$ achieves $\byzantineSM$ whenever $R$ contains at least one honest party.
\end{proof}


\paragraph{Necessary Conditions.} We still need to show that the conditions presented in \cref{thm:bipartite-pki-main} and \cref{thm:one-sided-pki-main} are necessary.
We write our proof for one-sided communication, and the bipartite network case will be a corollary.
\begin{lemma}\label{lemma:with-pki-one-sided-impossible}
If $t_R = k$ and $t_L \geq k/3$, then achieving $\byzantineSM$ is a one-sided network is impossible.
\end{lemma}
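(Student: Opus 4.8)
The plan is to mirror the impossibility argument of \cref{lemma:pki-6}, using the shifting/duplication technique of \cite{PODC:FisLynMer85}, but specialized to the one-sided topology and to the regime where the entire side $R$ is byzantine. By \cref{coro:to-simplified} it suffices to show that $\simplifiedSM$ is unsolvable, and I would aim to establish the base case $k = 3$, $t_L = 1$, $t_R = 3$ (so that $t_L \geq k/3$ and $t_R = k$), reducing the general case to it. Here one must take care that the reduction keeps side $R$ fully byzantine: this is automatic from \cref{lemma:reduce-number} with $d = 3$ when $3 \mid k$ (it yields $t_R' = \lfloor k / (k/3) \rfloor = 3 = k'$ and $t_L' \geq 1$), and for the remaining residues I would instead partition $L$ directly into three byzantine-eligible groups of size $\le \lceil k/3 \rceil \le t_L$ and run the same argument at the level of these groups. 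The conceptual starting point is that $t_R = k$ lets the adversary corrupt all of $R$, and since parties in $L$ communicate only through $R$, the honest parties in $L$ become mutually disconnected; with signatures this yields authenticated channels subject to adversarial omissions, where byzantine $R$ can drop any $L$-to-$L$ message or relay a genuinely signed one, but can never forge.

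For the base case I would run $\Pi$ on a duplicated, non-standard network of $12$ nodes obtained by taking two copies $x_1, x_2$ of each of $a,b,c,u,v,w$, wiring them so that the local neighborhood of every node matches the one-sided topology (each $L$-copy seeing three $R$-copies, the $R$-copies forming the doubled clique), mirroring the duplication in the proof of \cref{lemma:pki-6} (cf.\ \cref{fig:no-pki-fully-model}). I would assign inputs so that $c_1, v_1$ are mutual favorites and $a_2, v_2$ are mutual favorites, all other inputs arbitrary. A priori nothing is guaranteed about this run, but I would extract three genuine $6$-party executions, each indistinguishable to a chosen set of honest parties from part of the $12$-node run: one in which $a_2, v_2$ are honest and hence (by simplified stability) match each other; a symmetric one in which $c_1, v_1$ match each other; and a third in which $a_2$ and $c_1$ are both honest and, being individually unable to distinguish their situation from the first two executions, both output the single party $v$ (the merged copy of $v_1$ and $v_2$), violating non-competition.

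The point that makes $t_R = k$ the right threshold for this argument is how the simulators must be placed: because $L$ has no internal edges, the adversary has to channel the entire duplicated world toward the honest $L$-parties through $R$, so each forcing scenario spends essentially the whole $R$-budget on simulation (using $t_R = k$) together with one byzantine party in $L$ (using $t_L \ge 1$, i.e.\ $t_L \ge k/3$ in the base case). This is precisely the resource that the fully-connected argument of \cref{lemma:pki-6} obtained for free from the $L$-$L$ edges, and that the one-sided topology withholds unless $R$ is entirely corruptible. The main obstacle, and the step I would be most careful about, is reconciling the duplication with the unforgeability of signatures: the adversary must never fabricate an honest party's signature. This is resolved exactly as in \cref{lemma:pki-6,lemma:pki-4} --- every message a byzantine simulator feeds to an honest party is a genuinely signed message produced by the corresponding honest copy in the $12$-node execution, so the attack only ever \emph{replays} authentic signatures rather than forging them. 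I would finally check that the two forced matchings survive the omission behaviour of byzantine $R$: this holds because each forced pair consists of one party in $L$ and one honest party in $R$, which share a direct channel, so simplified stability applies regardless of whether any $L$-to-$L$ relay is dropped.
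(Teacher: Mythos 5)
There is a genuine gap in your construction, and it is decisive. Your three forcing scenarios, copied from the template of \cref{lemma:pki-6}, each corrupt only \emph{one} party of $R$ (e.g.\ in your third scenario the byzantine set is $\{b, v\}$, so $t_R = 1 < k = 3$). But by \cref{lemma:with-pki-one-sided} and \cref{theo:with-pki-complete}, $\byzantineSM$ \emph{is} solvable in a one-sided authenticated network whenever $t_R < k$, for any $t_L$: with at least one honest party in $R$, the parties in $L$ obtain a reliable fully-connected authenticated network and can run Dolev--Strong. So a correct protocol exists in exactly the regime your scenarios live in, and no contradiction can be extracted from them. Concretely, the step that fails is the signature step you flagged: in your first scenario, where $a$ is honest playing $a_2$, the adversary must simulate the other copy $a_1$ of the \emph{honest} party $a$, and $a_1$'s messages (different input, different neighborhood) are not messages the real $a$ ever signed, so they cannot be obtained by replay --- they would have to be forged. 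Your appeal to \cref{lemma:pki-6,lemma:pki-4} does not help, because those proofs are in the \emph{unauthenticated} setting, where no signatures constrain the simulator. A protocol that routes signature chains through the honest members of $R$ (which exist in all three of your scenarios) detects the simulation, which is precisely why the positive result holds there.

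The paper's proof avoids this by never duplicating an honest party. After reducing to $n=6$, $t_L=1$, $t_R=3$, it corrupts $b$ and \emph{all} of $R$; each byzantine party internally simulates two copies of \emph{itself} (signing with its own key, so no forgery is ever needed), and since $L$ has no internal edges, the all-byzantine $R$ can partition the network into two groups $\{a, b_1, u_1, v_1, w_1\}$ and $\{b_2, c, u_2, v_2, w_2\}$, isolating the two honest parties $a$ and $c$ from each other. The two indistinguishable scenarios are then ones in which one of $a$, $c$ is byzantine and simply crashes while everyone else is honest ($t_L = 1$, $t_R = 0$); simplified stability forces $a$ to match $v$ in one and $c$ to match $v$ in the other, and indistinguishability transfers both conclusions to the real execution, violating non-competition. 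Your closing intuition --- that the whole $R$-budget must be spent so that $R$ can withhold the $L$-to-$L$ relays --- is exactly right, but the construction you propose does not implement it; you would need to restructure the argument along the lines above rather than transplant the 12-node duplication of \cref{lemma:pki-6}.
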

\begin{proof}
We assume by contradiction that there is a protocol achieving $\byzantineSM$ in this setting. Using Corollary \ref{coro:to-simplified} and \cref{lemma:reduce-number}, this means that there exists a protocol $\Pi$ which solves $\simplifiedSM$ on $n := 6$ nodes with $t_R = 3$ and $t_L = 1$. We denote the six parties by  $L = \{a,b,c\}$ and $R = \{u,v,w\}$, and assume that $b$ and all parties in $R$ are byzantine. In the following, we fix an input configuration, and we describe an adversarial strategy that breaks the guarantees of $\Pi$ in this setting: honest parties $a$ and $c$ will match with the same byzantine party $v$ in $R$, hence breaking non-competition.

\begin{figure}[h]
\centering
\includegraphics[scale=0.55]{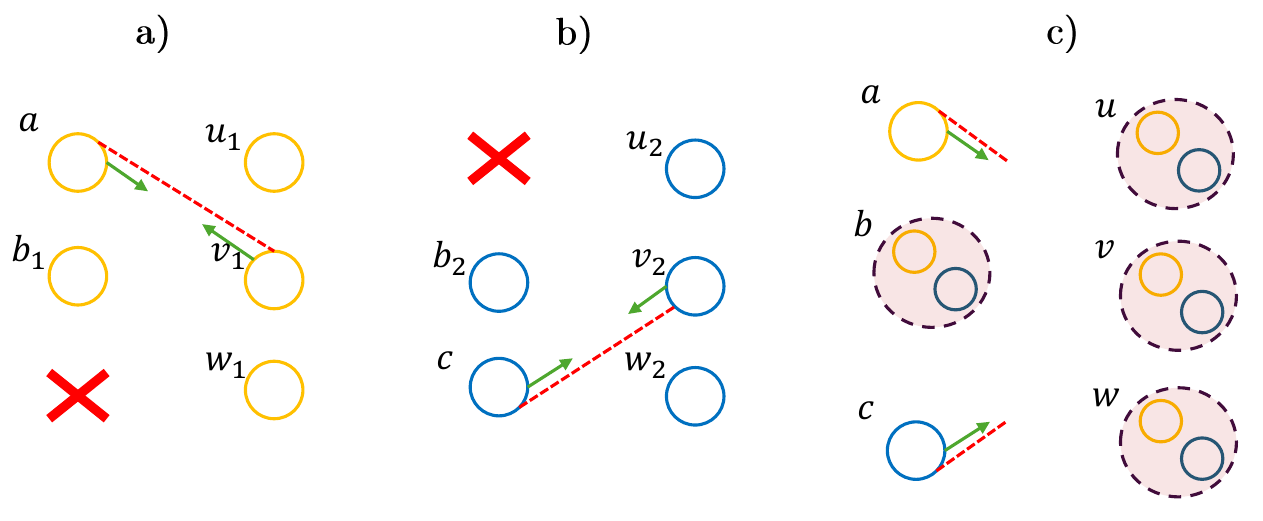}
\caption{a) From the point of view of $a$, all parties are honest except $c$ (which crashed), simplified stability guarantees that $a$ matches with $v$. b) From the point of view of $c$, all parties are honest except $a$ (which crashed), simplified stability guarantees that $c$ matches with $v$. c) What is actually happening is that all byzantine parties are simulating two versions of themselves except $a$ and $c$, but both $a$ and $c$ are honest and try to match $v$, which is not allowed by the non-competition property.}

\end{figure}

To do so, we define $a$ and $c$'s favorite as $v$. Moreover, each byzantine party will internally simulate two instances of themselves running protocol $\Pi$. For example, $v$ will internally simulate two instances of itself $v_1$ and $v_2$ such that $v_1$'s favorite is $a$ and $v_2$'s favorite is $b$. Each of the remaining byzantine parties $x \in \{ b, u, w \}$ simulates two instances of itself $x_1$ and $x_2$ with any arbitrary inputs.


Each communication edge has at least one of its two endpoints being a byzantine party in our setting: since the parties in $L$ are only connected to parties in $R$ in a one-sided network, byzantine parties have full control over the communication network. They may therefore divide the communication network in two groups: $\{a,b_1, u_1, v_1, w_1\}$ and $\{b_2, c, u_2, v_2, w_2\}$.
Messages only get sent and received within a group, meaning that if party $a$ running $\Pi$ wants to send a message to $w$, $w_1$ will receive it. Any message sent from $u_1, v_1$ and $w_1$ to $c$ is never received. 

We consider the output of $a$ and $c$ after running protocol $\Pi$ in this setting. We have $t_L = 1$ and $t_R = 3$, which satisfies $\Pi$'s requirements. Consequently, termination holds: $a$ and $c$ must decide to either match with some party or no one.

We then consider a new scenario for party $a$: parties $a, b, u, v, w$ are all honest with the same favorites as the previous scenario's first group. Party $c$ is byzantine and crashes at the beginning, i.e., it does not send any message. In this scenario, $t_L = 1$ and $t_R = 0$, which satisfies the requirements of $\Pi$. Therefore, termination holds and $a$ obtains an output. Since both $a$ and $v$ are honest and each other's favorite, they must match with each other according to simplified stability. However, we remark that $a$ cannot distinguish between this scenario and the previous one: it receives the exact same messages in both cases. Therefore, in the first scenario, $a$ also decides to match with $v$.

We may construct a symmetric scenario for party $c$: this time, parties $b, c, u, v, w$ are honest with the same inputs as in the first scenario's second group. Party $a$ is byzantine and crashes at the beginning of the protocol's execution. As $t_L = 1$ and $t_R = 0$, the requirements of $\Pi$ are satisfied: termination and simplified stability hold. Therefore, $c$ outputs $v$. Moreover, this scenario is indistinguishable to $c$ from the first scenario, hence $c$ matches with $v$ in the first scenario as well.


We consequently obtain a contradiction: in the first scenario, both $a$ and $c$ are honest and match with the same party, which breaks non-competition.
\end{proof}

As the bipartite communication model is weaker than the one-sided model, \cref{lemma:with-pki-one-sided-impossible} provides the following corollary.
\begin{corollary}\label{corollary:with-pki-bipartite-impossible}
If $t_R = k$ (resp. $t_L = k$) and $t_L \geq k/3$ (resp. $t_R \geq k/3$), then achieving $\byzantineSM$ in a bipartite network is impossible.
\end{corollary}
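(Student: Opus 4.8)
The plan is to establish the corollary purely by model comparison, without any new indistinguishability argument: both cases reduce to \cref{lemma:with-pki-one-sided-impossible}, exploiting the fact (noted right after the three topologies are defined) that the bipartite network is strictly weaker than the one-sided network. The guiding principle is that impossibility transfers from stronger models to weaker ones: if a problem cannot be solved even with the extra communication of the one-sided network, then it certainly cannot be solved in the bipartite network, which offers strictly less.

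For the first case ($t_R = k$ and $t_L \geq k/3$), I would argue by contradiction. Suppose some protocol $\Pi$ achieves $\byzantineSM$ in a bipartite authenticated network under these thresholds. The one-sided network is obtained from the bipartite one by \emph{adding} the $R$-$R$ channels, so every channel used by $\Pi$ is still present. Hence $\Pi$ can be run verbatim in the one-sided network, with the honest parties simply ignoring the additional $R$-$R$ channels. The honest parties' behaviour, the messages each honest party observes, and the adversary's corruption budget $(t_L, t_R)$ are all unchanged; moreover, byzantine parties gain nothing from the extra channels, since colluding corrupted parties can always coordinate internally regardless of the topology. Thus $\Pi$ continues to satisfy termination, symmetry, stability, and non-competition in the one-sided setting, exhibiting a one-sided protocol that solves $\byzantineSM$ with $t_R = k$ and $t_L \geq k/3$ --- directly contradicting \cref{lemma:with-pki-one-sided-impossible}.

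For the symmetric case ($t_L = k$ and $t_R \geq k/3$), I would instead exploit the symmetry of the bipartite topology, which the one-sided network lacks. A bipartite network consists solely of $L$-$R$ edges, so relabelling the two sides $L \leftrightarrow R$ yields an identical network and an equivalent problem (the four $\byzantineSM$ properties treat $L$ and $R$ symmetrically). Given a bipartite protocol solving $\byzantineSM$ with $t_L = k$ and $t_R \geq k/3$, relabelling the sides produces a bipartite protocol solving $\byzantineSM$ with $t_R = k$ and $t_L \geq k/3$, which the first case has already ruled out.

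I expect no substantive obstacle here, as the work is carried by \cref{lemma:with-pki-one-sided-impossible}. The only points requiring care are verifying that the protocol transfer in the first case preserves the adversary's power exactly --- namely that ignoring the extra $R$-$R$ channels leaves honest behaviour intact and that a one-sided adversary is at least as strong as a bipartite one --- and checking that the side-swap in the symmetric case is legitimate, which hinges precisely on the bipartite network (unlike the one-sided network) being invariant under exchanging $L$ and $R$.
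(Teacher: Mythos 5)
Your proposal is correct and takes essentially the same route as the paper, which derives the corollary in one line from the observation that the bipartite model is weaker than the one-sided model, so the impossibility of \cref{lemma:with-pki-one-sided-impossible} transfers directly; the symmetric case is handled, as you do, by the $L \leftrightarrow R$ symmetry of the bipartite topology. You merely spell out the model-comparison and relabelling details that the paper leaves implicit.
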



\paragraph{Putting it all together.} We conclude the section by providing the formal proofs of \cref{thm:bipartite-pki-main} and \cref{thm:one-sided-pki-main}. 
We first present the proof of \cref{thm:bipartite-pki-main}, focusing on a bipartite network.


\begin{proof}[Proof of \cref{thm:bipartite-pki-main}]
We first discuss sufficiency.
If $t_L < k$ and $t_R < k$, Corollary \ref{coro:with-pki-complete} enables us to assume a fully-connected network. Therefore, using \cref{theo:with-pki-complete}, we obtain that $\byzantineSM$ is solvable. If $t_L < k / 3$ and $t_R \leq k$, \cref{lemma:protocol-omissions} describes a protocol achieving $\byzantineSM$. The case $t_R < k / 3$ and $t_L \leq k$ is symmetrical.


Otherwise, if $t_L \geq k/3$ and $t_R = k$ or the opposite, we may apply Corollary \ref{corollary:with-pki-bipartite-impossible} and conclude that $\byzantineSM$ is impossible. 
\end{proof}

We now prove \cref{thm:one-sided-pki-main}, discussing the one-sided network case.
\begin{proof}[Proof of \cref{thm:one-sided-pki-main}]
For sufficiency, when $t_R < k$, we may apply \cref{lemma:with-pki-one-sided} and hence assume a fully-connected network. Then, \cref{theo:with-pki-complete} enables us to conclude that $\byzantineSM$ is solvable. If $t_R = k$ and $t_L < k/3$, \cref{theorem:main:no-pki-one-sided} guarantees that $\byzantineSM$ is solvable.
    
When none of these conditions holds, i.e., if $t_R = k$ and if $t_L \geq k/3$, \cref{lemma:with-pki-one-sided-impossible} enables us to conclude that $\byzantineSM$ is impossible.
\end{proof}

\section{Conclusion}

We investigated whether stable matching can be achieved in a synchronous network where some of the parties involved may be byzantine. We analyzed this problem under various network topologies, both with and without cryptographic assumptions. For each setting, we gave necessary and sufficient conditions,  assuming that each party holds as input a complete ranking of the parties on the other side.


Our work highlights multiple promising directions for further research. A first direction could be generalizing our results to the \emph{stable roommate} problem. Instead of assuming that the parties to be matched are in two disjoint sets, the stable roommate problem seeks a stable matching within the same set. Note that our necessary conditions also apply to a byzantine variant of the stable roommate problem, even though there is no longer a distinction between byzantine parties on the two sides. However, the stable matching problem comes with the guarantee that a stable matching always exists, while the stable roommate problem does not. Hence, definitions and properties need to be refined to account for this.

Another interesting direction would be to extend our question to the asynchronous model. Using our current definitions, one can prove that even if only one party known in advance can be byzantine, stable matching is not solvable. Therefore, the properties required for the stable matching would have to be relaxed for this problem to be of interest.

Finally, while our work has provided a complete characterization in terms of solvability, there are multiple aspects in which our feasibility results could be improved. This includes improvements in terms of efficiency (i.e., communication complexity), but also improvements in terms of guarantees, such as providing some degree of privacy.



\newpage

\newpage
\appendix
\section{Appendix}
\subsection{Preliminaries: Missing Proofs}\label{appendix:preliminaries}
We present the formal proof of Lemma \ref{lemma:broadcast-easy}, establishing that whenever $\bb$ can be achieved, the $\byzantineSM$ problem is solvable.
\BroadcastEasy*

\begin{proof}
The parties distribute their input preference lists via $\bb$. This provides the parties with an identical view over the parties' preferences lists. If a party $P$ has not sent a valid preference list, then $P$ is byzantine, and the honest parties may simply assign a pre-defined default preference list to it. 
Afterwards, each party runs $\GaleShapley$ offline with the preference lists obtained and obtains a matching $M$. Each party then outputs its match in $M$.

Termination comes from $\bb$'s termination property. $\bb$'s validity condition ensures that, if the input of an honest party $P$ in our $\byzantineSM$ instance is the preference list $\pi_P$, then party $P$ has preference list $\pi_P$ in each of the honest parties' offline executions of $\GaleShapley$.
$\bb$'s Agreement properties ensures that all honest parties run $\GaleShapley$ with the same input. Since $\GaleShapley$ is deterministic, all honest parties obtain the same output $M$. 

According to Theorem \ref{theorem:gale-shapley}, $M$ is a proper matching (if $u$ is matched with $v$, then $v$ is matched with $u$) satisfying stability (no blocking pair). Therefore, as each honest party outputs its match in $M$, symmetry, non-competition and stability hold. Hence, $\byzantineSM$ is achieved.
\end{proof}

\subsection{Simplified Stable Matching: Missing Proofs} \label{appendix:simplified-stable-matching}

We first include the proof of Lemma \ref{coro:to-simplified}, establishing that $\simplifiedSM$ reduces to $\byzantineSM$.

\SimplifiedReduction*

\begin{proof}
It suffices to show that any protocol solving $\byzantineSM$ also solves $\simplifiedSM$. 
Given a protocol $\Pi$ solving $\byzantineSM$, we construct a protocol $\Pi'$ that solves $\simplifiedSM$, as follows:

Given its favorite as input, each party constructs an arbitrary preference list with the favorite ranked first. Afterward, parties join an invocation of $\Pi$ with the constructed lists as inputs. The output obtained in $\Pi$ for $\byzantineSM$ is used in $\Pi'$ as the output for $\simplifiedSM$.


First, $\Pi'$ maintains the resilience thresholds of $\Pi$. Moreover, the termination, symmetry, and non-competition guarantees of $\Pi'$ follow directly from $\Pi$ achieving termination, symmetry, and non-competition, respectively. Finally, if two honest parties are each other's favorites, they rank each other first in the constructed lists and, consequently, always form a blocking pair if they are not matched. Therefore, the simplified stability property of $\Pi'$ is guaranteed by the stability property of $\Pi$.
\end{proof}

We now present the proof of Lemma \ref{lemma:reduce-number}, allowing us to extend impossibility results from small settings to larger settings.
\ReduceNumberLemma*
\begin{proof}
We partition $L$ into $d$ disjoint sets $L_1$, \dots, $L_d$ such that $1 \leq |L_1|,\dots,|L_d| \leq \lceil |L| / d \rceil = \lceil k/d \rceil$. Similarly, we partition $R$ into $d$ disjoint sets $R_1$, \dots, $R_d$ such that $1 \leq |R_1|,\dots,|R_d| \leq \lceil |R| / d \rceil = \lceil k/d \rceil$. From each of these sets, we pick one representative: $l_1, \dots, l_d, r_1, \dots, r_d$. We build $\Pi'$ solving $\simplifiedSM$ for $2d$ parties: $l_1', \dots, l_d'$ on the left side and $r_1', \dots, r_d'$ on the right side, as follows:
\begin{itemize}[nosep]
\item Each party $l_i'$ in $\Pi'$ simulates all the parties in $L_i$ running $\Pi$. Similarly, each party $r_j'$ in $\Pi'$ simulates all the parties in $R_j$ running $\Pi$.
\item Input: If the input (favorite) of $l_i'$ is $r_j'$, then we assign $r_j$ as the favorite of $l_i$. Similarly, if the input of $r_j'$ is $l_i'$, then we assign $l_i$ as the favorite of $r_j$. For parties that are not representatives of their group, we assign arbitrary favorites.
\item Output: For a given $l_i$, if there is a $r_j$ such that $l_i$ matches $r_j$, then $l_i'$ declares that it matches $r_j'$. Otherwise $l_i'$ declares that it matches nobody.
\end{itemize}

We are essentially running the $\simplifiedSM$ algorithm on the whole graph, but only looking at the representative of each set and discarding anything unrelated to them. As a consequence, $\Pi'$ achieves termination, symmetry, simplified stability, and non-competition since $\Pi$ achieves termination, symmetry, simplified stability, and non-competition.

As each party in $\Pi'$ simulates up to $\lceil k / d \rceil$ parties 
from $\Pi$ and $\Pi$ supports up to $t_L$ byzantine parties in $L$ and $t_R$ byzantine parties in $R$, the bound on the number of byzantine parties supported by $\Pi'$ follows immediately.
\end{proof}

\subsection{Byzantine Broadcast with General Adversaries}\label{appendix:general-adversaries}
To achieve the feasibility part of Theorem \ref{theo:pki-complete}, 
we got help form the result below.

\GeneralAdversaries*

As mentioned before, this is a corollary of \cite[{Theorem 2}]{DISC:FitMau98}. We again highlight that \cite{DISC:FitMau98} assumes a \emph{general adversary}, which we briefly introduce next. In this adversarial model, the corruption power of the adversary is specified by a (subset-closed) 
\emph{adversarial structure} $\mathcal{Z} \subseteq 2^\mathcal{P}$, where $\mathcal{P}$ denotes the set of parties. In particular, the adversary may choose to corrupt any set of parties in $\mathcal{Z}$.
For instance, if $\mathcal{P} := \{P_1, P_2, \ldots, P_5\}$, a potential adversarial structure $\mathcal{Z}$ is $\{\varnothing, \{P_1\}, \{P_2\}, \{P_1, P_2\}, \{P_4\}\}$, which means that the adversary may choose between corrupting no parties, corrupting parties $P_1, P_2$ (or only one of the two), or corrupting only party $P_4$. In contrast, one often considers a \emph{threshold adversary}, which may corrupt up to $t$ of the $n$ parties (as is the case in most literature): this is a particular case of the general adversary model where $\mathcal{Z}$ is the set of all subsets of at most $t$ parties. The adversary assumed in our work sits in-between these two: we assume that the adversary may corrupt up to $t_L$ parties in $L$ and up to $t_R$ parties in $R$, hence our adversary structure is $\mathcal{Z^\star} := \{S_L \cup S_R \mid S_L \subseteq L, S_R \subseteq R, \abs{S_L} \leq t_L, \abs{S_R} \leq t_R\}$. This can be thought of as the product of two threshold adversary structures.

In the general adversaries model, {\cite[{Theorem 2}]{DISC:FitMau98}} states the following:
\begin{theorem}[\hspace{-1pt}{\cite[{Theorem 2}]{DISC:FitMau98}}] \label{thm:general-adversaries-explicit}
    Assume a fully-connected unauthenticated network, and an adversary structure $\mathcal{Z}$ such that for any three sets $Z_1, Z_2, Z_3 \in \mathcal{Z}$ it holds that $Z_1 \cup Z_2 \cup Z_3 \neq \mathcal{P}$. Then, there is a protocol achieving $\bb$ in this setting.
\end{theorem}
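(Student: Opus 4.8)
The statement only asserts sufficiency, so the plan is to build, from the pairwise authenticated channels, an unauthenticated $\bb$ protocol under the hypothesis that no three members of $\mathcal{Z}$ cover $\mathcal{P}$ --- the general-adversary analogue of the classical threshold bound $3t < n$. I would follow the standard layered route to agreement: graded broadcast, then graded consensus, then full consensus via the phase-king paradigm, and finally $\bb$ reduced to consensus. The one new ingredient is that every threshold counting test of the Berman--Garay--Perry analysis (``a value reported by more than $t$ parties'', ``by at least $n-t$ parties'') must be replaced by a purely set-theoretic quorum test phrased in terms of $\mathcal{Z}$, and the hypothesis must be shown to be exactly what these tests need.

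The first step is to fix the quorum notion. Since $\mathcal{Z}$ is subset-closed, I would call a set $S \subseteq \mathcal{P}$ a \emph{supporting quorum} if $S \notin \mathcal{Z}$; then $S \notin \mathcal{Z}$ guarantees $S \not\subseteq Z$ for the actual corruption set $Z \in \mathcal{Z}$, i.e.\ $S$ contains an honest party. Two consequences drive everything. First (liveness), writing $H$ for the honest set $\mathcal{P} \setminus Z$, the hypothesis rules out $H \in \mathcal{Z}$: if $H \in \mathcal{Z}$ then $Z \cup H = \mathcal{P}$ would be a covering by two sets of $\mathcal{Z}$, contradicting the hypothesis, which also forbids any two sets of $\mathcal{Z}$ from covering $\mathcal{P}$ (take the third set to be $\varnothing \in \mathcal{Z}$); hence the honest parties always form a supporting quorum, the replacement for ``$n - t > t$''. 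Second (safety), I would prove a graded-consistency lemma: if an honest party commits to $v$ with high grade because a supporting quorum reported $v$, then no honest party sees a conflicting $v' \neq v$ reported by a supporting quorum. It is here that the full three-set hypothesis enters, and pinning this down is the crux of the argument.

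With these primitives in hand the remaining steps are routine. For full consensus I would run the phase-king protocol on top of graded consensus, letting \emph{every} party serve as king in one of $|\mathcal{P}|$ phases; since $\mathcal{P} \notin \mathcal{Z}$ (again from the hypothesis), the corruption set is a proper subset of $\mathcal{P}$ and at least one king is honest, and a phase with an honest king forces agreement that the persistence property of graded consensus then preserves through the remaining phases. Broadcast reduces to consensus in the usual way: the sender sends its value $v_\sender$ to all parties over the point-to-point channels, and the parties run consensus on what they received; validity of consensus together with an honest sender yields the $\bb$ validity property, agreement of consensus yields consistency, and termination is immediate from the fixed number of phases.

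The main obstacle is the graded-consistency lemma and its exact dependence on the hypothesis. Concretely, suppose honest $P$ commits to $v$ on the strength of a reporting set $A \notin \mathcal{Z}$ while honest $P'$ sees $v' \neq v$ reported by $B \notin \mathcal{Z}$; the task is to derive a contradiction using only that no three sets of $\mathcal{Z}$ cover $\mathcal{P}$. The three sets that arise are the actual corruption set $Z$ together with the two sets of parties that equivocated toward $P$ and toward $P'$ respectively; one must argue that if these three sets did not cover $\mathcal{P}$ then some honest party sent a consistent report to both $P$ and $P'$, ruling out the conflict. Getting the quorum thresholds inside graded broadcast tuned so that precisely these three $\mathcal{Z}$-sets appear --- and bootstrapping graded broadcast itself from bare pairwise channels in the general-adversary setting --- is the delicate part; everything else mirrors the threshold proof with counting arguments replaced by the subset-closure of $\mathcal{Z}$.
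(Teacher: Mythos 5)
First, a point of comparison: the paper does not prove this statement at all --- it is imported verbatim as Theorem 2 of Fitzi--Maurer \cite{DISC:FitMau98}, and the paper only verifies (in Appendix~\ref{appendix:general-adversaries}) that its own adversary structure $\mathcal{Z}^\star$ satisfies the no-three-cover hypothesis. So the benchmark for your attempt is the original Fitzi--Maurer proof, and your overall architecture --- generalize the Berman--Garay--Perry phase-king protocol to general adversaries, replace counting tests by set-theoretic tests against $\mathcal{Z}$, run one phase per party since $\mathcal{P} \notin \mathcal{Z}$ guarantees an honest king, and reduce $\bb$ to agreement by having the sender distribute its value first --- is essentially the route that the cited paper takes. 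Your liveness observation (the honest set $H$ is never in $\mathcal{Z}$, else $Z \cup H \cup \varnothing = \mathcal{P}$ would be a three-cover) is also correct.

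However, there is a genuine gap exactly where you flag uncertainty, and it is not merely a detail left to tune: your single quorum notion is insufficient, and the graded-consistency lemma as you frame it is \emph{false}. You define a supporting quorum as any $S \notin \mathcal{Z}$ and propose to commit on the strength of such a quorum. But $S \notin \mathcal{Z}$ is the generalization of the \emph{low} threshold test ``more than $t$ parties''; it only guarantees one honest member of $S$, which gives no consistency across two different observers. Concretely, take the threshold structure with $n = 4$, $t = 1$ (so $\mathcal{Z}$ consists of all singletons, and no three sets cover $\mathcal{P}$): the disjoint sets $A = \{P_1, P_2\}$ and $B = \{P_3, P_4\}$ are both outside $\mathcal{Z}$, yet share no party at all, so an honest $P$ may see $A$ report $v$ while honest $P'$ sees $B$ report $v' \neq v$ with no contradiction derivable. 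What the high threshold test ``at least $n - t$ parties'' generalizes to is the \emph{dual} condition $\mathcal{P} \setminus S \in \mathcal{Z}$, and the consistency argument then runs: if $P$ accepts $v$ from a set with complement $Z_1 \in \mathcal{Z}$ and $P'$ accepts $v'$ from a set with complement $Z_2 \in \mathcal{Z}$, then since $Z_1 \cup Z_2 \cup Z \neq \mathcal{P}$ (where $Z$ is the actual corruption set) some \emph{honest} party lies in both reporting sets and would have had to send $v$ to $P$ and $v'$ to $P'$, which honest parties never do. Note also that your identification of the three sets as ``$Z$ together with the two sets of parties that equivocated toward $P$ and toward $P'$'' is off: equivocating parties are corrupted and hence already contained in $Z$, so those three sets can never witness the hypothesis; the two sets that must be placed in $\mathcal{Z}$ are the \emph{complements of the two reporting quorums}. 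A complete proof therefore needs both quorum notions ($S \notin \mathcal{Z}$ for ``some honest supporter exists'' steps, $\mathcal{P}\setminus S \in \mathcal{Z}$ for commitment steps), threaded through graded broadcast and the phase-king decision rule exactly as in the threshold protocol; with only the weak notion, every safety step of your outline fails.
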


Then, to prove Lemma \ref{lemma:general-adversaries}, we only need to show no three sets in our 
adversary structure $\mathcal{Z}^\star$ cover the set of $n$ parties:
\begin{proof}[Proof of Lemma \ref{lemma:general-adversaries}]
Consider our adversarial structure $\mathcal{Z^\star} := \{S_L \cup S_R \mid S_L \subseteq L, S_R \subseteq R, \abs{S_L} \leq t_L, \abs{S_R} \leq t_R\}$, and let $Z_1, Z_2, Z_3 \in \mathcal{Z^\star}$ be arbitrary. We show that $Z_1 \cup Z_2 \cup Z_3 \neq L \cup R$.

Without loss of generality, we may assume that the condition $t_L < k / 3$ holds (the case where $t_R < k / 3$ and $t_L \geq k / 3$ is analogous). As every $Z \in Z^\star$ contains at most $t_L$ parties in $L$, it follows that $Z_1 \cup Z_2 \cup Z_3$ contain at most $3 \cdot t_L < 3 \cdot k / 3 = k$ parties in $L$. Hence, at least one element of $L$ is uncovered by $Z_1 \cup Z_2 \cup Z_3$, from which $Z_1 \cup Z_2 \cup Z_3 \neq L \cup R$.

Then, we may apply Theorem \ref{thm:general-adversaries-explicit} and conclude that there is a protocol achieving $\bb$ in our setting.
\end{proof}

\subsection{Unauthenticated Setting: Missing Proofs}\label{appendix:complete-bipartite-graph-without-pki}
We present the proof of Lemma \ref{lemma:pki-bipartite}. This has provided reductions between the communication models when analyzing $\byzantineSM$ in unauthenticated settings.
\BipartiteCommunicationNoPKI*

\begin{proof}
Let $u,v$ be parties in $S$. We want to simulate an authenticated channel between $u$ and $v$, i.e the receiver knows who the sender is. 

If $u$ wants to send a message $M$ to $v$, it sends the message ($u \rightarrow v$, $M$) to every party in $S'$. Then if a party in $S'$ receives a message ($u \rightarrow v$, $M$) from $u$, it forwards it to $v$. Finally, if $v$ receives the same message ($u \rightarrow v$, $M$) from a majority (i.e strictly more than $k/2$) of $S'$, it considers it received message $M$ from $u$.

Using this strategy, we can see that sending a message takes a bounded amount of time (at most $2 \Delta$). Moreover, if $u$ is honest and sends a message $M$, at least $k - t_{S'} > k/2$ parties from $S'$ will forward it to $v$ which will therefore accept it.
If $v$ accepts a message $M$ from $u$, this means strictly more than $k/2$ parties from $S'$ forwarded it. Because $t_{S'} < k/2$, at least one honest party forwarded it, meaning $u$ intended to send this message (being honest or not).
\end{proof}

\subsection{Authenticated Setting: Missing Proofs} \label{appendix:bipartite-pki}
We present the proof of Lemma \ref{lemma:with-pki-one-sided}, which has provided us with reductions between the communicated models when analyzing $\byzantineSM$ in authenticated settings.

\PKIOneSided*

\begin{proof}
Let $u,v$ be parties in $S$. We want to simulate an authenticated channel between $u$ and $v$.

If $u$ wants to send a message $M$ to $v$, it sends the the signed message ($u \rightarrow v$, $M$) to every party in $S'$. Then if a party in $S'$ receives a message ($u \rightarrow v$, $M$) with a valid signature from $u$, it forwards it to $v$. Finally, if $v$ receives a message ($u \rightarrow v$, $M$) from a party in $S'$ with a valid signature, it considers it receives message $M$ from $u$.

Using this strategy, we can see that sending a message takes a bounded amount of time (at most $2 \Delta$). Moreover, if $u$ is honest and sends a message $M$, at least $k - t_{S'} > 0$ parties from $S'$ will forward it to $v$, which will therefore accept it.
If $v$ accepts a message $M$ from $u$, this message is signed by $u$, meaning $u$ intended to send this message (being honest or not).
\end{proof}

\MagicOmissionsNew*
\begin{proof}
Let $(u,v)$ be two parties in $L$ and assume that $u$ wants to send a message $\msg$ to $v$. $u$ sends a signed message $(u \rightarrow v, \timestamp, \msgId, \msg)$ to all parties in $R$, $\tau$ being the current timestamp and $\msgId$ being a message identifier. Parties in $R$ then forward this signed message to $v$. If $v$ receives a message $(u, \rightarrow, v, \timestamp, \msgId, \msg)$ properly signed by $u$ such that $\timestamp$ is at most $2\Delta$ units of time in the past and $u$ has not seen $\msgId$ has not been seen before, it accepts message $\msgId$ from $u$.

With this approach, if at least one party in $R$ is honest, messages always get forwarded and received within $2\Delta$ units of time. Otherwise, note that byzantine parties cannot forge signatures on the honest parties' behalf: the byzantine parties may choose whether to forward the message or not, then causing an omission.
\end{proof}

\subsection{Protocols in Settings with or without Omissions} \label{appendix:sync-and-omission}
In order to prove sufficiency when one side may be completely byzantine in Section \ref{subsection:bipartite-pki}, we have considered a setting consisting of a fully-connected synchronous network where \emph{omissions} may occur: if a message is delivered, it is delivered within $\Delta$ time. We have utilized the building blocks described by the theorems below:
\BAWithOmissions*
\BBWithOmissions*

In the following, we describe the constructions behind these theorems.

\paragraph{Byzantine Agreement.} We start by presenting protocol $\Pi_{\ba}$.
We need a \emph{synchronous} $k$-party $\ba$ protocol resilient against $t_L < k / 3$ corruptions. We may use, for instance, the protocol of \cite{King}, presented below.
\begin{protocolbox}{$\Pi_{\king}$}
    \algoHead{Code for party $P \in L$ with input $v_{\inputt}$}
    \begin{algorithmic}[1]
    \State If you have not obtained any output by time $3(t_L + 1) \cdot \Delta$, output $\bot$.
    \State $v := v_{\inputt}$
    \For {$i = 1 \ldots t_L + 1$}
    \State \textbf{(Round 1)}
    \State Send $(\val, v)$ to all parties. 
    \State Wait $\Delta$ time.
    \State \textbf{(Round 2)}
    \State If you have received $(\val, v')$ for the same $v'$ from $k - t_L$ parties in $L$:
    \State \hspace{0.5cm} Send $(\propose, v')$ to all parties
    \State Wait $\Delta$ time.
    \State \textbf{(Round 3})
    \State If you have received some $(\propose, v')$ from more than $t_L$ parties, set $v' = v$.
    \State \textbf{King $P_i$ only}: Send $v_K := v$ to all parties.
    \State Wait $\Delta$ time.
    \State If you have received strictly less than $k - t_L$ messages $(\propose, v')$ for any $v'$:
    \State \hspace{0.5cm} If you have received $v_K$ from king $P_i$: Set $v = v_K$
    \EndFor
    \State Output $v$
\end{algorithmic}
\end{protocolbox}

The next theorem comes directly from \cite{King}.
\begin{theorem}[Theorem 3.1 of \cite{King}]
    Whenever $\Pi_{\king}$ runs in a synchronous network with maximum delay $\Delta$ and at most $t_L < k / 3$ byzantine corruptions, $\Pi_{\king}$ achieves $\ba$ within $\Delta_{\king} := 3(t_L + 1) \cdot \Delta$ time.
\end{theorem}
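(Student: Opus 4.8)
The plan is to treat this as the standard correctness proof of the Phase\nobreakdash-King protocol, isolating two structural invariants and then combining them via a pigeonhole argument over the $t_L + 1$ phases. Termination is immediate from the code: each iteration of the loop consists of three sub-rounds, each of which waits exactly $\Delta$, so every honest party reaches its final output line and decides within $3(t_L+1)\cdot \Delta = \Delta_{\king}$ time; the opening timeout line only guarantees that \emph{some} value (possibly $\bot$) is produced in degenerate executions and is never triggered in a well-formed run, since the loop completes in time. It therefore remains to establish validity and agreement, for which I would prove the two lemmas below and then assemble them.

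\textbf{Persistence lemma.} If all honest parties hold the same value $v$ at the start of some phase, then they all still hold $v$ at the end of that phase. I would argue this directly from the quorum thresholds. In Round~1 at least $k - t_L$ honest parties send $(\val, v)$, so every honest party sees $v$ from at least $k - t_L$ senders and sees any other value from at most $t_L < k - t_L$ senders; hence every honest party sends $(\propose, v)$, and $v$ is the only value an honest party can propose. Consequently, in Round~3 every honest party receives at least $k - t_L$ proposals for $v$; since $k - t_L > t_L$ (using $t_L < k/3$), the rule adopting a value supported by more than $t_L$ proposals fires on $v$, while the king-override guard fails because $v$ already has at least $k - t_L$ proposals. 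Thus every honest party keeps $v$. Validity is then the special case where the common starting value is the shared input, propagated by induction over all $t_L + 1$ phases.

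\textbf{Honest-king lemma.} If the king $P_i$ of phase $i$ is honest, then all honest parties hold a common value at the end of phase $i$. This is the step I expect to be the main obstacle and the one place where $t_L < k/3$ is used in an essential, non-obvious way. At the end of the phase each honest party either adopts the king's broadcast value $v_K$ or retains a value $w$ that it saw proposed by at least $k - t_L$ parties. The crux is to show that if any honest party retains such a $w$, then the honest king also holds $v_K = w$: strong support for $w$ at one honest party forces at least $k - 2t_L > t_L$ honest parties to have proposed $w$, and since honest proposals are delivered to everyone, the king too sees more than $t_L$ proposals for $w$. Turning this into ``the king adopts exactly $w$'' requires a careful quorum-intersection count that rules out a competing value simultaneously dominating the king's tie-break; this counting, which balances the proposal threshold, the adoption threshold, and the king-override guard against one another, is the technical heart of the protocol and is precisely where $t_L < k/3$ is tight. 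Granting it, every honest party that retains a value retains $v_K$, and every other honest party adopts $v_K$, so all honest parties end the phase in agreement.

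Finally, I would assemble agreement: among the $t_L + 1$ phases at most $t_L$ kings are byzantine, so by pigeonhole at least one phase has an honest king. By the honest-king lemma all honest parties agree at the end of that phase, and by the persistence lemma they remain in agreement through every subsequent phase, hence at the moment of output. Combined with validity and the timing bound established above, this yields that $\Pi_{\king}$ achieves $\ba$ within $\Delta_{\king} = 3(t_L+1)\cdot\Delta$ whenever $t_L < k/3$, as claimed. Since the statement is quoted verbatim from \cite{King}, an alternative is simply to cite that reference; the sketch above records the argument one would reconstruct.
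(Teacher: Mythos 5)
The paper does not actually prove this statement: it imports it verbatim as Theorem 3.1 of \cite{King} and relies on that reference, so there is no internal proof to compare against. Your reconstruction is the standard Phase-King correctness argument and its skeleton (persistence lemma, honest-king lemma, pigeonhole over the $t_L+1$ kings, three $\Delta$-waits per phase giving $3(t_L+1)\Delta$) is the right one; the persistence lemma and the assembly are carried out correctly, and your reading of the king-override guard (it cannot fire when some value already has $k-t_L$ proposals) matches the pseudocode. What a citation buys the paper is brevity and not having to re-verify thresholds against its own transcription of the protocol; what your route buys is a self-contained argument, which is arguably worth having here since the paper later perturbs $\Pi_{\king}$ (termination under omissions, the extra round in $\Pi_{\ba}$) and those claims lean on exactly the invariants you isolate.

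The one place where you announce the crux and then write ``Granting it'' without supplying it is the honest-king lemma, and the missing ingredient is a single quorum-intersection fact that closes it cleanly: \emph{in any phase, honest parties can send $(\propose,\cdot)$ for at most one value.} Indeed, if honest parties proposed $w\neq w'$, each value was backed by $k-t_L$ $(\val,\cdot)$ messages; two such sets intersect in at least $k-2t_L>t_L$ parties, so some honest party sent $\val$-messages for both values, which is impossible. With this, the lemma finishes without any delicate tie-break analysis: if some honest party retains a value $w$ it saw proposed $\geq k-t_L$ times, then more than $t_L$ honest parties proposed $w$, the king therefore sees more than $t_L$ proposals for $w$, and since no other value can exceed $t_L$ proposals (byzantine parties alone number only $t_L$, and no honest party proposed anything else), the adoption rule can only fire for $w$ at the king, so $v_K=w$; hence retainers and adopters agree. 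You should state this uniqueness lemma explicitly rather than deferring it, but it is a routine count and does not threaten the correctness of your overall argument.
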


We note that, due to line 1, termination is guaranteed even when omissions occur.
\begin{remark}
    Whenever $\Pi_{\king}$ runs in a synchronous network with omissions with maximum delay $\Delta$, it achieves termination within $\Delta_{\king}(\Delta) := 3(t_L + 1) \cdot \Delta$ time.
\end{remark}

However, $\Pi_{\king}$ does not achieve weak agreement when omissions occur. To achieve this property, we need one more round of communication, as presented below.

\begin{protocolbox}{$\Pi_{\ba}$}
    \algoHead{Code for party $P \in L$ with input $v_{\inputt}$}
    \begin{algorithmic}[1]
    \State Join $\king$ with input $v_{\inputt}$ and obtain output $y$.
    \State At time $\Delta_{\king}(\Delta)$, send $y$ to every party.
    \State If the same value $z$ is received from $k - t_L$ parties in $L$ by time $\Delta_{\king}(\Delta) + \Delta$, output $z$. Otherwise, output $\bot$.
\end{algorithmic}
\end{protocolbox}


\begin{proof}[Proof of Theorem \ref{thm:ba-omissions}]
Termination  is achieved within $\Delta_\ba(\Delta) = 3(t_L + 1) \cdot \Delta + \Delta$ time even if omissions occur: this follows from $\Pi_{\king}$'s termination guarantees. 

We first show that $\ba$ is achieved when no omissions occur. In this case $\Pi_{\king}$ achieves $\ba$. Hence, all honest parties obtain the same value $y$. Then, at least the $k - t_L$ honest parties send $y$ to all parties. Since no omissions occur, these messages are received within $\Delta$ time, and all parties output $y$. Moreover, due to $\Pi_{\king}$'s validity, if all honest parties had the same input $v$, the honest parties have obtained $y = v$, and therefore all honest parties output $y$. Consequently, $\ba$ is achieved.


We may now discuss weak agreement if omissions occur. In this case $\Pi_{\king}$ only achieves termination (and it is possible that the honest parties obtain $y = \bot$). Assume that an honest party $p$ outputs $z \neq \bot$ in $\Pi_{\ba}$. Then, $p$ has received $z$ from $k - t_L$ parties, hence from at least $n - 2t_L > t_L$ honest parties. As such, every party can receive strictly less than $n - t_L$ values $z' \neq z$, and therefore no honest party outputs $z' \neq z$.  Therefore, weak agreement holds.

\end{proof}

\paragraph{Byzantine Broadcast.}
Protocol $\Pi_{\bb}$ is a simple reduction to $\Pi_{\ba}$. The sender $S$ sends its value to all parties, and afterwards the parties run $\Pi_{\ba}$ to agree on the value received.
\begin{protocolbox}{$\Pi_{\bb}$}
    \algoHead{Code for sender $S \in L$ with input $v_{S}$}
    \begin{algorithmic}[1]
    \State Send $v_{S}$ to all parties.
    \end{algorithmic}

    \algoHead{Code for party $P \in L$}
    \begin{algorithmic}[1]
    \State Receive value $v$ from the sender. If you did not receive a value within $\Delta$ time, set $v :=$ default value (default preference list).
    \State At time $\Delta$, join $\Pi_{\ba}$ with input $v$. Return the output obtained.
\end{algorithmic}
\end{protocolbox}


\begin{proof}[Proof of Theorem \ref{thm:bb-omissions}]
$\Pi_{\ba}$ achieves termination within $\Delta_{\ba}(\Delta)$ time even when omissions occur, hence $\Pi_{\bb}$ achieves termination within $\Delta_{\bb}(\Delta) = \Delta + \Delta_{\ba}(\Delta)$ time even when omissions occur.

If no omissions occur, $\Pi_{\ba}$ achieves $\ba$. Therefore, $\Pi_{\bb}$ achieves agreement. If the sender is honest with input $v_S$, all honest parties join $\Pi_{\ba}$ with input $v_S$, and the validity guarantee of $\Pi_{\ba}$ ensures that the parties output $v_S$. Then, $\Pi_{\bb}$ achieves validity, and consequently $\bb$.

Lastly, if omissions occur, as $\Pi_{\ba}$ achieves weak agreement, $\Pi_{\bb}$ also achieves weak agreement.
\end{proof}


\begin{thebibliography}{10}

\bibitem{ShapleyRoth}
Stable allocations and the practice of market design: The royal swedish academy of sciences.
\newblock {\em The Indian Economic Journal}, 60(4):3--34, 2013.
\newblock \href {https://doi.org/10.1177/0019466220130402} {\path{doi:10.1177/0019466220130402}}.

\bibitem{BaLoHa12}
Siavash Bayat, Raymond H.~Y. Louie, Zhu Han, Yonghui Li, and Branka Vucetic.
\newblock Distributed stable matching algorithm for physical layer security with multiple source-destination pairs and jammer nodes.
\newblock pages 2688--2693, 2012.
\newblock \href {https://doi.org/10.1109/WCNC.2012.6214256} {\path{doi:10.1109/WCNC.2012.6214256}}.

\bibitem{BaLoLi11}
Siavash Bayat, Raymond H.~Y. Louie, Yonghui Li, and Branka Vucetic.
\newblock Cognitive radio relay networks with multiple primary and secondary users: Distributed stable matching algorithms for spectrum access.
\newblock In {\em 2011 IEEE International Conference on Communications (ICC)}, pages 1--6, 2011.
\newblock \href {https://doi.org/10.1109/icc.2011.5962935} {\path{doi:10.1109/icc.2011.5962935}}.

\bibitem{King}
Piotr Berman, Juan~A. Garay, and Kenneth~J. Perry.
\newblock Towards optimal distributed consensus.
\newblock In {\em Proceedings of the 30th Annual Symposium on Foundations of Computer Science (FOCS)}, pages 410--415. IEEE Computer Society, 1989.
\newblock \href {https://doi.org/10.1109/SFCS.1989.63511} {\path{doi:10.1109/SFCS.1989.63511}}.

\bibitem{ChHiSe02}
Subhendu Chattopadhyay, Lisa Higham, and Karen Seyffarth.
\newblock Dynamic and self-stabilizing distributed matching.
\newblock In {\em Proceedings of the Twenty-First Annual Symposium on Principles of Distributed Computing}, PODC '02, page 290–297, New York, NY, USA, 2002. Association for Computing Machinery.
\newblock \href {https://doi.org/10.1145/571825.571877} {\path{doi:10.1145/571825.571877}}.

\bibitem{DolStr83}
Danny Dolev and H.~Raymond Strong.
\newblock Authenticated algorithms for byzantine agreement.
\newblock {\em SIAM Journal on Computing}, 12(4):656--666, 1983.
\newblock \href {https://doi.org/10.1137/0212045} {\path{doi:10.1137/0212045}}.

\bibitem{ElAhDa12}
Ahmad~M. El-Hajj, Zaher Dawy, and Walid Saad.
\newblock A stable matching game for joint uplink/downlink resource allocation in ofdma wireless networks.
\newblock In {\em 2012 IEEE International Conference on Communications (ICC)}, pages 5354--5359, 2012.
\newblock \href {https://doi.org/10.1109/ICC.2012.6364329} {\path{doi:10.1109/ICC.2012.6364329}}.

\bibitem{PODC:FisLynMer85}
Michael~J. Fischer, Nancy~A. Lynch, and Michael Merritt.
\newblock Easy impossibility proofs for distributed consensus problems.
\newblock In Michael~A. Malcolm and H.~Raymond Strong, editors, {\em 4th ACM PODC}, pages 59--70. {ACM}, August 1985.
\newblock \href {https://doi.org/10.1145/323596.323602} {\path{doi:10.1145/323596.323602}}.

\bibitem{DISC:FitMau98}
Matthias Fitzi and Ueli~M. Maurer.
\newblock Efficient byzantine agreement secure against general adversaries.
\newblock In {\em Proceedings of the 12th International Symposium on Distributed Computing}, DISC '98, page 134–148, Berlin, Heidelberg, 1998. Springer-Verlag.

\bibitem{GayleShapley}
D.~Gale and L.~S. Shapley.
\newblock College admissions and the stability of marriage.
\newblock {\em The American Mathematical Monthly}, 69(1):9--15, 1962.
\newblock URL: \url{http://www.jstor.org/stable/2312726}.

\bibitem{GONCZAROWSKI2019626}
Yannai~A. Gonczarowski, Noam Nisan, Rafail Ostrovsky, and Will Rosenbaum.
\newblock A stable marriage requires communication.
\newblock {\em Games and Economic Behavior}, 118:626--647, 2019.
\newblock \href {https://doi.org/10.1016/j.geb.2018.10.013} {\path{doi:10.1016/j.geb.2018.10.013}}.

\bibitem{GuZhPa15}
Yunan Gu, Yanru Zhang, Miao Pan, and Zhu Han.
\newblock Matching and cheating in device to device communications underlying cellular networks.
\newblock {\em IEEE Journal on Selected Areas in Communications}, 33(10):2156--2166, 2015.
\newblock \href {https://doi.org/10.1109/JSAC.2015.2435361} {\path{doi:10.1109/JSAC.2015.2435361}}.

\bibitem{GusfieldIrving}
Dan Gusfield and Robert~W. Irving.
\newblock {\em The stable marriage problem: structure and algorithms}.
\newblock MIT Press, Cambridge, MA, USA, 1989.

\bibitem{HaKa09}
Rachid Hadid and Mehmet Karaata.
\newblock Stabilizing maximum matching in bipartite networks.
\newblock {\em Computing}, 84:121--138, 04 2009.
\newblock \href {https://doi.org/10.1007/s00607-009-0025-z} {\path{doi:10.1007/s00607-009-0025-z}}.

\bibitem{HSU199277}
Su-Chu Hsu and Shing-Tsaan Huang.
\newblock A self-stabilizing algorithm for maximal matching.
\newblock {\em Information Processing Letters}, 43(2):77--81, 1992.
\newblock \href {https://doi.org/10.1016/0020-0190(92)90015-N} {\path{doi:10.1016/0020-0190(92)90015-N}}.

\bibitem{HuChi06}
Chien-Chung Huang.
\newblock Cheating by men in the gale-shapley stable matching algorithm.
\newblock In {\em Algorithms -- ESA 2006}, pages 418--431. Springer Berlin Heidelberg, 2006.
\newblock \href {https://doi.org/10.1007/11841036_39} {\path{doi:10.1007/11841036_39}}.

\bibitem{Khanchandani2016DistributedSM}
Pankaj Khanchandani and Roger Wattenhofer.
\newblock {Distributed Stable Matching with Similar Preference Lists}.
\newblock In {\em 20th International Conference on Principles of Distributed Systems (OPODIS 2016)}, pages 12:1--12:16, 2017.
\newblock \href {https://doi.org/10.4230/LIPIcs.OPODIS.2016.12} {\path{doi:10.4230/LIPIcs.OPODIS.2016.12}}.

\bibitem{Kipnis}
Alex Kipnis and Boaz Patt-Shamir.
\newblock Brief announcement: a note on distributed stable matching.
\newblock In {\em Proceedings of the 28th ACM Symposium on Principles of Distributed Computing}, PODC '09, page 282–283, New York, NY, USA, 2009. Association for Computing Machinery.
\newblock \href {https://doi.org/10.1145/1582716.1582766} {\path{doi:10.1145/1582716.1582766}}.

\bibitem{LSP82}
Leslie Lamport, Robert Shostak, and Marshall Pease.
\newblock The byzantine generals problem.
\newblock {\em ACM Transactions on Programming Languages and Systems}, 4(3):382--401, 1982.
\newblock \href {https://doi.org/10.1145/357172.357176} {\path{doi:10.1145/357172.357176}}.

\bibitem{LaMa17}
Marie Laveau, George Manoussakis, Joffroy Beauquier, Thibault Bernard, Janna Burman, Johanne Cohen, and Laurence Pilard.
\newblock Self-stabilizing distributed stable marriage.
\newblock In Paul Spirakis and Philippas Tsigas, editors, {\em Stabilization, Safety, and Security of Distributed Systems}, pages 46--61, Cham, 2017. Springer International Publishing.

\bibitem{MaSi15}
Bruce~M. Maggs and Ramesh~K. Sitaraman.
\newblock Algorithmic nuggets in content delivery.
\newblock {\em SIGCOMM Comput. Commun. Rev.}, 45(3):52–66, July 2015.
\newblock \href {https://doi.org/10.1145/2805789.2805800} {\path{doi:10.1145/2805789.2805800}}.

\bibitem{MaTu18}
Tung Mai and Vijay~V. Vazirani.
\newblock {Finding Stable Matchings That Are Robust to Errors in the Input}.
\newblock In {\em 26th Annual European Symposium on Algorithms (ESA 2018)}, volume 112 of {\em Leibniz International Proceedings in Informatics (LIPIcs)}, pages 60:1--60:11, Dagstuhl, Germany, 2018. Schloss Dagstuhl -- Leibniz-Zentrum f{\"u}r Informatik.
\newblock \href {https://doi.org/10.4230/LIPIcs.ESA.2018.60} {\path{doi:10.4230/LIPIcs.ESA.2018.60}}.

\bibitem{MaMjPi07}
Fredrik Manne, Morten Mjelde, Laurence Pilard, and S{\'e}bastien Tixeuil.
\newblock A new self-stabilizing maximal matching algorithm.
\newblock In {\em Structural Information and Communication Complexity}, pages 96--108, Berlin, Heidelberg, 2007. Springer Berlin Heidelberg.
\newblock \href {https://doi.org/10.1016/j.tcs.2008.12.022} {\path{doi:10.1016/j.tcs.2008.12.022}}.

\bibitem{Ostrovsky}
Rafail Ostrovsky and Will Rosenbaum.
\newblock Fast distributed almost stable matchings.
\newblock In {\em Proceedings of the 2015 ACM Symposium on Principles of Distributed Computing}, PODC '15, page 101–108, New York, NY, USA, 2015. Association for Computing Machinery.
\newblock \href {https://doi.org/10.1145/2767386.2767424} {\path{doi:10.1145/2767386.2767424}}.

\bibitem{PaBeSa13}
Francesco Pantisano, Mehdi Bennis, Walid Saad, Stefan Valentin, and Mérouane Debbah.
\newblock Matching with externalities for context-aware user-cell association in small cell networks.
\newblock In {\em 2013 IEEE Global Communications Conference (GLOBECOM)}, pages 4483--4488, 2013.
\newblock \href {https://doi.org/10.1109/GLOCOMW.2013.6855657} {\path{doi:10.1109/GLOCOMW.2013.6855657}}.

\bibitem{Roth1982TheEO}
Alvin~E. Roth.
\newblock The economics of matching: Stability and incentives.
\newblock {\em Math. Oper. Res.}, 7:617--628, 1982.
\newblock \href {https://doi.org/10.1287/moor.7.4.617} {\path{doi:10.1287/moor.7.4.617}}.

\end{thebibliography}

\end{document}